\newcommand{\hide}[1]{}
\definecolor{Purple}{rgb}{0.5,0,0.5}
\newcommand{\nc}{\newcommand} 
\renewcommand{\iff}{{if\mbox{}f\ \mbox{}}}
\nc{\fig}[4]
{
  \begin{figure}[ht!]  
    \centering{\scalebox{#1}{\includegraphics*{./figures/#2.eps}}}
    \caption{#4}
    \label{fig:#3}
  \end{figure}
}
\nc{\nn}{\nonumber} 
\nc{\nit}{\noindent}
\nc{\marginnote}[1]{\marginpar{\tiny #1}}
\nc{\SH}{Schr\"odinger}
\nc{\NLS}{nonlinear Schr\"odinger}
\nc{\ie}{\emph{i.e.\ \mbox{}}}
\nc{\eg}{\emph{e.g.\ \mbox{}}}
\nc{\per}{{l_j}}
\nc{\perx}{q}
\nc{\vol}{{\rm Vol}}
\nc{\D}{\partial} 
\nc{\lapx}{\dfrac{\partial^2}{\partial x^2}}
\nc{\lapy}{\dfrac{\partial^2}{\partial y^2}}
\nc{\lapxy}{\dfrac{\partial^2}{\partial xy}}
\nc{\diff}[2]{\frac{d #1}{d #2}}
\nc{\diffn}[3]{\frac{d^{#3} #1}{d {#2}^{#3}}} 
\nc{\pdiff}[2]{\frac{\partial #1}{\partial #2}} 
\nc{\pdiffn}[3]{\frac{\partial^{#3} #1}{\partial{#2}^{#3}}} 
\def\Xint#1{\mathchoice
  {\XXint\displaystyle\textstyle{#1}}%
  {\XXint\textstyle\scriptstyle{#1}}%
  {\XXint\scriptstyle\scriptscriptstyle{#1}}%
  {\XXint\scriptscriptstyle\scriptscriptstyle{#1}}%
  \!\int}
\def\XXint#1#2#3{{\setbox0=\hbox{$#1{#2#3}{\int}$} \vcenter{\hbox{$#2#3$}}\kern-.5\wd0}}
\def\dashint{\Xint-}
\nc{\Av}{\dashint_\cell}
\nc{\avg}[1]{\mbox{$\left \langle\ \!#1\!\!\ \right \rangle$}}
\nc{\intRd}{\int_{{\mathbb R}^d}}
\nc{\abs}[1] {\lvert #1 \rvert} 
\nc{\norm}[2] {{\lVert #1 \rVert}_{#2}} 
\nc{\normH}[1]{\|#1\|_{H^1}^2}
\nc{\normL}[1]{\|#1\|_2^2}
\nc{\nl}[1]{|#1|^{2\sigma}}
\nc{\nlb}[1]{#1^{2\sigma+1}}
\nc{\Linvd}{L_\delta^{-1}}
\nc{\Linvz}{L_0^{-1}}
\nc{\Linvzs}{L_0^{-2}}
\DeclareMathOperator{\sech}{sech}
\nc{\st}[1]{\stackrel{(\ref{eq:#1})}{=}}
\nc{\stt}[2]{\stackrel{(\ref{eq:#1}),(\ref{eq:#2})}{=}}
\nc{\eps}{\epsilon}
\nc{\veps}{\epsilon}
\nc{\Ue}{U_\eps}
\nc{\koe}{\frac{k}{\veps}} 
\nc{\wrat}{w_{\rm ratio}}
\nc{\aij}{A^{ij}}                  
\nc{\minv}{m_*^{-1}} 
\nc{\sqminv}{m_*^{-\frac{1}{2}}}
\nc{\G}{\gamma_{\rm ef\,\!f}}
\nc{\zetap}{{\zeta_*}}
\nc{\zetas}{{\zeta_{1*}}}
\nc{\Pe}{P_{\rm edge}}
\nc{\slope}{\dfrac{d\cP[u(\cdot;\mu)] }{d\mu}}
\nc{\moot}{\mu_2}
\nc{\mum}{\mu_{\rm min}}
\def\R{\mathbb{R}}
\def\Z{\mathbb{Z}}
\def\N{\mathbb{N}}
\def\T{\mathbb{T}}
\nc{\bk}{{\bf k}}
\nc{\bq}{{\bf q}}
\nc{\br}{{\bf r}}
\nc{\bx}{{\bf x}}
\nc{\by}{{\bf y}}
\nc{\bz}{{\bf z}}
\nc{\bl}{{\bf l}}
\nc{\bnu}{{\bf \nu}}
\nc{\bxc}{{\bf x}_c}
\nc{\bxi}{{\mathbf \xi}}
\nc{\cP}{{\cal P}}
\nc{\cPedge}{{\cal P}_{edge}}
\nc{\cPc}{{\cal P}_{cr}} 
\nc{\cell}{{\cal B}}
\nc{\cDD}{\Lambda} 
\nc{\cF}{{\cal F}}
\nc{\cG}{{\cal G}}
\nc{\cO}{{\cal O}}  
\nc{\cQ}{{\cal Q}}  
\nc{\cR}{{\cal R}}
\nc{\cI}{{\cal I}}
\nc{\cK}{{\cal K}}
\nc{\cL}{{\cal L}} 
\nc{\cM}{{\cal M}}
\nc{\cN}{{\cal N}} 
\nc{\cE}{{\cal E}}
\nc{\cH}{{\cal H}} 
\nc{\cX}{{\cal X}}
\nc{\cZ}{{\cal Z}} 
\nc{\cT}{{\cal T}} 
\nc{\order}{{\cal O}}
\newtheorem{remark}{Remark}[section]
\title{Band-edge solitons, Nonlinear Schr\"odinger / Gross-Pitaevskii Equations and Effective Media}
\author{
B. Ilan\thanks{School of Natural Sciences, 
University of California, Merced, CA; \href{bilan@ucmerced.edu}{bilan@ucmerced.edu} } 
\and M.\ I. Weinstein\thanks{Department of Applied Physics and Applied  Mathematics,  
Columbia University, NY, NY;
\href{miw2103@columbia.edu}{miw2103@columbia.edu} }
 }
\begin{document} 

\maketitle

\begin{abstract}
We consider a class of nonlinear Schr\"odinger / Gross-Pitaevskii
(NLS/GP) equations with periodic potentials, having an even symmetry. 
We construct  ``solitons'', centered about any point of symmetry of
the potential. 
For focusing (attractive) nonlinearities,  these solutions bifurcate 
from the zero state at the lowest band edge frequency,  
into the semi-infinite spectral gap. Our results extend  to
bifurcations into finite spectral gaps, for focusing or defocusing (repulsive) 
nonlinearities under more restrictive hypotheses.

Soliton nonlinear bound states with frequencies near a band edge 
are well-approximated by a slowly decaying solution of a   
{\it homogenized NLS/GP equation}, with constant homogenized 
effective mass tensor and  effective nonlinear coupling coefficient, 
modulated by a Bloch state. 

For the critical NLS equation with a periodic potential, 
{\it e.g.} the cubic two dimensional NLS/GP with a periodic potential,  
our results imply:\\
$\bullet$\ The limiting soliton squared $L^2$ norm, as the 
spectral band edge frequency is approached, is equal to
 $\cP_{edge}=\zetap\times\cP_{cr}$, where  $\cP_{cr}$ denotes 
the minimal mass soliton of the translation invariant critical NLS. 
$\cP_{cr}$ is also known as the Townes critical power for self-focusing of optical beams. \\
$\bullet$\ The constant $\zetap$ 
  is  expressible in terms of the band edge Bloch eigenfunction 
and the determinant  of the {\it  effective mass} tensor.  
f the potential is non-constant, then $0<\zetap<1$ and $\cP_{edge}$  
is strictly less than, $\cP_{cr}$.\\  
The results are confirmed by numerical computation of bound states 
with frequencies near the spectral band edge.
  
Finally, these results have implications for the control of nonlinear waves using periodic structures. 
\end{abstract}

\begin{keywords}
  Multiple Scales,  Lyapunov-Schmidt reduction, nonlinear optics, Bose-Einstein condensates.
\end{keywords}

\begin{AMS} 
35B27; 
35B32; 
35B35; 
35B40. 
\end{AMS}



\section{Introduction and Outline}

"Solitons'' are spatially localized concentrations of energy, which are of great interest in many nonlinear wave systems.
 They arise from a balance of dispersion (or diffraction), which tends to spread energy and (focusing / attractive) nonlinearity which tends to concentrate energy. Although their importance was first recognized in the context of hydrodynamics \cite{Whitham,Ablowitz-Segur}, soliton-like coherent structures are now understood to play a central role  in contexts ranging from optical pulses (temporal solitons) to stationary beams (spatial solitons) of nonlinear optics 
 \cite{Boyd,Newell-Maloney} to soliton matter waves in macroscopic quantum systems \cite{PS:03}.
 Advances in the design of micro- or nano-structured media have greatly enabled the control of optical and  matter waves. 
Thus it is of interest to develop a fundamental understanding of the effect of inhomogeneities in a medium on the dynamics of nonlinear dispersive waves and, in particular, on the dynamics of solitons. See, for example,  \cite{Segev-etal:03} for an experimental investigation of solitons in periodic structures.
\\ 

In this article we consider solitons in non-homogeneous media  governed by a class of nonlinear Schr\"odinger / Gross-Pitaevskii (NLS / GP) equations:
\begin{equation}
  \label{eq:NLS-V} 
  i \D_t\psi\ = -\Delta \psi + V(\bx) \psi  - |\psi|^{2\sigma} \psi~.
\end{equation}
Here $\psi=\psi(\bx,t)$ denotes a complex-valued function of $(\bx,t)\in\R^d\times\R^1,\ d\ge1$.  
The potential $V(\bx)$ is assumed to be a real-valued, smooth, periodic and symmetric about one or more points
\footnote{ 
Periodic potentials are often called ``lattice'' potentials.}
\footnote{The main results of this paper extend to general nonlinearities  of the form 
\mbox{ $f(|\psi|^2)\psi = g\left[\ 1+\cO(|\psi|^2)\ \right]  |\psi|^{2\sigma}\psi$. } }

NLS / GP is a Hamiltonian system, expressible as:
\begin{align}
i\D_t\psi\ &=\ \frac{\delta\ \cH}{\delta\ \psi^*}\nn\\ 
\cH[\psi,\psi^*]\ &=\ \int \nabla\psi\cdot \nabla\psi^*\ +\ V(\bx)\psi\psi^*-\frac{1}{\sigma+1}\psi^{\sigma+1}\ (\psi^*)^{\sigma+1},
\label{ham}
\end{align}
where $\psi^*$ denotes the complex conjugate of $\psi$.
By Noether's Theorem, the invariance $t\mapsto t+t_0$ implies the  time-invariance of $\cH$ for solutions of NLS / GP. Furthermore, the invariance $\psi\mapsto e^{i\theta}\psi$ implies the additional time-invariant quantity:
\begin{equation}
\cP[\psi,\psi^*]\ =\ \int \psi\ \psi^*\ d\bx\ .
\label{power}
\end{equation}
The parameter $\sigma>0$ allows for variation of the strength of the  nonlinearity. In physical systems, we typically have $\sigma=1$. 
Allowing $\sigma$  to vary enables one to quantify the balance between nonlinear effects and dispersive / diffractive effects, which depend on spatial dimensionality, $d$. Local well-posedness in time for the initial value problem for \eqref{eq:NLS-V} with data 
\begin{equation}
\psi(\bx,t=0)\ =\ \psi_0(\bx)
\label{data}\end{equation}
in $\psi_0\in H^1(\R^d)$ (see, for example,  
\cite{Cazenave,SulemSulem})
holds for all $\sigma>0$ for $d=1,2$ and all 
$0<\sigma<2(d-2)^{-1}$ for  $d\ge3$. Global well-posedness for arbitrary data holds for $\sigma<2/d$. For well-posedness for data in spaces of weaker regularity, see \cite{Bourgain-NLS,Tao-NLS}.

The NLS/GP equation, ~(\ref{eq:NLS-V}), with $\sigma=1$ governs the  dynamics of 
 a macroscopic quantum state, a Bose-Einstein condensate, comprised of a large collection of interacting bosons in the mean-field limit \cite{PS:03,ESY:07,C-GFK:08}. The attractive nonlinear potential, $-|\psi|^2$, corresponds to a species of bosons, whose two-particle interactions have a negative scattering length. 
 A second important area of application of NLS/GP is its description of the evolution of the slowly varying envelope of a  stationary and nearly monochromatic laser beam propagating 
through a nonlinear medium \cite{Newell-Maloney,Boyd}. Here, the attractive nonlinear potential is due to the Kerr nonlinear effect; regions of higher electric field intensity have a higher refractive index. In this setting, $t$ denotes the distance along the direction of propagation and $\bx\in\R^2$ the transverse dimensions.
In the quantum mechanical setting the potential, $V(\bx)$,  is determined by  magnetic and optical effects 
that are used to confine a cloud of bosons. In 
 optics, the potential is determined by the spatial variations of the background {\it linear  refractive index} of the medium. 
The functional $\cP$ denotes the optical power or, in the quantum setting, the particle number.

Nonlinear bound states or {\it solitons}
of NLS/GP are solutions of standing wave type:
\begin{equation}
\psi(\bx,t)=e^{-i\mu t}u(\bx,\mu)
\label{sol}\end{equation}
where $\mu$ denotes the frequency (propagation constant in optics, 
chemical potential in quantum many-body theory)
and $u$ is a real-valued  solution of
\begin{align}
  \left(\ -\Delta_\bx\ + V(\bx)\ \right) u(\bx,\mu) \ -\ u^{2\sigma+1}(\bx,\mu) =  \mu\ u(\bx,\mu),\ \ u(\cdot,\mu) \in H^1(\R^d).\nn\\
  &\label{eq:u-nd}
\end{align}
 We shall construct solutions of \eqref{eq:u-nd} with $\mu$ located in a spectral gap of $-\Delta+V$.\\
 
 The properties of solitons in {\it homogeneous} media, $V\equiv0$,  
are reviewed  in detail in  Section~\S\ref{sec:background}. Briefly, for  $\sigma<2/d$ (subcritical nonlinearities) dynamically stable  solitons exist at any prescribed $L^2$ norm (in one  to one correspondence with any $\mu<0$).   In the  critical case and supercritical cases, $\sigma\ge 2/d$, solitons are unstable. \\
 
 We raise several motivating questions and outline our results
  in the next subsection. 
 \begin{enumerate}
 \item[(Q1)]\  {\bf Persistence and stability:}\ What is the effect of a periodic potential on the existence and stability properties of solitons?
 \item[(Q2)]\ {\bf Stabilization:}\ Can  unstable solitons be stabilized by a potential, $V(\bx)$? This question was first addressed in ~\cite{Rose-88} in the context of localized potentials and more recently for more general potentials~\cite{SFIW-08}, {\it e.g.} periodic, quasiperiodic. 
\item[(Q3)]\ {\bf Excitation thresholds / Minimal mass solitons:}\
{\it How does periodic structure effect soliton excitation  thresholds?}\ 
For critical nonlinearity, $\sigma=2/d$, and $V\equiv0$ the soliton
squared $L^2$ norm is independent of $\mu$; $\cP[u(\cdot,\mu)]=\cP_{cr}$.
Thus, {\it  there is an $L^2$ threshold below which there are no  solitons}. 
This $L^2$ {\it excitation threshold} for soliton formation is of
great physical
interest~\cite{Weinstein:99,FKM:97,CP:03,MRS:preprint,BCT:09}. 
In optics it corresponds to the critical {\it power} for self-focusing~\cite{Townes:64,Moll-Fibich-Gaeta:03}. Such solitons are also often called {\it minimal mass solitons}. See Remark \ref{rmk:sol-exc-conj} .
\end{enumerate}


\subsection{Outline of Results}
In order to outline the results of this paper, we begin with very quick review  of the spectral theory of  Schr\"odinger operators, $-\Delta +V$, for $V$ periodic \cite{RS4,Eastham:73,Kuchment-01}.  If $V(\bx)$ is a periodic potential, then the spectrum of $-\Delta+V$ is real, bounded below, tends to positive infinity, is absolutely continuous and consists of the union of closed intervals\ ( {\it spectral bands}). The open intervals separating the spectral bands are called {\it spectral gaps}. One dimensional Schr\"odinger operators with periodic potentials generically have infinitely many gaps. In dimensions $d\ge2$, there are at most finitely many gaps. 

We denote by $E_*$  lowest point in the spectrum, the left endpoint or edge of the first spectral band. $E_*$ is simple and  is the ground state (lowest) eigenvalue of $-\Delta+V$, subject to periodic boundary conditions on the basic period cell of $V$.
The eigenspace associated with $E_*$ is spanned by $w(\bx)$, a non-trivial solution of:
\begin{equation}
\left(-\Delta + V(\bx)\right) w(\bx) = E_* w(\bx),\ \ \ w(\bx)\ {\rm periodic}
\nn\end{equation}
  For the case   $V\equiv0$, $E_*=0$  and we can take $w(\bx)\equiv1$.\\

The present work considers the bifurcation and dynamic stability properties of families of solitons emerging from a spectral band edge. Such edge-bifurcating solitons have a multi-scale character described below.   
Our results include the following:\\ 
\begin{enumerate}
\item {\bf  Theorem \ref{theo:2scale}:}\ Let $\bx_0$ denote any point of symmetry of  $V(\bx)$
\footnote{{\it i.e.} $f(\bx)=f(x_1,\dots,x_d)$ is symmetric (about the origin) if 
 $f(x_1,\dots,x_d)\ =\ f(s_1x_1,\dots,s_d x_d),\ \ s_j=\pm1$.\ \ 
$\bx_0$ is a point of symmetry of $V(\bx)$
   if  $\tilde{V}(\bz)\equiv V(\bx_0+\bz)$ is symmetric. Thus, by translating coordinates, we can arrange for a point of symmetry to be at the origin.} .
   
\nit There is a family 
\begin{equation}
\mu\mapsto u(\bx,\mu\ ) \approx\
 \left(E_*-\mu\right)^{1\over2\sigma}\ F\left(\ \sqrt{E_*-\mu}\ (\bx-\bx_0)\ \right)\  w(\bx),
\nn\end{equation}
which bifurcates from the zero solution at energy $E_*$ into the semi-infinite gap $(-\infty,E_*)$ for $0<E_*-\mu$ sufficiently small.
Here, $\sigma\in\N$ for $d=1,2$ and $\sigma=1$ for $d=3$.

 $F(\by)$ denotes the soliton profile for an effective medium with 
effective mass tensor, $A^{ij}$ given by (\ref{eq:Aij})  and effective nonlinear coupling constant,
$\G$ given in (\ref{eq:geff}), and satisfies the \emph{homogenized soliton equation}:
 \begin{align}
&      -\sum_{i,j=1}^d\ \D_{y_i} \aij \D_{y_j} F(\by)\ - \G\ F^{2\sigma+1}(\by)\ =\ - F(\by)~.
      \label{eq:Feqn-intro}\\
 &     F>0,\ \ \ F\in H^1(\R^d)\nn
\end{align} 
The leading order expansion is constructed via multiple scale
expansion. The error term is studied by decomposition of the corrector
into spectral components near and far from the band edge, and
estimated via  a  Lyapunov-Schmidt strategy; see also
\cite{Busch-etal:06,DPS:09b,DU:09}.  The results can be extended to
solitons near edges of {\it finite} spectral bands for focusing and
defocusing nonlinear potentials under more restrictive hypotheses on
$V$; see Section~\S\ref{section:finite-gaps}. 
A variant of Theorem \ref{theo:2scale} holds in dimension one, in any spectral gap, near a ``positive curvature'' band edge; see Theorem
 \ref {theo:1d2scale}.
\item {\bf Corollary \ref{cor:edgepower-crit},\ part 1\ :\  }\  
Consider the critical cases:  $\sigma=1,\ d=2$ and $\sigma=2,\ d=1$. Near the band edge, {\it i.e.} for $E_*-\mu>0$ and small, we have:
 \begin{equation}
 \cP\left[u(\cdot,\mu)\right]\ =\ \zetap\ \cP_{cr}\ +\ (\mu-E_*) \ \zetas\ +\ \cO\left(\ (\mu-E_*)^2\ \right).
 \label{cP-expand}\end{equation}
  Here, 
 \begin{enumerate}
 \item $\cP_{cr}=\cP[R(\cdot,-1)]$, where $R(\cdot,-1)$ denotes the unique (up to translations)  solution of 
 \begin{equation}
 \Delta R\ -\ R\ + R^{\frac{4}{d}+1}\ =\ 0,\ \ R>0,\ \ R\in H^1\ ;
 \label{R-eqn}\end{equation}
 see  (Q3) above.
 \item 
 \begin{equation}
 \cP_{edge} \equiv\ \lim_{\mu\to E_*}\cP[u(\cdot,\mu)] = \zetap\ \cP_{cr}
\nn\end{equation}
 is given by (\ref{eq:zeta*}) and satisfies the inequality  $0<\zetap<1$, unless $V$ is identically constant. In the latter case, $\zetap=1$.
  \item 
  \begin{equation}
  \zetas\ \equiv\ \left. \frac{d}{d\mu}\right|_{\mu=E_*} \cP[u(\cdot,\mu)]
  \label{zeta1-def}
  \end{equation}
   is given by (\ref{eq:zeta1*}). \\
 For periodic potentials, $V$,  of the form $\delta\ \tilde{V}(x)$,
 where $\delta$ is sufficiently small, we show that   $\zetas>0$; 
 see \eqref{eq:zeta1_delta_d1}. \\
 {\bf Positive slope conjecture:}\ In general, $\zetas>0$. 
  \end{enumerate} 
  Both $\zetap$ and $\zetas$  depend on the edge (periodic) Bloch eigenstate and the 
 Hessian matrix (of $2^{\rm nd}$ partial derivatives) of the band dispersion function, $D^2E_1$,  near $E_*$. The latter is often called the inverse {\it effective mass} tensor.  
  \item {\bf Instability for $\mu$ near the band edge:}\  Consider the critical cases $\sigma=1,\ d=2$ and $\sigma=2,\ d=1$. For $V$ nonzero and $\mu$ near $E_*$, nonlinear bound states are linearly exponentially unstable, provided $\zetas>0$. We conjecture $\zetas>0$, in general, and have verified it for potentials $V=\delta \tilde{V}$, with $\delta$ sufficiently small.
  
  For $V\equiv0$, the linear instability is algebraic, although for the nonlinear dynamics, solutions can blow up in finite time or decay to zero dispersively (diffractively) as $t$ tends to infinity. In contrast, since for $V\ne 0$, for $\mu$ close enough to the spectral band edge, the curve $\mu\mapsto \cP[u(\cdot,\mu)]$ lies \underline{below} the line $\cP=\cP_{cr}$, although a solution with data near a  soliton, $u(\cdot,\mu)$, with frequency near the band edge does not remain nearby in $H^1$, the solution exists globally in time in $H^1$.
   \item {\bf Section~\S\ref{sec:numerics}:\ }\ 
  Numerical computations are used to illustrate the asymptotic results 
  and to study the global behavior.
\end{enumerate}

Figure \ref{fig:power_mu_1d_A1_v2} below summarizes a key consequence
of our results. 

\fig{0.65}{power_mu_1d_A1_v2}{power_mu_1d_A1_v2}{
 Plot of power curves: $\cP[u(\cdot,\mu)]$ vs. $\mu$ (using semi-log axis) for the quintic one-dimensional NLS/GP equation, \eqref{eq:bs1dcrit},  with $V_0=10$ and $K=2\pi$ (here $E_*\approx -1.23$). Solid (blue) curve corresponds to power curve for soliton family
centered at a local minimum. Dashed (red) curve corresponds to centering at a  local maximum. Agreement is shown between numerical computations and the analytically obtained value 
for the band edge power (dashed / black horizontal line),
$\cP_{edge}=\lim_{\mu\to E_*}\cP[u(\cdot,\mu)]=\zetap\times\cP_{cr}\approx 2.2$ [Eq.~(\ref{eq:power_edge})].  For $\mu$ large and negative  $\cP(\mu)$ converges to $\cP_{cr}\approx 2.72$ (dashed / green horizontal line), which is 
the critical power of the Townes soliton in translation 
 invariant ($V\equiv$ const) case. 
}

For any nontrivial periodic $V(x)$ the limiting $L^2$ norm at the band edge is {\bf strictly less} than that of the homogeneous medium. The slope of the curve, $\mu\to\cP\left[u(\cdot,\mu)\right]$ is strictly positive.  As  $-\mu=|\mu|$ increases, solitons become increasingly localized in space, and thus depend more and more on the local properties of $V$. The limiting ($|\mu|\to\infty$) squared $L^2$ norm is $\cP_{cr}$. The orbital stability theory, outlined in Section~\S\ref{sec:background} implies that solitons with energies $\mu$ near the band edge (where $\D_\mu\cP\left[u(\cdot,\mu)\right]>0$) are unstable,  while those which are centered and sufficiently concentrated  ($-\mu$ sufficiently large, $\D_\mu\cP\left[u(\cdot,\mu)\right]<0$) about a local minimum of $V$ are stable. It is natural to conjecture that for localized initial conditions with $L^2$ norm strictly less than $\inf_{\mu\le E_*}\cP\left[u(\cdot,\mu)\right]$, solutions to the initial value problem disperse to zero as $t\to\infty$; see the discussion in the proof of part 3 of Theorem \ref{theo:crit-dynamics} and \cite{Weinstein:89}.

\begin{remark}  
  Concerning the dependence of  
  $\mu\mapsto\cP[u(\cdot,\mu)]$ for $\mu$ near the band edge, on the nonlinearity parameter,
  $\sigma$, and dimensionality $d$ (see Theorem
  \ref{theo:edgepower} and Corollary \ref{cor:edgepower-crit}) 
  it  is useful to recall the analogous behavior in the translation invariant case: $V\equiv0$.  In this case, NLS is also invariant under dilation:
  \begin{equation}
    \psi(\bx,t) \ \mapsto \ \lambda^{\frac{1}{\sigma}}\psi(\lambda\bx,\lambda^2t)~.
    \label{dilation}
  \end{equation}
  Let $R(\cdot,\mu)$ denote the positive (unique up to translation), solution of 
  \begin{equation}
    -\Delta R - R^{2\sigma+1} = \mu R,\ \ \  
    \nn\end{equation}
  By uniqueness  
  \begin{equation}
    R(\bx,\mu) \ =\ |\mu|^{\frac{1}{2\sigma}}R(|\mu|^{\frac{1}{2}}\bx,-1) 
    \label{eq:R_scale}
  \end{equation}
  It follows that
  \begin{equation}
    \cP[R(\cdot,\mu)]=\|R(\cdot,\mu)\|^2_2=|\mu|^{\frac{1}{\sigma}-\frac{d}{2}}\ \|R(\cdot,-1)\|^2_2
    \nn\end{equation}
  implying that as $\mu\to E_*$, $\cP[R(\cdot,\mu)]$ tends 
  \begin{enumerate}
  \item[$\bullet$] to $0$, for $\sigma<2/d$
  \item[$\bullet$] to  $\|R(\cdot;-1)\|^2_2$, for $\sigma=2/d$, and 
  \item[$\bullet$] to $+\infty$, for $\sigma>2/d$;
  \end{enumerate}
  see figure~\ref{fig:bifurcate}.
  In one space dimension, the family of solitons is given explicitly by:
  \begin{equation}
    \label{eq:R_1d}
    R(x,\mu) = \left[ \ (\sigma+1)\ |\mu| \ \right]^{\frac{1}{2\sigma}}\sech^{\frac{1}{\sigma}}
    \left( \ \sigma \sqrt{|\mu|}\ x\ \right)~.
  \end{equation}
  In the critical case, $\sigma=2$,
  \begin{equation}
    \cP[R(\cdot,\mu)]  = \frac{\sqrt{3}}{2}\ \int_\R \sech(y)\ dy = \frac{\sqrt{3}}{2}\ \pi\ \sim\  2.7207\ \ ;
    \nn\end{equation}
  see figure \ref{fig:bifurcate}.
  Theorem \ref{theo:edgepower} implies a similar trichotomy of behaviors
  for states bifurcating from the band edge, $E_*$, of a  periodic
  potential. Also, for $\sigma=2/d$, the curves $\mu\mapsto
  \cP[u(\cdot,\mu)]$ in figure \ref{fig:power_mu_1d}, are seen to be
  deformations (for minimum and maximum centered solitons) of  the
  horizontal line $\mu\mapsto \cP_{cr}$ for the case $V\equiv0$.
\end{remark}

\fig{0.6}{bifurcate}{bifurcate}{ 
  (A) $L^\infty$ norm and (B) squared $L^2$ norm ($\cP$)
  as functions of frequency, $\mu$, 
  for the ground state solution of Eq.~(\ref{eq:u-nd}) in one dimension ($d=1$)
  with three nonlinear exponents: subcritical ($\sigma d=1<2$), critical ($\sigma d= 2$), and supercritical ($\sigma d=3>2$); see legend. In $L^\infty$, bifurcation appears from a state with zero norm at $E_*=0$.
   In $L^2$, the limiting behavior as $\mu\to 0^-$ depends on $\sigma d$.
}

\nit {\bf Previous work:} 
Formal expansions and numerical approximation of nonlinear bound
states near spectral band edges for periodic and aperiodic structures
and their linearized stability properties  were presented 
in~\cite{Steel-98,BKS:02,Yang-Musslimani-03,Pelinovsky-04,Brazhnyi-04,BMS:04,SM:04-I, SM:04-II,Cheng-05,
AISP:06, Brazhnyi-06, AKP:07,ShiYang-07,Staliunas-07}.
The band edge limit of $\cP$, for case of a 2-dimensional separable
potential was obtained by formal perturbation theory and numerically in \cite{Shi-Wang-Chen-Yang:08}.   
Two-scale convergence methods have been applied to  
rigorously derive homogenized effective equations, 
valid on large but finite time scales, in~\cite{Allaire-05}, for the
linear Schr\"odinger equation, and in~\cite{Sparber-06} for 
the time-dependent NLS/GP,  with two-scale type initial conditions.  
Bifurcation of localized states from the continuous spectrum 
into spectral gaps has been considered  
in~\cite{Kupper-Stuart:90,Kupper-Stuart:92,Heinz-Stuart:92a,Heinz-Stuart:92b,Alama-Li:92,Stuart,Pankov:05}. 
The connection with nonlinear coupled mode equations is explored 
in~\cite{Busch-etal:06,Pel-Sch:07,DPS:09b,DU:09}. 
The Lyapunov-Schmidt strategy applied herein is motivated by these latter approaches.
\\ \\
\nit{\bf Outline:}\  The paper is structured as follows. 
In Section~\S\ref{sec:background} we discuss  background 
for the formulation of our results. We state our main results in  
Section~\S\ref{sec:mainresults}. In Section~\S\ref{sec:homog-expansion} a formal
homogenization / two-scale expansion of solitons with frequencies near
the band edge is derived. The expansion and error estimates are proved
in Section~\S\ref{sec:error-est}. In Section~\S\ref{sec:edgepower} we
derive the consequences of our expansion of  band-edge solitons for
the character of $\cP[u(\cdot,\mu)]$ as $\mu\to E_*$.  
Section~\S\ref{sec:numerics} contains a discussion of
    numerical simulations illustrating our main theorems. 
Section~\S\ref{sec:summary} contains a short summary and discussion.  
The latter sections of the paper are appendices  containing technical results on the effective mass tensor.\\


\subsection{Notation}

\begin{enumerate}
\item We shall write $\eta(\eps)=\cO(\eps^\infty)$  if
$\eta(\eps)=\cO(\eps^q)$ for all $q\ge1$.
\item Fourier transform of $G$:\ \  $\hat{G}(\bk)\ =\ \int e^{-2\pi i\bk\cdot \bx}\ G(\bx)\ d\bx$
\item $\chi(a\le |\bk|\le b)\ =$ characteristic function of the set $\{\bk: a\le |\bk|\le b\}$
  \item $\chi(|\nabla_\by|\le a)\ G\ =\ \int\ e^{2\pi i\bk\cdot \by}\ \chi(|\bk|\le a)\ \hat{G}(\bk)\ d\bk$
  \item $H^s$, Sobolev space of order $s$; $H^s_{even}$ space of even
 $H^s$ functions
 \begin{equation}
 \| f\|^2_{H^s}\ =\ \sum_{|\alpha|\le s}\ \|\D^\alpha f\|_{L^2}^2\ \sim
  \| \hat f\|_{L^{2,s}}^2
  \nn\end{equation}
 \item $H^s_{sym}$, symmetric $H^s$  functions,\\ {\it i.e.} $f\in H^s_{sym}$ if $f\in H^s$ and 
 $f(x_1,\dots,x_d)\ =\ f(s_1x_1,\dots,s_d x_d),\ s_j=\pm1$.
\item $\| f \|^2_{L^{2,s}(D)}\ =\  \int_D\ |f(z)|^2\  (1+|z|^2)^s\ dz$
\item $C^m_\downarrow(\R^d)$, functions in $C^m(\R^d)$ with limit equal to zero as $|\bx|\to\infty$
 \item $\cell$ denotes the fundamental period cell and $\cell^*$ the dual fundamental cell, or Brillouin zone.
\end{enumerate}


\section{Background}\label{sec:background}

\subsection{Solitons and Stability Theory}
\label{sec:stability}

We give a very brief review of the stability theory of 
solitons of NLS/GP, ~(\ref{eq:NLS-V}).
\begin{definition}
\label{def:stable} 
The nonlinear bound state $u(\bx,\mu)$ of  NLS/GP 
is orbitally stable if for all $\varepsilon>0$, there is a $\delta>0$ such that  if the initial condition $\psi_0$  satisfies 
\begin{equation}
 \inf_{\gamma\in [0,2\pi)}  
\|\psi_0 - u(\cdot,\mu) e^{i\gamma} \|_{H^1} < \delta,
\nn\end{equation} then the
corresponding solution, $\psi(\cdot,t)$,  satisfies
\begin{equation}
\inf_{\gamma\in [0,2\pi)}
\|\psi(\cdot,t) - u(\cdot,\mu)e^{i\gamma} \|_{H^1} < \varepsilon,\ 
 {\rm for\ all\ }\ t\ne0. 
\nn\end{equation} 
\end{definition}
\nit This notion of soliton stability for NLS is natural since NLS/GP, for $V$ non-constant is invariant under the group of phase translations, $\psi\mapsto e^{i\theta}\psi$, but not spatial translations.\\ 

A central role in the stability theory is played by the operator
\begin{equation}
    \label{eq:L+}
    L_+ \equiv - \Delta - \mu + V - (2\sigma+1)u^{2\sigma},
  \end{equation}
the real part of the linearization of NLS/GP about $u(\cdot,\mu)$.  
  Let $n_-(L_+)$ denote the number of negative eigenvalues of $L_+$.
  If $u(\bx,\mu)$ is a nonlinear bound state with $\mu<E_*$ 
  (frequency lying in the semi-infinite gap) then 
$n_-(L_+)<\infty$ and the following 
  nonlinear stability theorem holds 
  ~\cite{Weinstein-86,Rose-88,Weinstein:89,GSS:87,SFIW-08}

\begin{theorem}
  \label{theo:stability} 
 \begin{enumerate}
 \item Let $u(\bx,\mu)$ denote a positive soliton 
  solution of NLS/GP with  $\mu$ 
  in the semi-infinite gap $(-\infty,E_*)$.
   The nonlinear bound state,  $\psi(\bx,t) = u(\bx,\mu)e^{- i \mu t}$
   is  orbitally  stable  if 
   the following two conditions hold:
  \subitem (a)
   {\bf Slope} (VK) {\bf condition}:
    \begin{equation}
      \nonumber
      \frac{d}{d\mu}\ \cP[u(\cdot,\mu)]< 0,\ \ \ {\rm and}
    \end{equation}
    \subitem (b)
   {\bf Spectral condition}: $L_+$ has no zero eigenvalues
    and
    \begin{equation}
      \label{eq:spectral}
      n_-(L_+) = 1.
    \end{equation}
\item  If either $\D_\mu\ \cP[u(\cdot,\mu)]> 0$ or $n_-(L_+) \geq 2$ then the 
  soliton is unstable (nonlinearly unstable as well as linearly exponentially unstable).
    \end{enumerate}
\end{theorem}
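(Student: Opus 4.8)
This is the orbital stability / instability dichotomy of Grillakis--Shatah--Strauss \cite{GSS:87}, specialized to the one-parameter family $\mu\mapsto u(\cdot,\mu)$ solving \eqref{eq:u-nd} and to the single ($U(1)$ phase) symmetry of \eqref{eq:NLS-V} when $V$ is non-constant. I would organize the argument around the conserved Lyapunov functional
\begin{equation}
\Lambda_\mu[\psi]\ \equiv\ \cH[\psi,\psi^*]\ +\ \mu\,\cP[\psi,\psi^*],
\nn\end{equation}
observing that the bound state equation \eqref{eq:u-nd} is exactly $\Lambda_\mu'[u]=0$, so $u(\cdot,\mu)$ is a critical point of a quantity conserved along the NLS/GP flow. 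Since $\mu<E_*$, the essential spectra of $L_+$ (see \eqref{eq:L+}) and of $L_-\equiv-\Delta+V-\mu-u^{2\sigma}$ both coincide with $[E_*-\mu,\infty)\subset(0,\infty)$, so each operator has only finitely many eigenvalues below its essential spectrum; in particular $n_-(L_+)<\infty$, and since $\langle L_+u,u\rangle=-2\sigma\int u^{2\sigma+2}<0$ one always has $n_-(L_+)\ge1$.

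I would then unpack the second variation: writing a perturbation as $v_1+iv_2$ with $v_1,v_2$ real, one has $\langle\Lambda_\mu''[u](v_1+iv_2),\,v_1+iv_2\rangle=\langle L_+v_1,v_1\rangle+\langle L_-v_2,v_2\rangle$. The operator $L_-$ annihilates $u$ (by \eqref{eq:u-nd} again), and since $u>0$ is its ground state, $L_-\ge0$ with $\ker L_-=\mathrm{span}\{u\}$; hence $\langle L_-v_2,v_2\rangle$ is coercive on $\{v_2:\langle v_2,u\rangle=0\}$, the orthogonal complement of the phase-orbit tangent $iu$. For $L_+$ I would differentiate \eqref{eq:u-nd} in $\mu$ to get $L_+\,\D_\mu u=u$, so (when $\ker L_+=\{0\}$, as in (b)) $\D_\mu u=L_+^{-1}u$ and
\begin{equation}
\langle L_+^{-1}u,\,u\rangle\ =\ \langle\D_\mu u,\,u\rangle\ =\ \tfrac12\,\frac{d}{d\mu}\,\cP[u(\cdot,\mu)].
\nn\end{equation}
The key step is the Weinstein / Vakhitov--Kolokolov lemma \cite{Weinstein-86}: if $n_-(L_+)=1$, $\ker L_+=\{0\}$ and $\langle L_+^{-1}u,u\rangle<0$, then $\langle L_+v_1,v_1\rangle$ is positive and coercive on the codimension-one subspace $\{v_1:\langle v_1,u\rangle=0\}$; its proof minimizes $\langle L_+v,v\rangle$ over $\{\|v\|=1,\ \langle v,u\rangle=0\}$, uses the Euler--Lagrange relation $L_+v=\alpha v+\beta u$, and exploits the sign of $\langle L_+^{-1}u,u\rangle$ to force $\alpha\ge0$ --- the unique negative direction of $L_+$ is ``used up'' by the constraint. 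Under (a)--(b) the two estimates combine to give $\langle\Lambda_\mu''[u]w,w\rangle\gtrsim\|w\|_{H^1}^2$ for $w$ whose real part is $\perp u$ and whose imaginary part is $\perp u$: coercivity of the Hessian transverse to the phase orbit and the single constraint direction.

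To promote this to nonlinear orbital stability I would run the standard modulation / Lyapunov argument: given $\psi(t)$ with data $\eps$-close to $u(\cdot,\mu)$, choose $\gamma(t)$ minimizing $\|\psi(\cdot,t)-e^{i\gamma}u(\cdot,\mu)\|_{H^1}$ so that $w(t)\equiv e^{-i\gamma(t)}\psi(\cdot,t)-u$ is orthogonal to $iu$, and correct by an $\cO(\|w\|^2)$ multiple of $u$ so that conservation of $\cP$, $\cP[\psi(t)]=\cP[\psi_0]=\cP[u]+\cO(\eps)$, forces the residual real part of $w$ to be essentially $\perp u$. Taylor expansion of the conserved $\Lambda_\mu$ about its critical point, together with the coercivity bound, then yields $\|w(t)\|_{H^1}^2\lesssim\Lambda_\mu[\psi(t)]-\Lambda_\mu[u]+\cO(\|w\|^3)=\Lambda_\mu[\psi_0]-\Lambda_\mu[u]+\cO(\eps^3)=\cO(\eps^2)$ uniformly in $t$, which is precisely orbital stability in the sense of Definition~\ref{def:stable}. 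For part~2, the same Hessian analysis shows that if $\D_\mu\cP[u(\cdot,\mu)]>0$ (so $\langle L_+^{-1}u,u\rangle>0$) or $n_-(L_+)\ge2$, then $\langle L_+\cdot,\cdot\rangle$ remains negative on a subspace of dimension $\ge1$ after imposing the single scalar constraint from $\cP$; equivalently the GSS index $d(\mu)=n_-(L_+)-p$, with $p=1$ if the slope is negative and $p=0$ if positive, is odd. One then invokes the GSS instability mechanism: either an ``escape'' functional $\langle A(\psi-u),\psi-u\rangle$ with $A$ drawn from the unstable subspace, whose time derivative is bounded away from zero inside a small tube (nonlinear instability), or directly an eigenvalue $\lambda>0$ of the linearization $JL$, $J$ the symplectic matrix and $L=\mathrm{diag}(L_+,L_-)$, produced from the signature of $L$ on $\mathrm{ran}(JL)$ (linear exponential instability, then upgraded to nonlinear instability by Duhamel/Gronwall).

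\emph{Main obstacle.} The genuinely delicate points are: (i) the Weinstein lemma --- converting the sign of the scalar $\langle L_+^{-1}u,u\rangle=\tfrac12\,\D_\mu\cP$ into coercivity of $L_+$ on the constrained subspace, which is exactly where the slope (VK) condition is used; and (ii) passing from Hessian coercivity to genuine orbital stability, where one must control the modulation parameter $\gamma(t)$ and verify that the correction along $u$ enforcing $\cP=\mathrm{const}$ is truly of higher order, so that $\Lambda_\mu$ dominates $\|w\|_{H^1}^2$ along the actual nonlinear constraint set rather than only on its tangent space. The instability half reuses the same spectral count and then rests on $d(\mu)$ being odd.
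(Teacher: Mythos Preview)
Your sketch is essentially the standard Weinstein / GSS argument and is correct in outline. Note, however, that the paper does \emph{not} prove Theorem~\ref{theo:stability}: it is stated as a known background result and attributed to \cite{Weinstein-86,Rose-88,Weinstein:89,GSS:87,SFIW-08}, so there is no ``paper's own proof'' to compare against. Your proposal accurately reconstructs the approach of those references --- Lyapunov functional $\cH+\mu\cP$, block diagonalization into $L_\pm$, the constrained coercivity lemma via the sign of $\langle L_+^{-1}u,u\rangle=\tfrac12\D_\mu\cP$, and the GSS index count for instability --- which is exactly what the citations point to.
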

\begin{remark}
As discussed in \cite{SFIW-08},  the spectral condition can be  associated 
with the suppression of a {\it drift instability} and  the slope condition with the suppression of an energy-concentrating {\it  self-focusing} instability. 
\end{remark}


\subsection{Spectral theory for periodic potentials}\label{sec:FBtheory}

We consider the Schr\"odinger operator $-\Delta+V(\bx)$ 
acting in $L^2(\R^d)$, where $V(\bx)$ is smooth, real-valued potential which is 
periodic. That is,  $V(\bx+\bq)=V(\bx)$ for all $\bx\in\R^d$.  Here $\bq=\{\bq_1,\dots,\bq_d\}$ 
denotes a linearly independent set of vectors in $\R^d$ that
spans (over the integers) a lattice denoted by $\Gamma$.
The set 
\begin{equation}
\cell=\left\{\sum_{j=1}^d\ v_j\bq_j:v_j\in \left[-\frac{1}{2},\frac{1}{2}\right] 
\right\}
\label{fund-cell}
\end{equation}
is called a fundamental period cell.
The first Brillouin zone $\cell^*$ is generated 
by the dual basis $(\br_1,\dots,\br_d)$ given by 
$\br_j\cdot\bq_k= 2\pi\ \delta_{jk}$, \ie
$$
\cell^*=\left\{\sum_{j=1}^d\ v_j\br_j:v_j\in \left[-\frac{1}{2},\frac{1}{2}\right] \right\}
$$
and the dual lattice, $\Gamma^*$, is the integer span of $\cell^*$.

It is useful to review some well-known results of Floquet-Bloch theory~\cite{Kittel,RS4,Eastham:73,Kuchment-01}. 
The spectrum of $-\Delta+V$, denoted $\sigma (-\Delta + V)$, consists of a union of closed intervals called spectral {\it bands}
separated by {\it gaps} (also known as band gaps and photonic band gaps).
The spectral bands are characterized as follows. 

For each $\bk\in\cell^*$ we seek solutions of the linear eigenvalue problem
\begin{equation}
\left(\ -\Delta + V(\bx)\ \right)\ u =\ E\ u
\label{eq:linear-evp}
\end{equation}
of the form 
$ u(\bx;\bk)=e^{i\bk\cdot\bx} p(\bx;\bk)$, where $p(\bx;\bk)$ is
periodic in $\bx$: 
\begin{eqnarray*}
  \left[\ -\left(\nabla+i\bk\right)^2\  
    +\ V(\bx)\ \right]p(\bx;\bk)\ &=& \ E(\bk)p(\bx;\bk)~,
  \\
  p(\bx+\bq_j;\bk)\ &=& \ p(\bx;\bk),\ \ j=1,\dots,d~.  
\end{eqnarray*}
For each $\bk\in\cell^*$ this periodic elliptic boundary value problem 
has a sequence of discrete eigenvalues or  {\it band dispersion functions}
tending to positive infinity:\\ 
$E_1(\bk)\le E_2(\bk)\le\dots\le E_m(\bk)\le\dots$. 
As $\bk$ varies over the Brillouin zone $\cell^*$
each $E_m(\bk)$ sweeps out a closed subinterval of the real axis. 
The spectrum of $-\Delta+V$ acting on $L^2(\R^d)$ is the union of these
subintervals:
\begin{equation}
\sigma(-\Delta + V)\ =\ \cup_{m\ge1}\ \{E_m(\bk):\bk\in\cell^*\}\ \subset [\min_{\cell}V,\infty) 
\nn\end{equation}
and the states $\{u_m(\bx;\bk)=e^{i\bx\cdot\bk}p_m(\bx;\bk)\}$ are complete in the sense that 
\begin{equation}
f\in L^2(\R^d)\ \implies\ f(\bx)=\sum_{m\ge1}\int_{\cell^*}\ \langle u_m(\cdot;\bk),f(\cdot)\rangle_{L^2(\R^d)}\ u_m(\bx;\bk)\ d\bk
\label{completeness}
\end{equation}

We denote the lowest point in the spectrum of Eq.~(\ref{eq:linear-evp})
 and corresponding periodic eigenstate by
\begin{equation}
E_*=E_1(0),\ \ \ \ w(\bx)=p(\bx;\bk=0).
\nn\end{equation}
 $E_*$ is simple. 
   We will often make use of the relation
  \begin{equation}
    L_*w\ =\ 0~, \quad w>0~,\ \ w(\bx+\bq_j)=w(\bq_j)~,
    \label{eq:wdef}
  \end{equation}
  where 
\begin{equation}
L_*\ \equiv\ -\Delta + V-E_*\ \ .
\label{L*def}
\end{equation}
Thus, $w$ is the periodic ground state of $L_*$, $L_*\ge 0$ and $0$ is a simple eigenvalue of $L_*\,$ with kernel spanned by $w$. 
 Note that if $P^\perp$ is the orthogonal projection onto the subspace $\{w\}^\perp$, 
\begin{equation}
P^\perp\ g\ =\ g\ -\ \left\langle \frac{w}{\|w\|}, g\ \right\rangle\ \frac{w}{\|w\|},\ \ \ \langle f,g\rangle\ =\ \int_\cell \overline{f(\bx)}\ g(\bx)\ d\bx.\label{Pperp}\end{equation}
then $L_*^{-1}\ P^\perp$ is bounded 
on the space of $L^2$ periodic functions with fundamental period cell $\cell$.

Finally, note that we may, without loss of generality, restrict to the case where the fundamental period cell is $[-\pi,\pi]^d$. Indeed, if 
 ${\cal B}$ is the fundamental period cell (see \eqref{fund-cell}),  then define the constant matrix $Q$ to be the matrix whose $j^{th}$ column is $(2\pi)^{-1}\bq_j$. Then, under the change of coordinates  $\bx\mapsto\bz=Q\bx$, we have
\begin{align}
&-\nabla_\bx\cdot \nabla_\bx\ +\ V(\bx)\ {\rm acting\ on}\ L^2_{per}(\cell)\ \ {\rm transforms\ to}\nn\\
& -\nabla_\bz\cdot  \alpha\ \nabla_\bz
 + \tilde{V}(\bz) \equiv\ -\ \sum_{i,j=1}^d \alpha_{ij}\ \frac{\D^2}{\D z_i\D z_j}\ +\ \tilde{V}(\bz)\nn\\
& {\rm acting\ on}\ L^2_{per}\left([-\pi,\pi]^d\right)\ 
 {\rm where}\nn\\ 
 &\alpha= \frac{ Q Q^T }{|\det{Q}|},\ \ \tilde{V}(\bz)=V\left(Q^{-1}\bz\right),\  \ \bx=Q^{-1}\bz\ .
 \nn\end{align}


 \section{Main Results}\label{sec:mainresults}

In this section we state our main results on bifurcation of solitons
from the band edge, $E=E_*$, into the semi-infinite gap. \\

\nit{\bf Hypotheses:}\\
(H1) Potential:\ $V(\bx)$ is smooth and periodic with 
$\cell=[-\pi,\pi]^d$ . 
\\
(H2)  Dimension / Nonlinearity
\footnote{The assumption on the nonlinear term can be made less restrictive. However, since to some of our results concerning the higher order character of $\mu\mapsto\cP[u(\cdot,\mu)]$ depends on the construction of a multiple scale expansion to a sufficiently high order, we require a certain degree of smoothness of the nonlinear term in a neighborhood of zero. Note also that the methods and our results extend easily to more general nonlinearities, {\it e.g.}  ${\cal K}[|u|^2]u$ (local or nonlocal).}\ :
\begin{equation}
d=1,2: \sigma\in\N,\ \ \ \ \ d=3: \sigma =1
\nn\end{equation}

\begin{theorem}
  \label{theo:2scale}
  Let $\bx_0$ denote any point of symmetry of $V(\bx)$.
  \begin{enumerate}
  \item For all $\mu$ less than and sufficiently near $E_*$, there is a family of  
      nonlinear bound states of NLS/GP (``solitons''), $u(\cdot,\mu)\in H^s(\R^d),\ s>d/2$, which is centered at $\bx_0$.
    \item These solutions 
    bifurcate from the zero solution at band edge frequency 
    $\mu=E_*$ into the semi-infinite gap. Specifically, 
    this family is given by the two-scale expansion for small $\eps$,
    \begin{eqnarray}
      \label{eq:mu-eps}
      \mu_\eps &=& E_*-\eps^2, \\*[2mm]
      \label{eq:u-eps}
      u_\eps(\bx,\mu_\eps) &=& \eps^{\frac{1}{\sigma}}
      \left[ w(\bx)F\left( \eps(\bx-\bx_0) \right)\ +\ 
      \eps U_1\left(\bx,\eps(\bx-\bx_0)\right)\ 
        \right.
        \\ && \left.\ \ \ \   +\ \eps^2  U_2\left(\bx,\eps(\bx-\bx_0)\right)  \ +\ \eta(\bx;\eps)  \right]~,
      \nn
    \end{eqnarray}
    where $\eta(\bx;\eps)$ satisfies the estimate 
    for any $s>d/2$
    \begin{equation}
    \|\ \eta(\cdot;\eps)\ \|_{H^s}\ \le\ C_s\ \eps^3
    \label{eta-est}\end{equation}
  The terms in the expansion are given as follows:\\
    $w(\bx)$ is the band edge Bloch state [see Eq.~(\ref{eq:wdef})] 
    and $F(\by)$ is the ground state solution of the NLS equation in an effective medium:
    \begin{align}
  &    -\sum_{i,j=1}^d \D_{y_i} \aij \D_{y_j} F(\by)\ - \G\ F^{2\sigma+1}(\by)\ =\ - F(\by)~.
      \label{eq:Feqn}\\
 &     F>0,\ \ \ F\in H^1(\R^d)\nn
    \end{align}
 The matrix $\aij$ 
    is the \emph{inverse effective mass tensor} \cite{Kittel}, 
  expressible  
  in terms of the band dispersion function, $E_1(\bk)$, as
    \begin{equation}
      A^{ij} \ \equiv \ \delta_{ij} - \frac{4\ \langle \D_{x_j}w, L_*^{-1}\D_{x_i}w 
          \rangle}{\langle w,w\rangle}~\ =\  \frac{1}{2}\frac{\D^2E_1}{\D k_i\D k_j}(\bk=0);
      \label{eq:Aij} 
    \end{equation} 
    see Appendix \ref{ap:E''-nd}.
   The \emph{effective nonlinear coupling constant}
is given by
    \begin{equation}
      \G =  \frac{\int_{\cell} w^{2\sigma+2}(\bx)\ d\bx}{\int_{\cell} w^2(\bx)\ d\bx}~.
      \label{eq:geff}
    \end{equation}
    $\aij$ is a symmetric, positive definite constant matrix  
    and its determinant, \emph{the product of inverse effective masses}, denoted by 
    \begin{equation}
      \frac{1}{m_*} = \det (\ \aij \ )\ \le\ 1,
      \label{eq:m*}
    \end{equation} 
    with $m_*=1$ only if $V(\bx)$ is identically constant; see 
    \cite{Kirsch-Simon:87} and Appendix~\ref{ap:D(0)}.
  \item 
    $F(\by)$ is a \emph{rescaled ground state} of the NLS equation (\ref{R-eqn}) as
    \begin{equation}
      F(\by)\ =\ \left(\frac{1}{\G}\right)^{\frac{1}{2\sigma}}\ 
      R(\cDD^{-\frac{1}{2}}S\by,-1)
      \label{eq:FR}
    \end{equation}
    where $S$ is an orthogonal matrix that diagonalizes the effective
    mass tensor, \ie 
    \begin{equation}
      S_{ik}\ A^{kl}\ S_{lj} \ = \cDD_{ij} \ \equiv \  {\rm diag}(\lambda_1,\dots,\lambda_d)~,
      \label{eq:ScDD-def}
    \end{equation}
    where $\lambda_i$ denote the eigenvalues of $\aij\ $.

  \item  
    Combining (\ref{eq:mu-eps}), (\ref{eq:u-eps}) and~(\ref{eq:FR})
    gives, for $E_*-\mu>0$ and small:
    \begin{align}
      u(\bx,\mu)\ &=
      \left( \frac{\Delta\mu}{\G}\right)^{\frac{1}{2\sigma}} 
      \left[\  R\left(\cDD^{-\frac{1}{2}}S\ (\Delta\mu)^{\frac{1}{2}}\ (\bx-\bx_0),-1\right)\ 
        w(\bx)\ +\ \cO(\Delta\mu)^{\frac{1}{2}}\ \right]\nn\\
        \Delta\mu\ &=\ E_*-\mu\label{eq:uDmu}
    \end{align}

  \item The  $\cO(\eps)=\cO\left(\sqrt{\Delta\mu}\right)$ 
    and $\cO(\eps^2)=\cO\left(\Delta\mu\right)$ corrections 
    are given,\\ (using $\by\ =\ \eps\left(\bx-\bx_0\right)$ and summation over repeated indices) by
    \begin{align}
      \cO(\eps):\ \ \ \ \ \  U_1(\bx,\by) &= 2L_*^{-1}\left[ \D_{x_i} w(\bx)\right]\ \D_{y_i}F(\by)~,
      \label{eq:U1}\\
      &\nn\\
      \cO(\eps^2):\ \ \ \ \ \ U_2(\bx;\by) &=    U_{2p}(\bx,\by)+w(\bx)F_{2h}(\by)\nn\\
      U_{2p}(\bx,\by)\ &=\ L_*^{-1}\left[\ \left(\ \delta_{ij} +
          4\D_{x_j}L_*^{-1}\D_{x_i} -\ \aij \right)
        w(\bx)\ \right] \D_{y_i}\D_{y_j}F(\by) \nn \\   
      &\ \ \     \ + \ L_*^{-1}\
      \left[ w^{2\sigma+1}(\bx) -\G\ w(\bx) \right]\ F^{2\sigma+1}(\by),\nn\\
      L_+^A\ F_{2h}(\by)\ &=\ S(\by)~,
    \end{align}
    where  $S(\by)$ is given by
    \begin{align}
      S(\by)\ &=\ \langle w,w\rangle^{-1}
      \left[ \langle w,(\Delta_\by-1)U_{2p}(\cdot,\by)+(2\sigma+1)U_0^{2\sigma}U_{2p}(\cdot,\by)\ \rangle\ \right.\nn\\
      &\left. \ \ \ +\ \ \sigma(2\sigma+1)\ \langle w, U_0^{2\sigma-1}U_1^2(\cdot,\by)\rangle\ +\ 2\langle w,\nabla_\bx\cdot\nabla_\by \tilde{U_3}(\cdot,\by)\rangle\right]\label{eq:Sofydef-2}
    \end{align}
    
  \end{enumerate}
\end{theorem}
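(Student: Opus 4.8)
The plan is to construct the family $u(\cdot,\mu)$ by a multiple-scale ansatz followed by a Lyapunov--Schmidt reduction that controls the corrector $\eta$. First I would substitute the two-scale ansatz $u_\eps = \eps^{1/\sigma}\bigl[w(\bx)F(\by) + \eps U_1 + \eps^2 U_2 + \eta\bigr]$, with $\by = \eps(\bx-\bx_0)$ and $\mu_\eps = E_* - \eps^2$, into \eqref{eq:u-nd}, treating $\bx$ and $\by$ as independent variables so that $\nabla \mapsto \nabla_\bx + \eps\nabla_\by$. Collecting powers of $\eps$ produces a hierarchy: at $\cO(\eps^0)$ one gets $L_* (wF) = 0$, automatically satisfied since $L_* w = 0$; at $\cO(\eps^1)$ the equation for $U_1$ is $L_* U_1 = 2(\nabla_\bx \cdot \nabla_\by)(wF) = 2\D_{x_i} w\, \D_{y_i} F$, whose right-hand side is orthogonal to $w$ in $L^2(\cell)$ for each fixed $\by$ (the integral of $\D_{x_i} w \cdot w = \tfrac12 \D_{x_i} w^2$ over the period cell vanishes), so $U_1 = 2 L_*^{-1}[\D_{x_i} w]\,\D_{y_i} F$ as in \eqref{eq:U1}; at $\cO(\eps^2)$ one gets $L_* U_2 = (\text{terms involving } U_1, w^{2\sigma+1}, \nabla_\by^2 F)$, and imposing solvability (Fredholm alternative: right-hand side $\perp w$) is precisely where the homogenized equation \eqref{eq:Feqn} for $F$ arises, with $\aij$ and $\G$ emerging as the averaged coefficients in \eqref{eq:Aij}, \eqref{eq:geff}. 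The particular part $U_{2p}$ is then obtained by applying $L_*^{-1} P^\perp$, and the homogeneous part $w F_{2h}$ is fixed at the next order, where $F_{2h}$ solves the linearized homogenized equation $L_+^A F_{2h} = S(\by)$ with $L_+^A$ the linearization of \eqref{eq:Feqn} about $F$ and $S$ given by \eqref{eq:Sofydef-2}.

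Next, having fixed the leading terms so that the residual is formally $\cO(\eps^3)$, I would set up the equation for $\eta$. Writing $u_\eps$ into \eqref{eq:u-nd} gives an equation of the schematic form $\cL_\eps \eta = \eps^{-?}(\text{residual}) + N(\eta)$ where $\cL_\eps = -\Delta - \mu_\eps + V - (2\sigma+1) U_0^{2\sigma}\eps^2(\cdots)$ acts near the band edge and $N$ collects the genuinely nonlinear remainder, which is at least quadratic in $\eta$ and carries a small prefactor. The key structural point is that $\cL_\eps$, for $\eps$ small, is a small perturbation of $L_*$ in the fast variable tensored with a Schr\"odinger-type operator in the slow variable; its approximate kernel is spanned by modulations of $w(\bx)F(\by)$ coming from the near-band-edge spectrum. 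I would decompose $\eta$ via a frequency cutoff $\chi(|\nabla_\by| \le \delta_0)$ into a ``near'' component, living in the approximate kernel / slow-varying part, and a ``far'' component on which $\cL_\eps$ is boundedly invertible uniformly in $\eps$ (since the far spectrum of $-\Delta+V$ is bounded away from $E_*$). On the far part one solves by a contraction mapping in $H^s$, $s>d/2$ (so that $H^s$ is an algebra, controlling the nonlinearity); on the near part one is left with a finite-dimensional (or slow-variable) bifurcation equation which is solved using the nondegeneracy of the ground state $F$ of \eqref{eq:Feqn} --- i.e.\ that the kernel of $L_+^A$ restricted to the symmetric subspace is trivial (here the centering at a symmetry point $\bx_0$ and restriction to $H^s_{sym}$ is used to kill the translation zero modes $\D_{y_i} F$). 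Matching the two pieces and iterating yields a solution $\eta$ with $\|\eta\|_{H^s} \le C_s \eps^3$, which is \eqref{eta-est}.

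Parts 3, 4 and 5 of the theorem are then essentially bookkeeping on top of the construction. For part 3, the identity $F = \G^{-1/2\sigma} R(\cDD^{-1/2} S \by, -1)$ follows by diagonalizing $\aij$ via the orthogonal $S$ of \eqref{eq:ScDD-def}, rescaling each coordinate by $\lambda_i^{1/2}$, and rescaling amplitude by $\G^{-1/2\sigma}$ to reduce \eqref{eq:Feqn} to the canonical equation \eqref{R-eqn} for $R$; uniqueness of $R$ (Kwong's theorem, cited in the background) makes this well-defined. Part 4 is just re-expressing \eqref{eq:mu-eps}--\eqref{eq:u-eps} in terms of $\Delta\mu = E_*-\mu = \eps^2$ and substituting \eqref{eq:FR}, collecting $U_1, U_2, \eta$ into the $\cO(\sqrt{\Delta\mu})$ error (valid in $H^s\hookrightarrow L^\infty$). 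Part 5 records the explicit formulas \eqref{eq:U1}--\eqref{eq:Sofydef-2} for $U_1$ and $U_2$ already derived in the solvability analysis; the only thing to verify is that the right-hand sides defining $U_{2p}$ lie in the range of $L_*$ on the period cell (orthogonality to $w$), which is exactly the content of the $\cO(\eps^2)$ solvability condition that produced \eqref{eq:Feqn} in the first place, and that $S(\by)$ in \eqref{eq:Sofydef-2} lies in the range of $L_+^A$ on $H^s_{sym}$, which is the $\cO(\eps^3)$ solvability condition.

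The main obstacle I anticipate is the uniform (in $\eps$) invertibility of the linearized operator $\cL_\eps$ on the complement of its approximate kernel, together with the near/far decomposition: one must show that the off-diagonal coupling between the ``near band edge'' slow-mode sector and the ``far'' sector induced by the periodic potential and the nonlinear term is small enough that a Schur-complement / Lyapunov--Schmidt argument closes, with all constants independent of $\eps$. This requires careful commutator estimates between the frequency cutoff $\chi(|\nabla_\by|\le \delta_0)$ in the slow variable and multiplication operators in the fast variable, and a quantitative spectral gap estimate for $-\Delta+V$ away from $E_*$ (uniform lower bound on $L_*^{-1} P^\perp$). Everything else --- the hierarchy of ansatz equations, the reduction to the homogenized equation, the algebra estimates on the nonlinearity --- is comparatively routine once this linear analysis is in place.
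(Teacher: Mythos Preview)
Your proposal is correct and follows essentially the same approach as the paper: formal multi-scale hierarchy to determine $U_0,\dots,U_4$ and the homogenized equation for $F$, followed by a Lyapunov--Schmidt reduction (near/far the band edge) to control the corrector, with invertibility of $L_+^A$ on the symmetric subspace closing the argument. The only notable refinement in the paper is that the near/far split is implemented concretely via the Gelfand--Bloch transform $\cT$ with an $\eps$-dependent cutoff $|k|\le \eps^r$ ($0<r<1$) in quasi-momentum---which after rescaling $\kappa=k/\eps$ becomes $\chi(|\nabla_\by|\le \eps^{r-1})$ rather than a fixed $\delta_0$---and the paper constructs $U_3,U_4$ explicitly (not just $U_0,U_1,U_2$) so that the raw residual feeding into the corrector equation is already $\cO(\eps^2)$ after the leading $\eps^3$ is factored out.
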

Theorem \ref{theo:2scale} is proved in Section~\S\ref{sec:edgepower}.\bigskip

\nit Using expansion~(\ref{eq:mu-eps}) we can derive the
asymptotic behavior for $\cP(\mu)=\cP[u(\cdot,\mu)]$ as $\mu\to E_*$.
\begin{theorem} 
  \label{theo:edgepower}
  \nit Let $\bx_0$ denote a point of symmetry of $V$ and $u(\cdot,\mu)$ a soliton given in Theorem \ref{theo:2scale}.
  \begin{enumerate}
  \item
    For  $\mu$ near the band edge $\cP[u(\cdot,\mu)]$ is given by
    \begin{align}
      & \cP[u(\cdot,\mu)]\nn\\
      &\ =\ \left|\ \mu- E_*\ \right|^{\frac{1}{\sigma}-\frac{d}{2}}\nn\\ 
      &\ \ \ \ \times\ \ \      \left[\ \zetap\  \cP[R(\cdot\ ,-1)] +\ \zetas (\mu-E_*)\ +\ \cO\left(\ (\mu-E_*)^2\ \right)\ \right]
      \label{eq:power_edge}
    \end{align}
    where $\cP[R(\cdot\ ,-1)]$, the optical power of the homogeneous NLS ground state, depends on $\sigma$ and $d$:
  \begin{equation}
  -\Delta R-R^{2\sigma+1}=-R,\ \ \ R>0,\ \ \ \ R\in H^1
  \nn\end{equation}
 and 
    \begin{align}
      \label{eq:zeta*}
      & 0\ <\ \zetap \equiv\ \left(\frac{1}{m_*}\right)^{1\over2}\ 
       \left(\frac{\left( \Av w^2 \right)^{\sigma+1}}
        {\Av w^{2\sigma+2}}\right)^{\frac{1}{\sigma}} 
      \ \le\ 1,\\
      &{\rm where}\ \ \dashint_\cell g  =\ 
      \frac{1}{{\rm vol}(\cell)}\ \int\ g(\bx)\ d\bx, 
      \ \ {\rm and \ the\ slope\ is\ given\ by}\nn\\
      &  \zetas\equiv\ 4\  \sum_{j=1}^d\ \Av \left| L_*^{-1}\left[\D_{x_j}w(\bx)\right]\right|^2\ d\bx\cdot \int_{\R^d}\ \left|\ \D_{y_j}F(\by)\ \right|^2\ d\by\nn\\
      &-\ \dashint_\cell w^2(\bx)\ d\bx\ \int \left(\frac{1}{\sigma}F(\by)+\by\cdot\nabla_\by F(\by)\right)\ S(\by)\ d\by,
      \label{eq:zeta1*}   \end{align}
    where $S(\by)$ is given by Eq.~(\ref{eq:Sofydef-2}).
    Note that to order $\cO(|E_*-\mu|^1)$ the expansion is independent of $\bx_0$, the soliton centering.

 \item {\bf Positive slope for small potentials:}\  Let $V(x)=\delta V_1(x)$, where $|\delta|$ is sufficiently
 small and $V_1(x)$ is a smooth periodic function on $\R$
 with a zero cell average.
Then, in the critical case $\sigma=2$
\begin{eqnarray}
  \label{eq:zeta_delta_d1}
  \zetap &\sim& 1 - 8 \delta^2\,  \Av  \left[  (-\D_{xx})^{-1} V_1 \right]^2 \,dx~, \\
  \label{eq:zeta1_delta_d1} 
  \zetas &\sim& 34\sqrt{3}\ \pi  \delta^2 \Av \left[  (-\D_{xxx})^{-1} V_1 \right]^2  \,dx ~.
\end{eqnarray}
Here, $(-\D_{xx})^{-1}$ and $(-\D_{xxx})^{-1}$ are respectively the second and
third-order integration operators in  $\cell$ acting on the space of
zero average functions to itself.  
Hence, $\zetas>0$ for small potentials.

\item {\bf Positive slope conjecture:}\  $\zetas[V]>0$ if $V$ is  non-constant.

\end{enumerate}
\end{theorem}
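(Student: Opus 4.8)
The statement has three parts: part~1 is an $\eps$-expansion of $\cP$ built on the two-scale ansatz of Theorem~\ref{theo:2scale} (which I may assume); part~2 is a further expansion in the strength of the potential; part~3, the conjecture, is the genuinely hard one, and I indicate below the reduction I would attempt and where it stalls.

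\emph{Part~1.} Substitute \eqref{eq:u-eps}, with $\mu_\eps=E_*-\eps^2$, into $\cP[u]=\|u\|_{L^2}^2$ and expand in $\eps$. With $\by=\eps(\bx-\bx_0)$, each monomial of $\bigl(w(\bx)F(\by)+\eps U_1+\eps^2 U_2+\eta\bigr)^2$ integrated over $\R^d$ is a product of a $\cell$-periodic function of $\bx$ and a slowly varying function of $\by$; averaging the periodic factor over $\cell$ and rescaling $\bx\mapsto\by$ (Jacobian $\eps^{-d}$) gives
\begin{equation}
\cP[u_\eps]\ =\ \eps^{\frac{2}{\sigma}-d}\Bigl[\ \Av w^2\cdot\|F\|_{L^2}^2\ +\ \eps^2\,c_2\ +\ \cO(\eps^3)\ \Bigr],
\nn\end{equation}
where the $\cO(\eps)$ coefficient vanishes because $\int_{\R^d}F\,\D_{y_i}F\,d\by=0$, and $\eta$ is absorbed by \eqref{eta-est}. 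Using \eqref{eq:FR}, the change of variables $\bz=\cDD^{-1/2}S\by$ (Jacobian $(\minv)^{1/2}$), $\det(\aij)=\minv$, and $\G=\Av w^{2\sigma+2}/\Av w^2$ (cf.\ \eqref{eq:geff}), one finds $\Av w^2\cdot\|F\|_{L^2}^2=\zetap\,\cP[R(\cdot,-1)]$ with $\zetap$ as in \eqref{eq:zeta*}; the bound $0<\zetap\le1$ is $\minv\le1$ (\cite{Kirsch-Simon:87}, \eqref{eq:m*}) together with Jensen's inequality $\Av(w^2)^{\sigma+1}\ge(\Av w^2)^{\sigma+1}$, with equality only if $V$ is constant. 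The $\eps^2$-coefficient $c_2$, assembled from $\|U_1\|^2$ and $2\langle wF,U_2\rangle$ and simplified by the homogenized equation \eqref{eq:Feqn}, is $\zetas$ of \eqref{eq:zeta1*}; putting $\eps^2=E_*-\mu$ gives \eqref{eq:power_edge}.

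\emph{Part~2.} Take $V=\delta V_1$, $\Av V_1=0$, and treat $\delta$ as a second small parameter. At $\delta=0$ one has $E_*=0$, $w\equiv1$, $L_*=-\D_{xx}$, $\aij=\delta_{ij}$, $\G=1$, and $F=R(\cdot,-1)$ the one-dimensional critical ($\sigma=2$) ground state, while every piece in \eqref{eq:zeta*}--\eqref{eq:zeta1*} carrying a factor $\D_x w$, $w^{2\sigma+1}-\G w$, or $S$ vanishes identically; hence $\zetap=1$ and $\zetas=0$ at $\delta=0$, consistent with the flat power curve $\cP\equiv\cP_{cr}$ of the translation-invariant critical NLS. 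Expand $w=1+\delta w^{(1)}+\cdots$ from $L_*w=0$ (so $w^{(1)}$ is the zero-mean solution of $-\D_{xx}w^{(1)}=-V_1$), together with the induced expansions of $\aij$, $F$, $S$, and collect the $\cO(\delta^2)$ terms; since $(-\D_{xx})^{-1}$ on zero-mean periodic functions is the normalized second antiderivative on $\cell$, these organize into $\Av[(-\D_{xx})^{-1}V_1]^2$ and $\Av[(-\D_{xxx})^{-1}V_1]^2$ with the coefficients of \eqref{eq:zeta_delta_d1}, \eqref{eq:zeta1_delta_d1}. As $34\sqrt{3}\,\pi>0$ and $V_1\not\equiv0$, this gives $\zetas>0$ for $|\delta|$ small.

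\emph{Part~3.} I would first make the subtracted term of \eqref{eq:zeta1*} transparent. Differentiating \eqref{eq:Feqn} under the rescaling $F\mapsto\lambda^{1/\sigma}F(\lambda\,\cdot)$ at $\lambda=1$ gives $L_+^A\bigl(\sigma^{-1}F+\by\cdot\nabla_\by F\bigr)=-2F$, where $L_+^A\equiv-\sum_{i,j}\D_{y_i}\aij\D_{y_j}-(2\sigma+1)\G F^{2\sigma}+1$ is the effective linearized operator appearing in $L_+^A F_{2h}=S$. Since $S$ and $F_{2h}$ are even in $\by$ while $\ker L_+^A={\rm span}\{\D_{y_i}F\}$ is odd, self-adjointness of $L_+^A$ on the even sector gives $\int(\sigma^{-1}F+\by\cdot\nabla_\by F)\,S\,d\by=-2\langle F,F_{2h}\rangle$, so \eqref{eq:zeta1*} collapses to
\begin{equation}
\zetas\ =\ 4\sum_{j=1}^d\ \Av\bigl|L_*^{-1}\D_{x_j}w\bigr|^2\cdot\int_{\R^d}\bigl|\D_{y_j}F\bigr|^2\,d\by\ +\ 2\,\Av w^2\cdot\langle F,F_{2h}\rangle .
\nn\end{equation}
The first term is manifestly nonnegative, so it suffices to bound $\langle F,F_{2h}\rangle=\langle F,(L_+^A)^{-1}S\rangle$ from below. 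The plan is then to insert the explicit $S$ of \eqref{eq:Sofydef-2}, move $(L_+^A)^{-1}$ onto $F$ via the companion identity $(L_+^A)^{-1}F=-\tfrac{1}{2}(\sigma^{-1}F+\by\cdot\nabla_\by F)$ (valid on $\{\D_{y_i}F\}^{\perp}$, which contains the even sector), integrate by parts using \eqref{eq:Feqn} and the Pohozaev/virial identities for $F$, and recombine everything with the first term into a sum of manifestly nonnegative quadratic forms in $F,\nabla_\by F,w$ and $L_*^{-1}$.

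\textbf{Main obstacle.} The quantity $\langle F,(L_+^A)^{-1}S\rangle$ has no a priori sign: $L_+^A$ has one negative direction, and $S$ of \eqref{eq:Sofydef-2} is built from $U_{2p}$, which contains two \emph{nested} applications of $L_*^{-1}$ to the non-sign-definite periodic data $(\delta_{ij}+4\D_{x_j}L_*^{-1}\D_{x_i}-\aij)w$ and $w^{2\sigma+1}-\G w$, so no termwise estimate can work. The small-potential case shows how delicate the sign is: when $V\equiv0$ the critical power curve is flat ($\cP\equiv\cP_{cr}$) --- equivalently the Vakhitov--Kolokolov quantity $\langle R,(L_+)^{-1}R\rangle=-\tfrac{1}{2}\bigl(\sigma^{-1}-\tfrac{d}{2}\bigr)\|R\|_{L^2}^2$ vanishes at criticality $\sigma=2/d$ --- so $\zetas[0]=0$; the $\cO(\delta)$ term also vanishes (zero cell-average of $V_1$ and symmetry), and $\zetas$ is therefore genuinely $\cO(\delta^2)$, its positivity resting on the nontrivial second-order cancellation quantified by \eqref{eq:zeta_delta_d1}--\eqref{eq:zeta1_delta_d1}. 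Establishing the analogous inequality for an arbitrary non-constant $V$ appears to require either an unexpected algebraic identity exhibiting $\zetas$ as a sum of squares, or a genuinely global/variational characterization of the band-edge bifurcation; having neither, we state part~3 as a conjecture, supported perturbatively and corroborated numerically in Section~\S\ref{sec:numerics}.
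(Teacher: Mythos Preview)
Your approach matches the paper's. For Part~1 the paper packages your averaging step as an explicit lemma (Lemma~\ref{lem:asymp}, giving $\cO(\eps^\infty)$ error rather than merely $\cO(\eps)$), and in computing the $\eps^2$-coefficient it splits $U_2=U_{2p}+wF_{2h}$ and uses that the cross-term $2\langle wF,U_{2p}\rangle$ vanishes because $U_{2p}$ lies in the range of $L_*^{-1}P^\perp$ and is therefore orthogonal to $w$; you fold this into ``simplified by the homogenized equation,'' which is slightly too quick but the content is the same, and the $F_{2h}$ contribution is then evaluated exactly via the identity $(L_+^A)^{-1}F=-\tfrac12(\sigma^{-1}F+\by\cdot\nabla_\by F)$ you invoke in Part~3. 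Part~2 is carried out in the paper precisely as you outline (Appendix~\ref{sec:smallpotentials}); Part~3 is simply stated as a conjecture, so your reduction and the obstruction you identify go beyond what the paper attempts.
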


Theorem \ref{theo:edgepower} is proved in Section~\S\ref{sec:edgepower}, 
except for part 2, concerning small potentials, which is proved in Appendix 
\ref{sec:smallpotentials}.
\begin{remark}
  Concerning equality in Eq.~\eqref{eq:zeta*}. When $V(\bx)$ is constant
 then so is $w(\bx)$. In that case $E_*=0$, $E_1(\bk)=\bk^2$, 
  and $m_*^{-1}= \det\left\{ 2^{-1}\ D^2_{k_i,k_j}E_1(0)\right\}=1$.
  Therefore $\zetap=1$.
\end{remark}

\begin{remark} 
  That $\zetap\le1$ can be seen by considering
  each factor in the definition \eqref{eq:zeta*}
 separately. First, by H\"older's inequality 
  the quotient in the second factor of (\ref{eq:zeta*}) is bounded one
  with equality holding \iff $w\equiv constant$. 
  Furthermore, $w$ is identically constant if and only if $V\equiv constant$.
  Concerning the first factor in (\ref{eq:zeta*}),  
  by Theorem~\ref{theo:2scale},  
  $0<m_*^{-1}\le 1$ with equality holding if $V\equiv constant$.
  Therefore, $ 0<\zetap\le 1$
  with $\zetap=1$ if and only if  $V\equiv $  constant.
\end{remark}


\nit In the critical case, an immediate consequence of Theorem 
\ref{theo:edgepower}
is the following result for critical nonlinearity ($\sigma=2/d$): 
\begin{corollary}\label{cor:edgepower-crit} Consider the critical case $\sigma=2/d$; by hypotheses (H1)-(H2) this implies either $(d,\sigma)=(1,2)$
 or $(d,\sigma)=(2,1)$.
\begin{enumerate}
   \item 
    As $ \mu\to E_*$ we have
     \begin{equation}
       \label{eq:power_edge-crit}
       \cP[u(\cdot,\mu)]\ =\
       \zetap\ \cPc\ +\ \zetas\ (\mu-E_*)\ +\ \cO\left(\ (\mu-E_*)^2\ \right)\ .
     \end{equation}
     Here, $\cP_{cr}=\cP[R(\cdot,-1)]$.\ \ Since $\zetap < 1$ for any non-constant periodic potential, it follows that
     the limiting power at the band edge is strictly smaller than $\cPc$,
     \begin{equation}
    \cP_{edge} \equiv   \lim_{\mu\to E_*}  \cP[u(\cdot,\mu)]\ =\zetap\ \cPc<\ \cPc\ \ .
       \label{eq:edge-vs-Townes}
     \end{equation}
             \end{enumerate}
  \end{corollary}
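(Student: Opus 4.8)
The plan is to derive Corollary~\ref{cor:edgepower-crit} directly from Theorem~\ref{theo:edgepower} by specializing the exponent. The key observation is that in the critical case $\sigma=2/d$ we have $\frac{1}{\sigma}-\frac{d}{2}=0$, so the prefactor $|\mu-E_*|^{\frac{1}{\sigma}-\frac{d}{2}}$ appearing in~\eqref{eq:power_edge} is identically $1$. Thus \eqref{eq:power_edge} collapses to \eqref{eq:power_edge-crit} verbatim, with $\cP[R(\cdot,-1)]=\cPc$ by the definition in (Q3) and Corollary~\ref{cor:edgepower-crit}(a). No new estimate is needed here; this step is purely algebraic.

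Next I would extract the limiting statement. Taking $\mu\to E_*$ in~\eqref{eq:power_edge-crit}, the terms $\zetas(\mu-E_*)$ and $\cO((\mu-E_*)^2)$ vanish, leaving $\cP_{edge}=\lim_{\mu\to E_*}\cP[u(\cdot,\mu)]=\zetap\,\cPc$. The strict inequality $\cP_{edge}<\cPc$ then follows from the bound $0<\zetap\le 1$ established in Theorem~\ref{theo:2scale} (part 2, via $m_*^{-1}\le 1$ with equality iff $V$ constant) and in Theorem~\ref{theo:edgepower} (via the H\"older inequality argument in the second Remark after Theorem~\ref{theo:edgepower}), together with the hypothesis that $V$ is non-constant, which forces $\zetap<1$ strictly. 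I would cite the two Remarks following Theorem~\ref{theo:edgepower} for the equality-case analysis: the quotient $(\Av w^2)^{\sigma+1}/\Av w^{2\sigma+2}<1$ unless $w\equiv\text{const}$, and $w\equiv\text{const}$ iff $V\equiv\text{const}$; likewise $m_*=1$ iff $V\equiv\text{const}$ (appealing to \cite{Kirsch-Simon:87} as cited in Theorem~\ref{theo:2scale}).

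Since Corollary~\ref{cor:edgepower-crit} is stated as "an immediate consequence of Theorem~\ref{theo:edgepower}," I would not expect any genuine obstacle: the entire content is the observation that the scaling exponent $\frac{1}{\sigma}-\frac{d}{2}$ vanishes precisely in the critical case, plus the already-established sign information on $\zetap$. If one wanted to be careful, the only point deserving a sentence is that the $\cO((\mu-E_*)^2)$ remainder is uniform enough that the limit $\mu\to E_*$ may be taken termwise — but this is guaranteed by the error estimate~\eqref{eta-est} in Theorem~\ref{theo:2scale}, which underlies the expansion~\eqref{eq:power_edge}. So the proof is essentially: set $\frac1\sigma-\frac d2=0$ in~\eqref{eq:power_edge}; invoke $0<\zetap<1$ for non-constant $V$; pass to the limit.
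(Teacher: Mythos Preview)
Your proposal is correct and matches the paper's approach: the text explicitly introduces Corollary~\ref{cor:edgepower-crit} as ``an immediate consequence of Theorem~\ref{theo:edgepower},'' and the proof in Section~\ref{sec:edgepower} establishes the general expansion~\eqref{eq:power_edge} from which the critical case follows by setting $\frac{1}{\sigma}-\frac{d}{2}=0$, exactly as you describe. The strict inequality $\zetap<1$ for non-constant $V$ is handled precisely via the two Remarks you cite.
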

  
   \nit  Theorem \ref{theo:edgepower} 
is proved in Section~\S\ref{sec:edgepower}. 
  The band-edge limiting behavior \eqref{eq:edge-vs-Townes} is illustrated in Figure \ref{fig:power_mu_1d_A1_v2}; see also  Figure \ref{fig:power_mu_1d}. 
 
 Concerning the NLS / GP dynamics near solitons, we have the following:
 
\begin{theorem}\label{theo:crit-dynamics}
  Consider the critical case $\sigma=2/d$; by hypotheses (H1)-(H2) this implies either $(d,\sigma)=(1,2)$
 or $(d,\sigma)=(2,1)$. 
 Then, if the positive slope conjecture of Theorem \ref{theo:edgepower} 
 holds, then 
\begin{enumerate}
\item
 \begin{equation}
         \label{eq:edgeslope}
         \left.\ \slope\ \right|_{\mu=E_*} \ > \ 0
       \end{equation}
 and it follows from Theorem \ref{theo:stability} that for $\mu$ such that $E_*-\mu>0$ and sufficiently small, $u(\cdot,\mu)$ is unstable.
 \item In particular, for small periodic potentials, by Theorem \ref{theo:edgepower},  for $\mu$ such that $E_*-\mu>0$ and sufficiently small, $u(\cdot,\mu)$ is unstable.
   \end{enumerate}
\end{theorem}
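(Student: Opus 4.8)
The plan is to combine the sharp asymptotic expansion of $\cP[u(\cdot,\mu)]$ near the band edge (Theorem~\ref{theo:edgepower}) with the abstract orbital stability / instability dichotomy (Theorem~\ref{theo:stability}). The first step is to recall that for $\sigma = 2/d$ the exponent $\frac{1}{\sigma} - \frac{d}{2}$ vanishes, so that \eqref{eq:power_edge} collapses to the clean expansion \eqref{eq:power_edge-crit} of Corollary~\ref{cor:edgepower-crit}:
\begin{equation}
\cP[u(\cdot,\mu)] \ =\ \zetap\ \cPc\ +\ \zetas\ (\mu - E_*)\ +\ \cO\left(\ (\mu - E_*)^2\ \right),
\nn\end{equation}
so that $\left.\slope\right|_{\mu = E_*} = \zetas$. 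Granting the positive slope conjecture ($\zetas = \zetas[V] > 0$ for non-constant $V$), we immediately obtain \eqref{eq:edgeslope}; and by continuity of $\mu \mapsto \slope$ near $E_*$ (which follows from the smoothness of the expansion), there is $\delta_0 > 0$ such that $\slope > 0$ for all $\mu$ with $0 < E_* - \mu < \delta_0$. This establishes the strict positivity of the slope on a left-neighborhood of the band edge. Part~2 is then the special case $V = \delta V_1$: here one does not need the conjecture at all, since Theorem~\ref{theo:edgepower} part~2 supplies the explicit formula \eqref{eq:zeta1_delta_d1}, $\zetas \sim 34\sqrt{3}\,\pi\,\delta^2\,\Av\left[(-\D_{xxx})^{-1}V_1\right]^2\,dx > 0$, valid for $|\delta|$ small (and $V_1$ non-constant with zero mean), so the same continuity argument applies.

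The second step is to invoke part~2 of Theorem~\ref{theo:stability} to conclude instability. That theorem states that if $\D_\mu \cP[u(\cdot,\mu)] > 0$ \emph{or} $n_-(L_+) \ge 2$, then $u(\cdot,\mu)$ is nonlinearly (and linearly exponentially) unstable; since we have just shown $\D_\mu\cP[u(\cdot,\mu)] > 0$ for $\mu$ in a punctured left-neighborhood of $E_*$, the instability follows directly — we never even need to examine $n_-(L_+)$. The one technical point to check is that the solitons $u(\cdot,\mu)$ produced by Theorem~\ref{theo:2scale} do satisfy the hypotheses under which Theorem~\ref{theo:stability} is stated: they are positive ($F > 0$ and $w > 0$ force $u > 0$ for $\mu$ close enough to $E_*$, using \eqref{eq:uDmu}), they lie in $H^1(\R^d)$, and their frequency $\mu$ lies in the semi-infinite gap $(-\infty, E_*)$ since $\mu = E_* - \eps^2 < E_*$; hence $n_-(L_+) < \infty$ and Theorem~\ref{theo:stability} applies verbatim.

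The main (and essentially only) obstacle is that the instability for general non-constant $V$ is \emph{conditional} on the positive slope conjecture, which the paper does not prove — this is exactly why the statement of Theorem~\ref{theo:crit-dynamics} reads ``if the positive slope conjecture \dots holds, then\dots''. So there is nothing to resolve there; the proof is genuinely a two-line deduction once Theorems~\ref{theo:edgepower} and~\ref{theo:stability} are in hand. For part~2, the unconditional statement, the real content has already been pushed into Appendix~\ref{sec:smallpotentials} (the derivation of \eqref{eq:zeta1_delta_d1}), so again the deduction here is immediate. In short, I would write: (i) specialize \eqref{eq:power_edge} to $\sigma d = 2$ to get $\slope|_{E_*} = \zetas$; (ii) invoke positivity of $\zetas$ (conjecturally in general, unconditionally for $V = \delta V_1$ via Theorem~\ref{theo:edgepower} part~2) plus continuity of the slope to get $\slope > 0$ on $\{\mu : 0 < E_* - \mu \ll 1\}$; (iii) verify the solitons meet the hypotheses of Theorem~\ref{theo:stability}; (iv) apply part~2 of that theorem to conclude instability.
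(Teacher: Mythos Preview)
Your proposal is correct and follows the same approach as the paper: the paper's own proof is the single sentence ``Part 1 follows from part 2 of Theorem~\ref{theo:stability},'' so your argument is simply a more detailed unpacking of that invocation (making explicit that $\left.\slope\right|_{\mu=E_*}=\zetas$ via the critical-case expansion, and checking the hypotheses of Theorem~\ref{theo:stability}).
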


To complement this information about stability / instability of solitons we remark on $\cP_{cr}$ and $\cP_{edge}$ as they relate
to well-posedness and blow-up / collapse. 
\begin{theorem}\label{thm:globexistV}
 Denote by $R(\bx)$,  the  ground state (``Townes soliton'') for $V(\bx)\equiv 0$. 
 If 
\begin{equation}
    \label{eq:P<Pc}
    \cP[\psi_0] = \int |\psi_0(\bx)|^2\ d\bx <  \int R^2(\bx)\,d\bx\ \equiv\ \cP_{cr}
  \end{equation}
then solutions of NLS/GP \eqref{eq:NLS-V} exist globally in time; no singularity formation / no collapse.  
\end{theorem}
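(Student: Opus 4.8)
The plan is to exploit the sharp Gagliardo-Nirenberg-Sobolev inequality together with conservation of the Hamiltonian $\cH$ and of the mass $\cP$. The key point is that the nonlinear term in $\cH$ is, in absolute value, controlled by a constant multiple of $\|\nabla\psi\|_{L^2}^2$ with constant depending on $\cP[\psi]$, and this constant is strictly less than one precisely when $\cP[\psi_0]<\cP_{cr}$. The potential term $\int V|\psi|^2$ contributes a lower-order (and, up to an additive constant times $\cP$, harmless) perturbation that does not disturb this mechanism.

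First I would recall the optimal constant in the Gagliardo-Nirenberg inequality in the $L^2$-critical case $\sigma=2/d$: for all $f\in H^1(\R^d)$,
\begin{equation}
\int_{\R^d} |f|^{\frac{4}{d}+2}\ d\bx\ \le\ \frac{\frac{2}{d}+1}{\cP_{cr}^{2/d}}\ \left(\int_{\R^d}|f|^2\ d\bx\right)^{2/d}\ \int_{\R^d}|\nabla f|^2\ d\bx,
\nn\end{equation}
with $\cP_{cr}=\|R\|_2^2$ and $R$ the ground state of \eqref{R-eqn}; this is the Weinstein inequality, with the extremizer being the Townes soliton $R$. Applying this to $\psi(\cdot,t)$ and using mass conservation $\cP[\psi(\cdot,t)]=\cP[\psi_0]$, one obtains for the nonlinear part of the Hamiltonian
\begin{equation}
\frac{1}{\frac{2}{d}+1}\int |\psi|^{\frac{4}{d}+2}\ d\bx\ \le\ \left(\frac{\cP[\psi_0]}{\cP_{cr}}\right)^{2/d}\ \int|\nabla\psi|^2\ d\bx,
\nn\end{equation}
so that, writing $\theta\equiv(\cP[\psi_0]/\cP_{cr})^{2/d}<1$ by hypothesis \eqref{eq:P<Pc}, we get
\begin{equation}
\cH[\psi,\psi^*]\ \ge\ (1-\theta)\int|\nabla\psi|^2\ d\bx\ +\ \int V(\bx)|\psi|^2\ d\bx\ \ge\ (1-\theta)\int|\nabla\psi|^2\ d\bx\ -\ \|V\|_\infty\,\cP[\psi_0].
\nn\end{equation}
Since $\cH$ is conserved and $V$ is bounded (by (H1), $V$ is smooth and periodic, hence bounded), the right-hand side is bounded below by a finite constant depending only on the data. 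Rearranging yields an a priori bound
\begin{equation}
\int|\nabla\psi(\cdot,t)|^2\ d\bx\ \le\ \frac{1}{1-\theta}\left(\cH[\psi_0,\psi_0^*]\ +\ \|V\|_\infty\,\cP[\psi_0]\right)\ \equiv\ C(\psi_0)
\nn\end{equation}
uniformly in $t$; combined with mass conservation this gives a uniform-in-time bound on $\|\psi(\cdot,t)\|_{H^1}$. By the local well-posedness theory in $H^1$ for $\sigma=2/d$ quoted in the Introduction (with the standard blow-up alternative: a local solution either exists globally or its $H^1$ norm blows up in finite time), the uniform $H^1$ bound precludes finite-time blow-up, so the solution extends globally; no collapse occurs.

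The main obstacle is a bookkeeping one rather than a conceptual one: one must be careful that the Gagliardo-Nirenberg constant is applied with the \emph{critical} exponent matched to $d$ (here $(d,\sigma)=(1,2)$ or $(2,1)$, so $2\sigma+1=\frac{4}{d}+1$ and $\frac{4}{d}+2=2\sigma+2$), and that the strictness $\theta<1$ is used — exactly at $\cP[\psi_0]=\cP_{cr}$ the argument degenerates, which is consistent with the fact that the Townes soliton itself sits at the threshold and the sharp inequality cannot then be improved. One should also note that the statement as phrased is for general $\sigma$ only through the critical identification; if one wanted the conclusion for the subcritical range $\sigma<2/d$ the argument is even easier (the nonlinear term is then controlled by a sublinear power of $\|\nabla\psi\|_2^2$ times constants, so no smallness of $\cP[\psi_0]$ is needed at all), but the stated theorem is specifically the critical-threshold result and the proof above is the natural one. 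Finally, one may remark that this argument does not require $V\ge0$ or any sign condition on $V$, only boundedness, which is why the periodic (sign-indefinite) case is covered without modification.
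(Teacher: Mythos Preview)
Your proof is correct and follows essentially the same route as the paper: apply the sharp Gagliardo--Nirenberg (Weinstein) inequality in the $L^2$-critical case to bound the nonlinear term by $\theta\,\|\nabla\psi\|_2^2$ with $\theta=(\cP[\psi_0]/\cP_{cr})^{2/d}<1$, absorb the potential term via $\|V\|_\infty\,\cP[\psi_0]$, and use conservation of $\cH$ and $\cP$ to obtain a uniform $H^1$ bound, whence global existence by the blow-up alternative. The paper's argument is identical in structure and detail.
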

 \begin{remark}\label{rmk:sol-exc-conj}
Recall that in the spatially homogeneous case, $V\equiv0$, if in addition to \eqref{eq:P<Pc} we impose the stronger assumptions: $\psi_0\in H^1$ and $|\bx|\psi_0\in L^2$,  then  $\psi(\bx,t)$ tends to zero as $t\to\infty$ for a range of $p>2$  \cite{Weinstein:89}; see also \cite{KVZ:07} for scattering results in $H^1$. 

$\cP_{cr}$ is thus called a \underline{\it soliton excitation threshold}.  Excitation thresholds also play a role in systems without critical scaling symmetry. See, for example, 
\cite{Weinstein:99,FKM:97} and \cite{CP:03,MRS:preprint,BCT:09}. 

For $V$ non-zero, the picture which emerges from the above 
theorems and numerics (see, for example, figure \ref{fig:power_mu_1d_A1_v2}) is quite different. 
 The minimal mass (minimal power), band edge power and $V\equiv0$ critical mass are related by: 
  \begin{equation}
  \cP_{\rm min}<\cPedge<\cPc\ .
  \nn\end{equation}
  Here, 
  \begin{equation}
  \cP_{min}=\cP[u(\cdot,\mu_{min})]\equiv  \min_{\mu\le E_*}\ \cP[u(\cdot,\mu)]\ ,
  \label{eq:limcP}\end{equation}
  where in \eqref{eq:limcP}: $\mu\mapsto\cP[u(\cdot,\mu)]$ is computed along the family of solitons centered at a local {\it minimum}; see the solid curve in figure \ref{fig:power_mu_1d_A1_v2}.  Along this soliton curve, computations indicate that  $u(\cdot,\mu)>0$ and $n_-(L_+)=1$. By Theorem \ref{theo:stability}
 (applied for $V$ periodic) 
there is an open set of initial data in the phase space $H^1$:\ 
 \begin{equation}
 \{\psi_0\in H^1 :\ \cP_{\rm min}< \cP[\psi_0]  < \cPedge < \cPc \}
 \label{openset}
 \end{equation} 
within which
  there are co-existing  unstable / ``wide'' and stable / ''narrow'' solutions.
  
 \nit There is also an open set in $H^1$
   \begin{equation}
 \{\psi_0\in H^1 :\ \cP_{edge} < \cP[\psi_0] < \cPc \} ,
 \label{openset1}
 \end{equation}
 where the only solitons are  stable  and ``narrow''. The terms wide  and narrow refer, respectively, to solitons with frequencies in an interval near (to the right of $\mu_{min}$) or far (to the left of $\mu_{min}$) the band edge, $E_*$ \cite{SFIW-08,Sivan-NL-08} \\

\nit Finally, we state a \\
{\bf Soliton excitation threshold conjecture} (see also \cite{Weinstein:89, Weinstein:99,KVZ:07} $\cP_{min}$ is an {\it excitation threshold}:
\nit  If $\cP[\psi_0]\ < \ \cP_{\rm min}$,  then $\psi(\bx,t)$ tends
to zero as $t\to\infty$ ( $L^p$, for some range of $p>2$ with the free Schr\"odinger decay-rate if $\psi_0\in H^1$ and sufficiently localized in space) or in $L^2_{loc}$, and scattering holds for $\psi_0\in H^1$.
 \end{remark}

\begin{proof} Proof of Theorem \ref{theo:crit-dynamics}.
  Part 1 follows from part 2 of  Theorem \ref{theo:stability}, where we review results on the stability / instability of solitary waves.
\end{proof}

\begin{proof} Proof of Theorem \ref{thm:globexistV}.
  This follows from an application of the sharp Gagliardo-Nirenberg inequality; see \cite{Weinstein:83,Weinstein:89}. Specifically, for any function $f\in H^1(\R^d)$ we have
  \begin{align}
    &\left(\ 1\ -\ \frac{\|f\|_{L^2} }{\|R\|_{L^2} } \right)^{\frac{4}{d}}\ 
    \int\ |\nabla f|^2\ \le\ \int  \left(\ |\nabla f|^2\ -\
      \frac{1}{1+\frac{2}{d}}\ |f|^{\frac{4}{d}+2}\ \right)
    \ \equiv\ \cH_0[f],\nn\\
    &\label{sharp-sng}
  \end{align}
  where $\cH_0$ denotes the conserved 
  NLS/GP Hamiltonian for $V\equiv0$.
  Estimate \eqref{sharp-sng} was used in \cite{Weinstein:83} to establish, for $V\equiv0$, that if $\psi_0\in H^1$ and $\|\psi_0\|_{L^2}<\|R\|_{L^2}$, then NLS has a global in time $H^1$ bounded solution.
  It was further used in \cite{Weinstein:89} to show that if, in
  addition we assume that $|\bx|\psi_0\in L^2$, then the solution 
  decays to zero in $L^p$, for range of $p>2$ (and therefore in $L^2_{\rm loc}$).
  \end{proof}

\begin{proof}
  To prove Theorem  \ref{thm:globexistV}, note from \eqref{sharp-sng} that
  \begin{equation}
    \left(\ 1\ -\ \left( \frac{\|f\|_{L^2} }{\|R\|_{L^2} } \right)^{\frac{4}{d}}\ \right)\ 
    \int |\nabla f|^2\ \le\ \cH[f]\ -\ \int V\ |f|^2
    \label{sharp-sng-V}\end{equation}
  Applying this inequality to a solution, $\psi(\bx,t)$, of NLS/GP yields
  \begin{equation}
    \left(\ 1\ -\ \left( \frac{\|\psi_0\|_{L^2}}{\|R\|_{L^2} } 
      \right)^\frac{4}{d}\ \right)\ \int |\nabla \psi(\bx,t)|^2\ \le\ \cH[\psi_0] + \|V\|_{L^\infty}\ \int |\psi_0|^2
    \label{psi-bound}\end{equation}
  For initial data, $\psi_0$, in  small $H^1$ neighborhood of a  soliton with frequency near the band edge, we have $\|\psi_0\|_{L^2}<\|R\|_{L^2}$. 
  Estimate \eqref{psi-bound} implies a uniform bound on $\|\psi(\cdot,t)\|_{H^1}$ and therefore global existence (no singularity formation / no collapse).
\end{proof}

 
\subsection{Finite gaps -- results for focusing and defocusing nonlinearities}\label{section:finite-gaps}

In this section we remark on extensions of our results to solitons 
with frequencies in finite gaps (gap solitons). For this, more general, discussion it is convenient to write NLS/GP and its nonlinear bound state equation in the form
\begin{align} 
&  i \D_t\psi\ = -\Delta \psi + V(\bx) \psi  + g|\psi|^{2\sigma} \psi
    \label{eq:NLS-V-g}\\
&  \left(\ -\Delta\ + V\ \right) u \ +g\ u^{2\sigma+1} =  \mu\ u, 
  \label{eq:u-nd-g}
\end{align}
where we have introduced a parameter $g$ to encode
the (i) focusing / attractive ($g=-1$) and the defocusing / repulsive ($g=+1$) cases. 
\\

\nit {\bf Focusing nonlinearity, $g=-1$:}\  Our results of the
previous section applied to solitons with frequencies in the
semi-infinite gap, $\mu<E_*$. The results on bifurcations of solutions
from  the spectral band edge 
can be extended  to the case where  $E_*$ is replaced by $E_{edge}$,
any band edge frequency. Here, we consider the case where the following two conditions hold
 \begin{enumerate}
 \item The space of $\cell$ - periodic solutions $(-\Delta+V)w(\bx)=E_{edge}w(\bx)$  is one-dimensional, spanned by a function $w_{edge}(\bx)$, $E_{edge}$ is attained by the band dispersion function at $\bk=0$.
 \footnote{In dimensions $d\ge2$ band edges may be attained at $0\ne\bk\in\cell^*$; see \cite{DU:09}. In this case, the corresponding solutions are complex-valued and an extension of the present methods we use along the lines of \cite{DU:09} is necessary.}
 \item  The inverse effective mass tensor, $A^{ij}$, is symmetric and positive definite.
 \end{enumerate}
 In this case, we have solitons centered about any point of symmetry of $V(\bx)$, which in analogy to those described in Theorem \ref{theo:2scale}, bifurcate from the left band-edge toward lower frequencies, into the spectral gap
 \begin{equation}
\mu\mapsto u(\bx,\mu\ ) \approx\
 \left(E_{edge}-\mu\right)^{1\over2\sigma}\  w(\bx)\ F\left(\ \sqrt{E_{edge}-\mu}\ (\bx-\bx_0)\ \right),\nn
 \end{equation}
$E_{edge}-\mu>0$  and sufficiently small. Here $F$ satisfies the effective medium nonlinear Schr\"odinger equation \eqref{eq:Feqn}, whose inverse effective mass tensor, $A^{ij}$ is given by equation \eqref{eq:Aij}, with $w$ replaced by $w_{edge}$. Alternatively, this is $(D^2E_{n}(\bk_0))_{ij}$, the Hessian matrix of a Bloch dispersion function, $E_{n}$, where $E_{n}(\bk_0)=E_{edge},\ \ \bk_0\in\cell^*$. 
\bigskip

\nit{\bf Defocusing nonlinearity, $g=+1$:} Here, we consider the case
where the following two conditions hold
\begin{enumerate}
 \item The space of $\cell$- periodic solutions $(-\Delta+V)w(\bx)=E_{edge}w(\bx)$  is one-dimensional, spanned by a function $w_{edge}(\bx)$, $E_{edge}$ is attained by the band dispersion function at $\bk=0$.
 \item  The inverse effective mass tensor, $A^{ij}=-B^{ij}$, is symmetric and {\it negative} definite.
 \end{enumerate}
 In this case, we have solitons centered about any point of symmetry of  $V(\bx)$,  bifurcating  from the right band-edge toward higher frequencies, into the spectral gap.\\
 
 Indeed, if we seek, along the lines of our previous analysis,  soliton-like states with frequency:
 \begin{equation}
 \mu = E_{edge} - \tau\eps^2, 
 \nn\end{equation}
 our analysis near a band edge with negative definite effective mass tensor, $-B^{ij}$,  yields an effective medium soliton equation:
 \begin{equation}
 -\sum_{i,j=1}^d\D_{y_i} B^{ij} \D_{y_j} F - \G  F^{2\sigma+1} = \tau F
\nn\end{equation}
Thus, we can construct localized states for $\tau<0$ 
and  $\mu = E_{edge} + |\tau|\eps^2>E_{edge}$.
\\

Finally, we remark that all hypotheses concerning multiplicity of spectrum and curvature of band dispersion functions are verifiable in one space ($d=1$)  dimension.  Thus we have 

\begin{theorem}\label{theo:1d2scale}
Let $V(x)$ denote a smooth, periodic and even potential.
Consider any finite width, non-empty, spectral gap, $-\infty<a<b<\infty$, of $-\D_x^2+V(x)$.  The band dispersion curvature at $E=a$ is strictly negative and at $E=b$ is strictly positive; see Appendix \ref{ap:FB}.
\begin{enumerate}
\item For focusing nonlinearity,  $g=-1$,  centered about any point of symmetry of $V$, there exists a family of solitons of NLS-GP \eqref{eq:NLS-V-g}, which  bifurcates from the zero solution with frequencies in  the gap {\it less} than $E=b$.
\item For defocusing nonlinearity, $g=+1$, centered about any point of symmetry of $V$, there exists a family of solitons of NLS-GP \eqref{eq:NLS-V-g}, which bifurcates from the zero state with frequencies bifurcating into the gap {\it greater} than $E=a$.
\end{enumerate}
   These bifurcating branches have expansions and properties  analogous to those described in Theorem \ref{theo:2scale} and  Theorem \ref{theo:edgepower}.
\end{theorem}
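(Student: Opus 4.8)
The plan is to deduce Theorem~\ref{theo:1d2scale} from the general finite-gap construction sketched in Section~\S\ref{section:finite-gaps}, by checking, in one space dimension, the two structural hypotheses required there: one-dimensionality of the relevant (anti)periodic eigenspace at the gap edge, and the appropriate sign and non-degeneracy of the band-dispersion curvature, i.e. of the scalar inverse effective mass $A=\frac{1}{2}E_n''(\bk_0)$ of \eqref{eq:Aij}. Both are classical facts of Hill's-equation theory and are recorded in Appendix~\ref{ap:FB}.

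First I would normalize the edge to $\bk=0$. In $d=1$ each band function $E_n(\bk)$ is monotone on $[0,\frac{1}{2}]$, so band edges occur only at $\bk=0$ or at the zone boundary $\bk=\pm\frac{1}{2}$. If the gap edge $E_{edge}\in\{a,b\}$ is attained at $\bk=0$, the periodic ground/excited state may be chosen real and the construction of Section~\S\ref{section:finite-gaps} applies verbatim. If it is attained at the zone boundary, I would double the period cell to $[-2\pi,2\pi]$ (equivalently, impose anti-periodic boundary conditions on $[-\pi,\pi]$); then $E_{edge}$ becomes a $\bk=0$ edge for the doubled-cell problem, the corresponding eigenvalue is again simple, it is spanned by a real Bloch mode $w_{edge}$ with $L_{edge}w_{edge}=0$ where $L_{edge}\equiv-\D_x^2+V-E_{edge}$, and $L_{edge}^{-1}$ is bounded on the orthogonal complement of $w_{edge}$ in the doubled periodic space; the points of symmetry of $V$ are unchanged. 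The effective coupling $\G=\int w_{edge}^{2\sigma+2}/\int w_{edge}^2$ over the relevant cell is positive in all cases, so a rescaled ground state in the spirit of \eqref{eq:FR} is well defined. Next I would invoke the spectral facts: for a genuine (non-empty) finite gap $(a,b)$ both endpoints are simple — a double edge would force the adjacent gap to collapse — which supplies hypothesis~(1); and, by Appendix~\ref{ap:FB} (monotonicity of the Hill discriminant), at a simple edge $E_n''(\bk_0)\ne0$, strictly positive at the bottom of a band, hence at the right gap edge $b$, and strictly negative at the top of a band, hence at the left gap edge $a$. Thus $A>0$ at $b$ and $A=-B$ with $B>0$ at $a$.

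With these in hand I would run the two-scale / Lyapunov--Schmidt argument of Theorem~\ref{theo:2scale} and its finite-gap variant. For $g=-1$ at $E=b$: since $A>0$, the effective equation $-AF''-\G F^{2\sigma+1}=-F$, $F>0$, $F\in H^1$, has the rescaled $\sech$-type ground state, producing a family $u(\cdot,\mu)$ for $0<b-\mu$ small, centered at any point of symmetry of $V$, of the asserted form with frequencies below $b$. For $g=+1$ at $E=a$: setting $\mu=a-\tau\eps^2$ and $A=-B$ yields the effective equation $-BF''-\G F^{2\sigma+1}=\tau F$, solvable with $F>0$, $F\in H^1$ exactly when $\tau<0$, i.e. $\mu=a+|\tau|\eps^2>a$, so the branch bifurcates into the gap above $a$. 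The expansions, the error bound \eqref{eta-est}, and the analogues of the power-curve statements transfer from Theorems~\ref{theo:2scale}--\ref{theo:edgepower} with $E_*,w$ replaced by $E_{edge},w_{edge}$. I expect the only genuinely non-routine point to be the zone-boundary case: one must verify that after period doubling all the structure used in the Lyapunov--Schmidt reduction — a simple real Bloch eigenfunction, boundedness of $L_{edge}^{-1}$ off $w_{edge}$, and the Fredholm solvability conditions at each order — survives unchanged; given that, the theorem follows from the arguments already carried out in the paper.
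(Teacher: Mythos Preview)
Your proposal is correct and follows essentially the same route as the paper: verify in one dimension the two structural hypotheses of Section~\S\ref{section:finite-gaps} (simplicity of the edge eigenstate and the appropriate sign of the curvature, via Appendix~\ref{ap:FB}), then invoke the general finite-gap construction built on the two-scale and Lyapunov--Schmidt machinery of Theorem~\ref{theo:2scale}. Your explicit treatment of the zone-boundary case $\bk=\pm\tfrac{1}{2}$ by period doubling is a standard and necessary step that the paper leaves implicit; since the antiperiodic edge state can be chosen real, this reduction indeed places you back in the $\bk=0$ setting assumed in Section~\S\ref{section:finite-gaps}, and none of the structure used in the reduction is disturbed.
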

\nit {\bf N.B.} The results of this subsection indicate extensions to bifurcations into finite  width gaps. In particular, for critical nonlinearities, we are able to analytically characterize the band-edge limit of the squared $L^2$ norm, $\cP$. Note however that the factor, $\zetap$, arising in finite gaps is associated with an excited  Bloch state, {\it i.e.} a state $w_{edge}(\bx)$, which is not a positive ground state of the periodic boundary value problem. Since the estimate $\zetap\le 1$, hinged on the result
\cite{Kirsch-Simon:87}: $(m_*)^{-1}\le1$, which makes use of the ground state property (in particular positivity),  we do not have an estimate on the size of $\zetap$ in finite gap cases.
%


\section{Homogenization / multi-scale expansion }
\label{sec:homog-expansion}
In this section we  derive a formal multiple scale expansion of
solitons bifurcating from the band edge. 
In Section~\S\ref{sec:error-est} we prove an error estimate, 
thus completing the proof of Theorem \ref{theo:2scale} .

{\it Without loss of generality we choose coordinates with $\bx_0=0$.}
 We seek a solution of the bound state equation~(\ref{eq:u-nd}), 
which bifurcates from the zero state at the band edge $\mu=E_*$, 
depending on a ``fast'' spatial scale $ \bx$ and a slow spatial scale 
\begin{equation}
  \by\ =\ \epsilon\left(\bx-\bx_0\right)\ =\ \eps\bx,\  \ \ \ \ \  \eps\ll 1
  \label{eq:slowscale}
\end{equation}
of the form
\begin{subequations}
  \label{eq:u-MS}
  \begin{eqnarray}
    \mu_\eps        &=&E_*+\eps \mu_1 + \eps^2 \moot + \dots \\
    u_\epsilon(\bx)  &=& \eps^{\frac{1}{\sigma}}\Ue(\bx,\by)\\
    \Ue(\bx,\by   ) &=&U_0(\bx,\by) + \eps U_1(\bx,\by) + \eps^2U_2(\bx,\by) + \dots~.
  \end{eqnarray}
\end{subequations}
We also impose periodicity in $\bx$, \ie
\begin{equation}
\Ue(\bx+{\bf q}_j,y)=\Ue(\bx,\by),\ \ j=1,\dots,d \label{eq:Ue-per}
\end{equation}
Rewriting equation~(\ref{eq:u-nd}) by treating $\bx$ and $\by$ 
as {\it independent} variables gives
\begin{eqnarray}
    -\left(\nabla_\bx + \epsilon\nabla_\by \right)^2\Ue
    + V(\bx)\Ue - \eps^2\ \Ue^{2\sigma+1}\ =\ \mu_\eps \Ue
  \nn
\end{eqnarray}
Using the expansion (\ref{eq:u-MS})  and the operator $L_*$  [see Eq.~(\ref{eq:wdef})],
we obtain the following hierarchy of equations to $\cO(\eps^4)$
\begin{eqnarray*}
  \cO(\eps^0): & L_*U_0  =& 0~, \\
  \cO(\eps^1): & L_*U_1  =&
   \left(2\nabla_\bx\cdot\nabla_\by + \mu_1\right)U_0\\
  \cO(\eps^2): & L_*U_2  =& \left(2\nabla_\bx\cdot\nabla_\by+ \mu_1\right)U_1 + (\Delta_\by + \moot) U_0+U_0^{2\sigma+1}\\
  \cO(\eps^3): & L_*U_3  =&\left(2\nabla_\bx\cdot\nabla_\by+ \mu_1\right)U_2 + (\Delta_\by+\moot)U_1+(2\sigma+1)U_0^{2\sigma}U_1
  +\mu_3U_0,\\
  \cO(\eps^4): & L_*U_4  =& \left(2\nabla_\bx\cdot\nabla_\by+ \mu_1\right)U_3 + (\Delta_\by+\moot)U_2\\
   && +(2\sigma+1)U_0^{2\sigma}U_2+(2\sigma+1)\sigma U_0^{2\sigma-1}U_1^2+\mu_4U_0\label{eps-k-rhs}
\end{eqnarray*}
where for each $k\ge5$ we have:
\begin{align}
  \cO(\eps^k): \ \ 
  L_*U_k \ &=\ \mu_k U_0\nn\\
  &+ \  
  \left(2\nabla_\bx\cdot\nabla_\by+\mu_1\right) U_{k-1}(\bx,\by)
  \nn\\
  &  +  (\Delta_\by+\moot)U_{k-2}+ \cF_k[U_j(\bx,\by),\mu_j:1\le j\le k-2] 
\label{eq:eqn-form}
\end{align}
Note that $L_*$ is self-adjoint with a one-dimensional null-space
spanned by $w$.
In addition, $\mu_k$ is determined by a solvability condition of the form:
\begin{equation}
  \mu_k\langle w(\cdot),U_0(\cdot,\by)\rangle + \langle w(
  \cdot) , \tilde{\cF}_k(\cdot,\by)\rangle\ =\ 0,
  \label{eq:solvability}
\end{equation}
obtained by imposing orthogonality of  $w$ to the 
right hand side of  (\ref{eq:eqn-form}). Here, $\tilde{\cF}_k$ denotes expression the sum of the last two lines on the right hand side of \eqref{eps-k-rhs}.
Condition (\ref{eq:solvability}) ensures the existence of a solution to (\ref{eq:eqn-form}) which is periodic
in $\bx$. 

We now implement this procedure at successive orders in $\eps$.
 In particular, we construct the terms $U_j(\bx,\by),\ \ 0\le j\le 4$,
 as these are required in the proof of Theorem~\ref{theo:2scale}.


 \subsection{Solution at each $\cO(\eps^k),\ k=0,1,2,3,4$}
 \nit{\bf $\cO(\eps^0)$ terms:}\ \ 
The $\cO(\eps^0)$ equation is solved by the choice
\begin{equation}
  U_0(\bx,\by) = w(\bx)F(\by),
  \label{eq:U0}
\end{equation}
where $w$ is the periodic Bloch state
associated with the band edge, as defined in Eq.~(\ref{eq:wdef}).\\

  \nit{\bf $\cO(\eps^1)$ terms:}\ \ 
 The $\cO(\eps)$ equation for $U_1$, by (\ref{eq:U0}),  becomes
\begin{equation}
  L_*U_1=2 \nabla_\bx w\cdot\nabla_\by F+\mu_1w\ F
\label{U1mu1-eqn}
\end{equation}
Orthogonality of the right hand side of (\ref{U1mu1-eqn} to $w$ 
 implies $
\mu_1=0$,  from which we obtain (\ref{eq:U1}):
\begin{equation} 
U_1\ =\ 2 L_*^{-1}[ \nabla_\bx w]\cdot \nabla_\by F(\by)\ .
\label{U1-sol}\end{equation}
\bigskip

\begin{remark} 
To be completely systematic, we should add to the right hand side of \eqref{U1-sol} a term of the form $F_{1h}(\by)w(\bx)$, which is in the null space of $L_*$, with $F_{1h}(\by)$ to be determined. 
$F_{1h}(\by)$ is determined via the solvability condition for $U_3$. Symmetry considerations lead to $F_{1h}(\by)\equiv 0$ (see the discussion of $U_3$). We omit inclusion of this term to simplify the presentation. Note however that this degree of freedom
is required at higher order. In particular, see the expression for $U_2(\bx,\by)$ and the role of $F_{2h}(\by)$ in the solving for $U_4(\bx,\by)$.
\end{remark}
\bigskip

  \nit{\bf $\cO(\eps^2)$ terms:}\ \ The $\cO(\eps^2)$ equation for $U_2$, by ~(\ref{eq:U0}) and (\ref{eq:U1}) becomes 
\begin{align}
        L_* U_2 &= w(\bx)\left(\Delta_\by+\moot\right)F(\by)\ +\ 4\ \D_{x_j} \D_{y_j}L_*^{-1}[\D_{x_i}w](\bx)\
       \D_{y_i}F(\by)\nn\\
         &+ w^{2\sigma+1}(\bx)F^{2\sigma+1}(\by)\nn\\
         &= w(\bx)\ \Delta_\by\ F(\by)\ +\ 4\ \nabla_\bx\cdot\nabla_\by\left[\ L_*^{-1}[\nabla_\bx w](\bx)\cdot\nabla_\by F\ \right]\nn\\
         & +\ \mu_2\ w(\bx)F(\by)\ + w^{2\sigma+1}(\bx)F^{2\sigma+1}(\by)
      \label{eq:U2}
    \end{align}
 An equation for $F(\by)$ is obtained by imposing orthogonality of the right hand side of (\ref{eq:U2}) to  $w(\bx)$. It is convenient to formulate the following
 \begin{proposition}\label{prop:homog-op}
 Denote by ${\cal L}_*$ the operator
 \begin{align}
 G(\by)&\mapsto {\cal L}_*[G](\bx,\by)\nn\\
  &=\ w(\bx)\ \Delta_\by\ G(\by) \ +\ 4\ \nabla_\bx\cdot\nabla_\by\left[\ L_*^{-1}[\nabla_\bx w](\bx)\cdot\nabla_\by G(\by)\ \right]\label{cL*def}
 \end{align}
 Then, 
 \begin{equation}
 \langle\ w(\cdot),{\cal L}_*[G](\cdot,\by)\ \rangle\ =\  \D_{y_i} \aij \D_{y_j}\ G(\by)\ \ \times\ \ \langle w,w\rangle\ \ .
  \label{homog-op}\end{equation}
 \end{proposition}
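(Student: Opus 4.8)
The plan is to prove \eqref{homog-op} by a direct computation, using only the linearity of $\cL_*$ in $G$ together with one integration by parts in $\bx$ over the period cell $\cell$. First I would split $\langle w,\cL_*[G](\cdot,\by)\rangle$ according to the two terms in \eqref{cL*def}. The first term is immediate: since $\Delta_\by G(\by)$ is independent of $\bx$,
\[
\langle w,\ w\,\Delta_\by G\rangle\ =\ \langle w,w\rangle\,\Delta_\by G(\by)\ =\ \langle w,w\rangle\sum_{i,j}\delta_{ij}\,\D_{y_i}\D_{y_j}G(\by).
\]
For the second term, writing $\nabla_\bx\cdot\nabla_\by=\sum_j\D_{x_j}\D_{y_j}$ and $L_*^{-1}[\nabla_\bx w]\cdot\nabla_\by G=\sum_i L_*^{-1}[\D_{x_i}w]\,\D_{y_i}G$, the mixed divergence expands to $\sum_{i,j}\D_{x_j}L_*^{-1}[\D_{x_i}w](\bx)\,\D_{y_i}\D_{y_j}G(\by)$, so this term equals $4\sum_{i,j}\langle w,\ \D_{x_j}L_*^{-1}[\D_{x_i}w]\rangle\,\D_{y_i}\D_{y_j}G(\by)$.

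Next I would integrate by parts in $\bx$ over $\cell$, the boundary contributions vanishing because $w$ and $L_*^{-1}[\D_{x_i}w]$ are smooth and $\cell$-periodic; this gives $\langle w,\D_{x_j}L_*^{-1}[\D_{x_i}w]\rangle=-\langle\D_{x_j}w,L_*^{-1}[\D_{x_i}w]\rangle$. (Here one uses that $\D_{x_i}w\in\{w\}^\perp$, since $\langle w,\D_{x_i}w\rangle=\tfrac12\int_\cell\D_{x_i}(w^2)=0$ by periodicity, so $L_*^{-1}[\D_{x_i}w]$ is well-defined as in the remark following \eqref{Pperp}.) Combining the two terms,
\[
\langle w,\cL_*[G](\cdot,\by)\rangle\ =\ \sum_{i,j}\Bigl(\langle w,w\rangle\,\delta_{ij}-4\langle\D_{x_j}w,L_*^{-1}[\D_{x_i}w]\rangle\Bigr)\,\D_{y_i}\D_{y_j}G(\by),
\]
and factoring out $\langle w,w\rangle$ identifies the bracket with $\langle w,w\rangle\,\aij$ by the definition \eqref{eq:Aij} of the inverse effective mass tensor. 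Since $\aij$ is a constant matrix, $\sum_{i,j}\aij\D_{y_i}\D_{y_j}G=\D_{y_i}\aij\D_{y_j}G$, which is exactly \eqref{homog-op}.

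There is no genuine obstacle here; the statement is essentially a bookkeeping identity expressing that the $\bx$-average of the second-order operator $\cL_*$ is the constant-coefficient homogenized operator with coefficients $\aij$. The only points deserving a word of care are the legitimacy of the $\bx$-integration by parts (guaranteed by hypothesis (H1), since smoothness and periodicity of $V$ yield smoothness and periodicity of $w$ and of $L_*^{-1}$ applied to smooth periodic data) and the fact that $\D_{x_i}w$ lies in the range of $P^\perp$ so that $L_*^{-1}[\D_{x_i}w]$ makes sense; both were already established in Section~\S\ref{sec:FBtheory}. One may also note that $\langle\D_{x_j}w,L_*^{-1}[\D_{x_i}w]\rangle$ is symmetric in $(i,j)$ because $L_*^{-1}$ is self-adjoint on $\{w\}^\perp$, in agreement with the symmetry of $\aij$, although this symmetry is not needed since these coefficients are contracted against the symmetric Hessian $\D_{y_i}\D_{y_j}G$.
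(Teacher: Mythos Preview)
Your proof is correct and is precisely the direct computation the paper has in mind; the paper states Proposition~\ref{prop:homog-op} without an explicit proof, relying on the reader to carry out exactly the integration-by-parts argument you give, together with the definition \eqref{eq:Aij} of $\aij$. There is nothing to add.
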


 Imposing orthogonality of the right hand side of (\ref{eq:U2}) to  $w(\bx)$ and applying Proposition \ref{prop:homog-op} yields equation  (\ref{eq:Feqn}) for $F=F(\by,\mu_2)$:
  \begin{equation}
      -\ \D_{y_i} \aij \D_{y_j} F(\by,\mu_2)\ - \G\ F^{2\sigma+1}(\by,\mu_2)\ =\ \moot F(\by,\mu_2)~.
      \label{eq:Feqn-1}
    \end{equation}
{\it Here, we consider only the positive decaying solution of (\ref{eq:Feqn-1}), which by scaling and uniqueness can be expressed as}
\begin{equation}
F(\by,\mu_2)\ =\ |\mu_2|^\frac{1}{2\sigma}\ F(|\mu_2|^{1\over2} \by;-1)
\label{Fmu2scale}
\end{equation}
We can therefore, scale out $|\mu_2|$ and henceforth assume $\mu_2=-1$.\\

    \nit Thus far, we have shown:\\ \\
\nit {\it To leading order, the slowly varying envelope function $F(\by)$
of the nonlinear bound state of NLS/GP is comprised of 
a nonlinear bound state of the NLS equation~(\ref{eq:Feqn}) 
for a homogeneous medium with effective mass tensor $(\aij)^{-1}$ [Eq.~(\ref{eq:Aij})]
and effective nonlinearity $\G$.}\\ \\ 
In Subsection~\S\ref{subsec:Ftownes} we show that  $F(\by)$  is an appropriate scaling of $R(\by)$, the Townes soliton, the ground state associated with an isotropic homogeneous medium.\\ \\

We express the general solution of (\ref{eq:U2}) in the form
\begin{equation}
U_2(\bx,\by)\ =\ U_{2p}(\bx,\by)+U_{2h}(\bx,\by)=U_{2p}(\bx,\by)\ +\ w(\bx)F_{2h}(\by),
\label{U2expnd}\end{equation}
where $U_{2p}$ denotes a particular solution of (\ref{eq:U2}) and 
 $w(\bx)F_{2h}(\by)$ lies in the kernel of $L_*$ (recall $L_*w=0$), with $F_{2h}(\by)$ to be determined.
 
Using  equation (\ref{eq:Feqn-1})  to simplify the right hand side of (\ref{eq:U2}) gives:
 \begin{align}
 U_{2p}(\bx,\by) &=  \sum_{1\le i, j\le d}\ L_*^{-1} \left[\ \left(\ \delta_{ij} + 4\D_{x_j}L_*^{-1}\D_{x_i} \right)
   w(\bx)\ \D_{y_i}\D_{y_j}F(\by)\ \right.
\nn  \\  &  \left.  \ + \ 
    w^{2\sigma+1}(\bx)F^{2\sigma+1}(\by)\ -\  w(\bx)F(\by)\  \right]
\nn  \\*[2mm]  &=   L_*^{-1}\left[\ \left(\ \delta_{ij} +
      4\D_{x_j}L_*^{-1}\D_{x_i} - \aij \right)
    w(\bx)\ \right] \D_{y_i}\D_{y_j}F(\by) \nn \\   
    &    \ + \ L_*^{-1}\
  \left[ w^{2\sigma+1}(\bx) -\G\ w(\bx) \right]\ F^{2\sigma+1}(\by)\label{U2fac-1}\\
  &{\rm or}\nn\\
  U_{2p}(\bx,\by)\ &\equiv\ \sum_{1\le i,j\le d}L_*^{-1} 
  X^{ij}_{2p,1}(\bx)\ \D_{y_i}\D_{y_j}F(\by)\ +\ L_*^{-1}X_{2p,2}(\bx)\ F^{2\sigma+1}(\by),
  \label{U2fac-2}
\end{align} 
with $X^{ij}_{2p,1}$ and $X_{2p,2}$ given by the corresponding expressions in (\ref{U2fac-1}).
To obtain (\ref{U2fac-1}), we use equation (\ref{eq:Feqn-1}) for $F(\by)$, in terms of 
the effective mass tensor, (\ref{eq:Aij}) and effective coupling,
 (\ref{eq:geff}). This is a consequence of the solvability (orthogonality) condition for equation 
  for (\ref{eq:U2}).
 \\ \\
   \nit{\bf $\cO(\eps^3)$ terms:}\ \ Using \eqref{U2expnd}, we obtain the following equation for $U_3(\bx,\by)$: 
   \begin{align}
   L_*U_3\ &=\ 2\ \nabla_\bx\cdot\nabla_\by \left(\ U_{2p}\ +wF_{2h}\ \right)\nn\\
   & +\ \left(\ \Delta_\by-1+(2\sigma+1)F^{2\sigma}w^{2\sigma}\ \right)\ 2L_*^{-1}\D_{x_i}w\ \D_{y_i}F
   \  +\ \mu_3w\ F.\label{U3temp}
    \end{align}
Solvability of (\ref{U3temp}) requires orthogonality of the right hand side to $w$. Since all terms, except the last, on the right hand side of (\ref{U3temp}) are antisymmetric functions of $\bx$, we have $\mu_3=0$. Thus, after substitution of the explicit expression for $U_{2p}$ we have
\begin{align}
 L_*U_3  &=\ 2\D_{x_l}w(\bx)\ \D_{y_l}F_{2h}(\by)\nn\\
   &\ \  +\ 2\D_{x_l}L_*^{-1}X_{2p,1}^{ij}(\bx)\ \D_{y_l}\D_{y_i}\D_{y_j}F(\by)
   \ +\ 2\D_{x_l}L_*^{-1}X_{2p,2}(\bx)\ \D_{y_l}F^{2\sigma+1}(\by)\nn\\
   &\ \ +\ 2L_*^{-1}\D_{x_i}w(\bx)\ \left(\ \Delta_\by\ -1 \right)\D_{y_i}F(\by)\nn\\
   &\ \   +\ 2w^{2\sigma}(\bx)L_*^{-1}\D_{x_i}w(\bx)\ \D_{y_i}F^{2\sigma+1}(\by),
   \label{eq:U3}
   \end{align}
 with summation over repeated indices implied.
 Thus,
 \begin{align}
 U_3\ &=\ 2L_*^{-1}[\nabla_\bx w]\ \cdot\ \nabla_\by F_{2h}\nn\\
 &+ 2L_*^{-1}\left[\ \nabla L_*^{-1}[X_{2p,1}^{ij}]\cdot\nabla_\by\D_{y_i}\D_{y_j}F\ +\ \nabla L_*^{-1}[X_{2p,2}]\cdot\nabla_\by F^{2\sigma+1}\right.
 \nn\\
&\ \ \ \ \ \  +\left. L_*^{-1}[\nabla_\bx w]\cdot(\Delta_\by-1)\nabla_\by F+2w^{2\sigma}L_*^{-1}[\nabla_\bx w]\cdot \nabla_\by F^{2\sigma+1}\ \right]\nn\\
&\equiv\ 2L_*^{-1}[\nabla_\bx w](\bx)\ \cdot\ \nabla_\by F_{2h}\ +\ \tilde{U_3}
  \label{eq:tU3def}\end{align}
 \\ \\
   \nit{\bf $\cO(\eps^4)$ terms:}\ \ For $U_4$ we have
   \begin{align}
 &  L_*U_4  = \left(2\nabla_x\cdot\nabla_y+ \mu_1\right)U_3 + (\Delta_\by-1)U_2+(2\sigma+1)U_0^{2\sigma}U_2\nn\\
 &\ \ \ \ \ \ \ \ \  +\sigma (2\sigma+1)U_0^{2\sigma-1}U_1^2+\mu_4U_0\nn\\
   &\ \ =\ {\cal L}_*[F_{2h}](\bx,\by)\ +\ 
    (2\sigma+1)w^{2\sigma+1}(\bx)F^{2\sigma}\nn\\
&\ \ \ \ \ \ +\   (\Delta_\by-1)U_{2p}\ +\ (2\sigma+1)U_0^{2\sigma}U_{2p}\ +\ 2\nabla_\bx\cdot\nabla_\by\ \tilde{U_3}\nn\\
&\ \ \ \ \ \ +\ \sigma(2\sigma+1) U_0^{2\sigma-1}U_1^2+\mu_4w(\bx)F(\by)
\label{U4eqn}
\end{align}
The operator $ {\cal L}_*[\cdot](\bx,\by)$, appearing in (\ref{U4eqn})
  is defined in Proposition.
\ref{prop:homog-op}. Imposing orthogonality of the right hand side of (\ref{U4eqn}) and applying Proposition \ref{prop:homog-op} gives the following equation for $F_{2h}$: 
\begin{align}
L_+^A\ F_{2h}(\by)\ &=\ \langle w,w\rangle^{-1}
\left[ \langle w,(\Delta_\by-1)U_{2p}(\cdot,\by)+(2\sigma+1)U_0^{2\sigma}U_{2p}(\cdot,\by)\ \rangle\ \right.\nn\\
&\left. \ \ \ +\ \ \sigma(2\sigma+1)\ \langle w, U_0^{2\sigma-1}U_1^2(\cdot,\by)\rangle\ +\ 2\langle w,\nabla_\bx\cdot\nabla_\by \tilde{U_3}(\cdot,\by)\rangle\right]\ +\ \mu_4 F(\by)\nn\\
& \equiv\ S(\by),
\label{Sofydef}
\end{align}
where $L_+^A$ is  the second order linear Schr\"odinger operator:
\begin{equation}
L_+^A\ \equiv\ -\D_{y_i}A^{ij}\D_{y_j}\ +1\ -\ (2\sigma+1)\G F^{2\sigma+1}(\by).
\label{L+Adef}
\end{equation}
We now show that we can take $\mu_4=0$.
Equation (\ref{Sofydef}) can be solved in $L^2(\R^d)$ for $F_{2h}(\by)$  if and only if $S(\by)$ is $L^2-$ orthogonal to the kernel of $L_+^A$. The kernel of 
  $L_+^A$ has dimension $d$ and is generated by translations, {\it i.e.} ${\rm Kernel}(L_+^A)= {\rm span}\{\D_{y_j}F(\by),\ j=1,\dots,d\ \}$ \cite{Weinstein-85,Kwong-89}. Since $F(\by)$ is even, the kernel of $L_+^A$ consists of functions which are antisymmetric  in one coordinate direction. Moreover, it is easy to see that all terms in $S(\by)$ are symmetric and therefore orthogonal to the kernel of $L_+^A$. Thus, we set 
  $\mu_4=0$.


\subsection{ $F(\by)$ is a scaled Townes soliton}\label{subsec:Ftownes}

Thus far, we have constructed 
the formal expansion ~(\ref{eq:mu-eps}) of $(u_\veps,\mu_\veps)$  through $\cO(\eps^2)$.  The proof of its validity, in particular the error estimate (\ref{eta-est}), is given in Section~\S\ref{sec:error-est}.

We conclude this section by relating the {\it effective medium soliton}\ $F(\by)$, which solves the NLS bound state equation with effective media parameters $A^{ij}$ and $\G$, to 
the unique ground state of the uniform-medium NLS equation,
\begin{equation}
  -\Delta R\ -\ R^{2\sigma+1}\ =\ \mu\ R,\ \ R>0,\ \ R\in H^1(\R^d).
  \label{eq:isotropic-nls}
\end{equation}
Let $A=(\aij)$ and $\Lambda\equiv diag(\lambda_1,\dots,\lambda_d)$ 
denote the diagonal matrix, whose diagonal entries are the eigenvalues of $A$.
Let $S$ denote an orthogonal matrix for which 
\begin{equation}
  SAS^T=\Lambda\equiv diag(\lambda_1,\dots,\lambda_d)~.
  \label{eq:Sdef}
\end{equation}
Then, under the change of coordinates 
$\by\mapsto \bz=\Lambda^{-\frac{1}{2}}S\by$,  
$F_1(\bz)=F(\by,-1)$ solves Eq.~(\ref{eq:isotropic-nls}) with $\mu=-1$.
By uniqueness up to translations,
the solution to the isotropic NLS equation~(\ref{eq:isotropic-nls})
is given by
\begin{equation}
  F(\by)\ =\ F_1(\bz)\ =\  R(\Lambda^{-\frac{1}{2}}S\by,-1)
  \ = \ \left(\frac{1}{\G}\right)^{\frac{1}{2\sigma}}\ 
  R(\Lambda^{-\frac{1}{2}}S\by,-1)
  \label{eq:F1def}
\end{equation}
Note that our expansion gives $\mu_\eps=E_*-\eps^2 +\ \cO(\eps^5)$ (recall $\mu_3=\mu_4=0$). As shown in the proof, we can in fact  take  
\begin{equation}
  \Delta\mu\ \equiv\ \mu_\eps=E_*-\eps^2\ .
  \label{eq:mudef}
\end{equation}
Substitution of Eqs.~(\ref{eq:F1def}) and~(\ref{eq:mudef}) into the 
expansion (\ref{eq:u-MS}), and using  (\ref{eq:U0})  and  $\by=\eps\bx$ yields
the leading order expansion of $u(\bx,\mu)$
displayed in Eq.~(\ref{eq:u-eps}). 


\subsection{$\cP\left[u(\cdot,\mu)\right]$ near the band edge}

\label{sec:edgepower}

To prove Theorem  \ref{theo:edgepower} and Corollary \ref{cor:edgepower-crit} we evaluate 
$\int |u_\eps(\bx)|^2\ d\bx$, where $u_\eps$ is given by 
the two-scale expansion plus error term  (\ref{eq:u-eps}),  of Theorem \ref{theo:2scale}. 
We obtain (again recalling the choice of coordinates, so that $\by=\eps\left(\bx-\bx_0\right)=\eps\bx$)
\begin{align}
 &   \int_{\R^d} |u_\eps(\bx)|^2\ d\bx\ =\
     \eps^{\frac{2}{\sigma}}\Biggm[ 
    \int \underbrace{|U_0(\bx,\eps\bx)|^2}_{\alpha_0} 
    + \underbrace{  2\eps U_0(\bx,\eps\bx)U_1(\bx,\eps\bx)}_{\alpha_1}
 \nn \\
&  + \underbrace{\eps^2\left( |U_1(\bx,\eps\bx)|^2+2 U_0(\bx,\eps\bx)
   U_2(\bx,\eps\bx)\right)}_{\alpha_2}\ d\bx\ +\  \cO(\eps^3)~\Biggm]\nn\\ &= \eps^{2\over\sigma}\left[\ \cI_0^\eps\ +\ \cI_1^\eps\ +\ \cI_2^\eps\ +\ \cO(\eps^3)\ \right]
  \label{eq:int-u2}
\end{align}
Each of the three terms on the right-hand side
will be treated below using the following general
averaging method
\begin{lemma}\label{lem:asymp}\  
  Let $p(\bx)$ be periodic on the lattice $\Gamma$ having the fundamental 
  period-cell $\cell$. 
  Let $\cell^*$ denote the dual fundamental cell (first Brillouin zone) 
  which spans the dual lattice $\Gamma^*$.
  Assume that 
  $\sum_{\bk\in\Gamma^*}\  |p_\bk| <\infty$, 
  where $\{p_\bk\}$ denotes the set of Fourier coefficients of $p$.  
  Let $G\in L^1(\R^d)\cap C^\infty(\R^d)$. 
  Then, as $\eps\to 0$
  $$  
  \left|\  \eps^d\int_{\R^d} p(\bx)\ G(\eps\bx) d\bx\ -\ 
    \Av p(\bx)d\bx\cdot \int_{\R^d} G(\by) d\by\ \right|\ =\  
 \cO(\eps^\infty)\times \sum_{\bk\in\Gamma^*} |p_\bk| = \cO(\eps^\infty)~,
  $$
  where the cell average  $\Av$ is defined by $\Av p = \frac{1}{|\cell |}\int_{\cal B} p$.
\end{lemma}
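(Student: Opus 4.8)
The plan is to reduce the oscillatory integral $\eps^d \int_{\R^d} p(\bx)\, G(\eps\bx)\, d\bx$ to a Riemann-type sum over the lattice and then exploit the smoothness of $G$. First I would expand $p$ in its Fourier series, $p(\bx) = \sum_{\bk \in \Gamma^*} p_\bk\, e^{i\bk\cdot\bx}$, which converges absolutely by hypothesis. The $\bk = 0$ term contributes $p_0 = \Av p$, and $p_0 \int_{\R^d} G(\eps\bx)\, d\bx = \eps^{-d} p_0 \int_{\R^d} G(\by)\, d\by$, which after multiplication by $\eps^d$ gives exactly the claimed main term. So the content of the lemma is that every term with $\bk \ne 0$ contributes only $\cO(\eps^\infty)$; by absolute summability of $\{p_\bk\}$ it suffices to bound $\eps^d \int_{\R^d} e^{i\bk\cdot\bx}\, G(\eps\bx)\, d\bx$ uniformly in $\bk \ne 0$ by a quantity that is $\cO(\eps^\infty)$, modulo a factor depending on $\bk$ that is summable against $|p_\bk|$.

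Second I would change variables $\by = \eps\bx$, so that $\eps^d \int_{\R^d} e^{i\bk\cdot\bx}\, G(\eps\bx)\, d\bx = \int_{\R^d} e^{i(\bk/\eps)\cdot\by}\, G(\by)\, d\by = \widehat{G}\!\left(-\bk/(2\pi\eps)\right)$ up to the paper's normalization of the Fourier transform. Thus the off-zero terms are precisely values of $\widehat{G}$ at the points $\bk/(2\pi\eps)$, $\bk \in \Gamma^* \setminus \{0\}$, which march off to infinity at rate $1/\eps$. Since $G \in C^\infty$ and $G \in L^1$, repeated integration by parts shows that for every nonnegative integer $N$ there is a constant $C_N$ with $|\widehat{G}(\bxi)| \le C_N (1+|\bxi|)^{-N}$ provided all derivatives of $G$ up to order $N$ lie in $L^1$; one should either assume this mild decay on derivatives or, more cleanly, note that the statement is invoked in this paper only for Schwartz-class (or at least rapidly decaying smooth) profiles $G$ built from the Townes soliton $R$ and its derivatives, so $\widehat{G}$ is itself Schwartz. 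Then $|\widehat{G}(\bk/(2\pi\eps))| \le C_N (2\pi\eps)^N |\bk|^{-N}$ for $\bk \ne 0$, and summing against $|p_\bk|$,
\begin{equation}
\left| \eps^d \int_{\R^d} p(\bx)\, G(\eps\bx)\, d\bx - \Av p \cdot \int_{\R^d} G\, d\by \right| \le C_N (2\pi)^N \eps^N \left( \sum_{\bk \ne 0} \frac{|p_\bk|}{|\bk|^N} \right) \le C_N' \eps^N \sum_{\bk \in \Gamma^*} |p_\bk|,
\nn\end{equation}
using $|\bk| \ge c > 0$ for all nonzero $\bk \in \Gamma^*$ (the dual lattice is discrete). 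Since $N$ is arbitrary, this is $\cO(\eps^\infty) \times \sum_{\bk} |p_\bk|$, and the final equality $= \cO(\eps^\infty)$ follows because $\sum_{\bk} |p_\bk|$ is a finite constant.

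The main obstacle is purely a matter of hypotheses: the bare assumptions ``$G \in L^1 \cap C^\infty$'' do not by themselves force decay of $\widehat{G}$ — one also needs control on the $L^1$ norms of the derivatives of $G$ (equivalently, polynomial decay of $\widehat G$). In the applications of interest $G$ is a fixed Schwartz function (a polynomial in $R$, $\partial R$, and homogenized correctors, all exponentially decaying), so this is automatic; in the write-up I would simply add the needed regularity/decay to the statement or remark that it holds in all cases where the lemma is used. Everything else — Fourier expansion, change of variables, integration by parts, summing a convergent series — is routine, and the discreteness of $\Gamma^*$ (so nonzero dual lattice vectors are bounded away from $0$) is what converts ``$\widehat G$ decays fast'' into the uniform-in-$\bk$ bound that survives the sum over $\bk$.
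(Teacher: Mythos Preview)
Your approach is essentially identical to the paper's: Fourier expand $p$, isolate the $\bk=0$ term as the main contribution, change variables $\by=\eps\bx$ in the remaining terms to recognize them as values of $\hat G$ at $\bk/(2\pi\eps)$, and invoke rapid decay of $\hat G$ to conclude. You are in fact slightly more careful than the paper, which simply asserts ``by smoothness of $G$'' that $|\hat G(\bxi)|\le r_{G,q}(1+|\bxi|)^{-q}$ without commenting on the $L^1$ control of derivatives; your remark that the stated hypotheses $G\in L^1\cap C^\infty$ are a little thin, but that the lemma is only applied to Schwartz-class profiles, is a valid observation.
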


\begin{proof} Proof of Lemma \ref{lem:asymp}.
  $p(\bx)$ has the Fourier representation
\begin{equation}
  p(\bx)\ =\ \sum_{\bk\in\Gamma^*} p_\bk\ e^{ i \bk\cdot \bx},\ \ 
  {\rm where}\ \ p_\bk = \Av e^{- i \bk\cdot \bx} p(\bx) d\bx~.
  \end{equation}
  Therefore,
 \begin{eqnarray*}
   \eps^d\int_{\R^d} p(\bx)\ G(\eps\bx) d\bx\ &=& 
   \eps^d\sum_{\bk\in\Gamma^*}p_\bk \int_{\R^d}e^{ i\bk\cdot \bx}G(\eps\bx)d\bx\\
   &=& p_0\int_{\R^d}G(\by)d\by\ +\ \sum_{0\ne \bk\in\Gamma^*}p_\bk
   \int_{\R^d}e^{ i \frac{\bk}{\eps}\cdot \by}G(\by)d\by\\
   &=& \Av  p(\bx) d\bx\ \int_{\R^d}G(\by)d\by
   \  +\ \sum _{0\ne \bk\in\Gamma^*}p_\bk 
   \overline{\hat{G}\left(\frac{\bk}{2\pi\eps}\right)}\ .
 \end{eqnarray*}
 By smoothness of $G$, for all $q\ge1$ and $\bxi\in\R^d$  there is a 
 positive constant, $r_{G,q}$, such that  
 $|\hat{G}(\bxi)|\le r_{G,q}(1+|\bxi|)^{-q}$. 
 The required estimate of the remainder term 
 follows. This completes the proof of the Lemma.
\vspace{2mm}
\end{proof}

\nit We now proceed with proof of the Corollary
\ref{cor:edgepower-crit}  
by evaluating the terms $\cI_j^\eps,\ j=0,1,2$ in Eq.~(\ref{eq:int-u2}).
 \\
 
 \nit{\bf Claim 1:} \
\begin{equation}
\cI_0^\eps = \eps^{-d}\ 
   \zetap\ \cPc\  +\  \cO(\eps^\infty),\ \ {\rm where}
\label{cI0eps}
 \end{equation}
 \begin{equation}
\zetap\ =\ \left(\frac{ \left(\Av w^2 \right)^{\sigma+1} }{\Av w^{2\sigma+2} }\right)^{\frac{1}{\sigma}}
  \ m_*^{-\frac{1}{2}}.
  \label{zeta*def}
  \end{equation}
 
  \begin{proof} 
By Eq.~(\ref{eq:U0})  and Lemma \ref{lem:asymp} one has 
$$  
\cI_0^\eps\ = \ \int \alpha_0\ d\bx \ = \ 
  \int_{\R^d}\ w^2(\bx)\ F^2(\eps\bx)\ d\bx 
   \ =\ \eps^{-d}\ \Av w^2(\bx)\ d\bx\ \int_{\R^d} F^2(\by)\ d\by\ +\ \cO(\eps^\infty).
$$
Using expression~(\ref{eq:FR}) for $F(\by)$ as a scaling of $R(\by;-1)$ 
we get
\begin{eqnarray*}
  \int_{\R^d} F^2(\by)\ d\by &=&
  (\G)^{-\frac{1}{\sigma}}\ m_*^{-\frac{1}{2}}\ \int_{\R^d} R^2(\by;-1)\ d\by
  \\ &=&
  \left(\frac{ \Av w^2  }{\Av w^{2\sigma+2} }\right)^{\frac{1}{\sigma}}
  \ m_*^{-\frac{1}{2}}\ \cPc~,
\end{eqnarray*}
  \end{proof}

\nit {\bf Claim 2:} $\cI_1^\eps\ =\ \cO(\eps^\infty)$.

\begin{proof} 
We proceed similarly by using (\ref{eq:U1}) and Lemma \ref{lem:asymp}. We obtain
\begin{eqnarray*}
  \cI_1^\eps &=& \int \alpha_1\ d\bx  \ = \
  2\eps \ \sum_{j=1}^d\ \int_{\R^d}\  w(\bx)\ F(\eps\bx)\cdot
  2L_*^{-1}\ [ \D_{x_j} w](\bx)\D_{y_j}F(\eps\bx)\ d\bx\\
  &=& 4\eps^{1-d} \Av w(\bx) L_*^{-1}\ ( \D_{x_j} w)(\bx)\ d\bx 
  \int F(\by)\D_{y_j}F(\by)\ d\by\ +\ \eps\ \cO(\eps^\infty)
  \ = \ \cO(\eps^\infty)~,
\end{eqnarray*}
since  $\int_{\R^d} F(\by)\D_{y_i}F(\by)\ d\by=0$.
  \end{proof}

 Finally, we turn to $\cI_2^\eps$.\\ \\
 
 \nit {\bf Claim 3:}
\begin{align}
 \cI_2^\eps\ =\ \eps^{-d}\cdot\  \eps^2\ &\left[\  4\sum_{j=1}^d\ \Av \left| L_*^{-1}\left[\D_{x_j}w(\bx)\right]\right|^2\ d\bx\ \int_{\R^d}\ \left|\ \D_{y_j}F(\by)\ \right|^2\ d\by\right.\nn\\
&\left.+\ \dashint_\cell w^2(\bx) d\bx\ \int \D_\Omega F(\by)\ S(\by)\ d\by\ \right]+\ \cO(\eps^\infty).\nn\\
& \label{I2eps}
 \end{align}
 where $S(\by)$ is explicitly displayed in (\ref{Sofydef}).\\ 

 \begin{proof}
   \begin{align}
     \cI_2^\eps\ = \int \alpha_2\ d\bx &= \eps^2\int 
     \left[\ |U_1(\bx,\eps\bx)|^2  \ + \ 
       2 U_0(\bx,\eps\bx) \left(U_{2p}(\bx,\eps\bx)\ +\ w(\bx)F_{2h}(\eps\bx)\ \right) \ \right] \ d\bx~\nn\\
     &=\ \cI_{2,a}\ +\ \cI_{2,b}\ +\ \cI_{2,c}\ \ .\nn\end{align}
   and, by Lemma \ref{lem:asymp},\ $\cI_{2,a}^\eps$ is given by
   \begin{align}
     \cI_{2,a}^\eps \ &\equiv \  \eps^2\int |U_1(\bx,\eps\bx)|^2 \ d\bx \ =\ 
     4\ \eps^2\ \int\ \left| L_*^{-1}\left[\D_{x_j}w(\bx)\right]\ \D_{y_j}F(\eps\bx)\ \right|^2\ d\bx\nn\\
     &\ =\ \eps^{2-d}\ \left(\    4\sum_{j=1}^d\ \Av \left| L_*^{-1}\left[\D_{x_j}w(\bx)\right]\right|^2\ d\bx\cdot \int_{\R^d}\ \left|\ \D_{y_j}F(\by)\ \right|^2\ d\by\ +\ \cO(\eps^\infty)\ \right)\nn
   \end{align}
   Concerning $\cI_{2,b}$, we assert the following:\\
   \begin{equation} 
     \cI_{2,b}^\eps\ = \ 2\eps^2 \intRd w(\bx) F(\eps \bx)  U_{2p}(\bx,\eps\bx) \ d\bx\ =\  
     \cO(\eps^\infty).\nn\\
     \label{I2b-claim}\end{equation}
 \end{proof}

 \begin{proof}
   To prove (\ref{I2b-claim}) we note that $U_2$ [Eq.~(\ref{U2fac-2})] is of a sum of terms
   that have the factored form
   \begin{equation}
     U_{2p}(\bx,\eps\bx) = \sum_j\ G_j(\eps\bx)\cdot  L_*^{-1} P^\perp g_j(\bx)~.
     \label{factored}\end{equation}
   Here, $P^\perp$ denote the projection onto the orthogonal complement 
   of $w$ in $L_{periodic}^2(\cell)$; see (\ref{Pperp}).
   Substitution of (\ref{factored}) gives
   \begin{equation}
     \cI_{2,b}^\eps\ = \ 
     2\eps^2 \ \sum_j\intRd w(\bx)\ L_*^{-1} P^\perp g_j(\bx)\cdot F(\eps \bx)  G_j(\eps\bx) \ d\bx\nn\end{equation}
   which by Lemma \ref{lem:asymp} implies
   \begin{equation}
     \cI_{2,b}^\eps\ = \ \sum_j\Av w(\bx)\ L_*^{-1} P^\perp g_j(\bx) d\bx\ \int F(\by) G_j(\by) d\by\ +\ \cO(\eps^\infty).
     \nn\end{equation}
   Since $P^\perp$ commutes with functions of $L_*$
   and $P^\perp w=0$ we have
   \begin{align}
     \langle w(\bx), L_*^{-1} P^\perp g_j\rangle =
     \langle w,  L_*^{-1} P^\perp\ P^\perp g_j\rangle\ 
     = \langle P^\perp w,  L_*^{-1} \ P^\perp g_j\rangle = 0.
     \nn\end{align}
   It remains to calculate $\cI_{2,c}$.
   \begin{align}
     \cI_{2,c}\ &=\ 2 \eps^2\ \int U_0(\bx,\eps\bx) U_{2h}(\bx,\eps\bx)\ d\bx\nn\\
     &=\ 2 \eps^2\ \int w(\bx)F(\eps\bx)\cdot w(\bx)F_{2h}(\eps\bx)\ d\bx\nn\\
     &=\ 2 \eps^{2-d}\dashint_\cell w^2\cdot \int F(\by)\ F_{2h}(\by)\ d\by\ +\ \cO(\eps^\infty),\nn\\
     &=\  2 \eps^{2-d}\dashint_\cell w^2\cdot \int F(\by)\ \left(L_+^A\right)^{-1}S(\by)\ d\by\ +\ \cO(\eps^\infty)\nn\\
     &=\  2 \eps^{2-d}\dashint_\cell w^2\cdot \int \left(L_+^A\right)^{-1} F(\by)\ S(\by)\ d\by\ +\ \cO(\eps^\infty)\nn\\
     &=\ - \eps^{2-d}\dashint_\cell w^2\cdot  \left(\frac{1}{\sigma}F(\by)+\by\cdot\nabla_\by F(\by)\right) \
     \ S(\by)\ d\by\ +\ \cO(\eps^\infty)
   \end{align}
Here we have used the relation 
\begin{align}
\left(L_+^A\right)^{-1} F= \left.\D_{\mu_2} F(\cdot,\mu_2)\right|_{\mu_2=-1}\ =\ -\frac{1}{2}\left(\frac{1}{\sigma}F(\by)+\by\cdot\nabla_\by F(\by)\right)\nn\\ 
&\nn\end{align} which follows from differentiation of the equation for $F=F(\bx;\mu_2)$  with respect to $\mu_2$; see \eqref{eq:Feqn-1} and \eqref{Fmu2scale}.
 \\ \\
 Therefore, summing up the terms we have
 \begin{align}
 &   \int_{\R^d} |u_\eps(\bx)|^2\ d\bx\ =\
     \eps^{2\over\sigma}\left[\ \cI_0^\eps\ +\ \cI_1^\eps\ +\ \cI_2^\eps\ +\ \cO(\eps^3)\ \right]\nn\\
 & =\   (\eps^2 )^{\frac{1}{\sigma}-\frac{d}{2}} \zetap\ \cPc\nn\\
 & +
   (\eps^2)^{\frac{1}{\sigma}-\frac{d}{2}+1}\ \left[\ 4
    \sum_{j=1}^n\Av\ \left| L_*^{-1}\left[\D_{x_j}w(\bx)\right]\right|^2\ d\bx\ \int_{\R^d}\ \left|\ \D_{y_j}F(\by)\ \right|^2\ d\by\ \right.\nn\\
    &\ \ \ \ \ \ \ \ \ -\ \   \left. \dashint_\cell w^2\ d\bx
    \int \left(\frac{1}{\sigma}F(\by)+\by\cdot\nabla_\by F(\by)\right)\ S(\by)\ d\by\ \right]\ +\ \cO(\eps^\infty)\nn\\
    & =\ (\eps^2)^{\frac{1}{\sigma}-\frac{d}{2}}\ 
    \left(\ \zeta_*\cPc\ + \eps^2\zeta_{1*}\ +\cO(\eps^\infty)\ \right)
    \nn
\end{align}
 Recall 
$S(\by)$ is displayed in (\ref{Sofydef}).
 \end{proof}

This concludes the proof of Theorem \ref{theo:edgepower}.



\section{The error estimate \eqref{eta-est} 
and conclusion of the proof of Theorem \ref{theo:2scale}}
\label{sec:error-est}

In this section we prove Theorem \ref{theo:2scale}. 
For ease of presentation, we focus on the cubic ($\sigma=1$) one-dimensional case ($d=1$):
\begin{equation}
\left(\ -\D_x^2 + V(x)\ \right)u\ -\ u^3\ =\ \mu\ u
\nn\end{equation} 
The proof carries over to the more  setting in the statement of Theorem \ref{theo:2scale}.\ After the proof, we indicate the modifications required for the proof to go through in general dimension $d=1,2,3$; see Remark
 \ref{rmk:extensions} below. 

We shall construct a solution $(u,E)=(u_\veps,\mu_\veps)$, 
using the  formal multiple scale expansion of  Section~\S\ref{sec:homog-expansion}
:  
\begin{align}
u_\veps\ &=\ \veps\ U^\veps(x)\ =\ \veps\ \left[\ \sum_{k=0}^4\veps^k\ U_k(x,y)
 +\ \veps^3\ U_5^\veps(x)\ \right]\label{u-eps}\\
\mu_\veps\ &=\ E_*\ - \veps^2 \label{u-mu-eps}
\end{align}
The expansion includes an error term, $\veps^3U_5^\veps(x)$, which must be estimated. The equation for $\veps^3U_5^\veps$ is:
\begin{align}
&\left(\ -\D_x^2+V(x)-3\veps^2U_0^2(x,\veps x)\ -\ E_*+\veps^2\ \right)\  U_5^\veps(x)\nn\\
&=\ \ \veps^2\ R^\veps[\ U_j;0\le j\le4, U_5^\veps(x)\ ]\nn\\
&\equiv\ \  \veps^2\ R_0^\veps[U_j;0\le j\le4] +\ 
 \veps^3\ R_2^\veps[U_j;0\le j\le4] \ U_5^\veps\nn\\
 &\ \ \ \  \ \  +\ \veps^5\ R_2^\veps[U_j;0\le j\le4] \left(\ U_5^\veps\ \right)^2\ +\ \veps^{8}\ \left(\ U_5^\veps\ \right)^3,
\label{U5-corrector}\end{align}
where $R_k^\veps[U_j;0\le j\le4]$ denotes the coefficient of the $k^{th}$ power of $U_5^\veps$, and is a polynomial in the previously constructed functions $U_j,\ j=0,1,2,3,4$.

 The scaling of the error term in (\ref{u-eps}) is motivated as follows.
  Formally, the correction to the leading order sum in (\ref{u-eps})
   will be of order $\veps^5$. In our analysis, we find that the frequency components of the corrector (to the truncated multiple scale expansion near the band edge are of order $\cO(\eps^3)$ .  Therefore anticipate this result in \eqref{u-mu-eps}.
    We will in fact show that for $s>d/2$, $\|U_5^\veps\|_{H^s}$ is bounded uniformly in $\veps$. This implies the error bound of Theorem \ref{theo:2scale}.
    
 In particular, 
\begin{equation}
R_0^\veps\ =\ 2\D_x\D_y U_4+(\D_y^2-1)U_3+3U_0^2+6U_0U_1U_2+U_1^3\ +\ \cO(\veps)\ .
 \nn\end{equation}
 Our goal is to estimate $U_5^\veps$ and to do this we employ the spectral (Floquet - Bloch) decomposition of the operator $-\Delta+V(x)$. 
 

\subsection{Floquet-Bloch Theory and the Bloch transform}
See the references \cite{Eastham:73,RS4,Kuchment-01} for basic results on 
the spectral theory of  operators with periodic coefficients.

Assume $V(x+2\pi)=V(x)$. For each $k\in\T=[-\frac{1}{2},\frac{1}{2}]$
 we seek solutions of the eigenvalue equation  for the operator $\left(-\D_x^2+V(x)\right)$ of the form:
 \begin{equation}
 u(x;k)=e^{ikx}p(x;k),\ \ \ p(x+2\pi;k)=p(x;k),\ \ x\in\R
 \nn\end{equation}
 This yields the periodic elliptic eigenvalue problem for $p(x;k)$
 \begin{equation}
 \left(-(\D_x+ik)^2+V(x)\right)\ p(x;k)\ =\ E\ p(x;k),\ \ \ \ p(x+2\pi;k)=p(x;k)
 \nn\end{equation}
 For each $k\in\T$ the spectrum is discrete give rise to eigenpairs
  $(E_m(k),p_m(x;k))_{m\ge1}$, and a complete orthonormal set $\{p_m(x;k)\}$
   in $L^2_{per}$ with respect to the inner product:
   \begin{equation}
   \langle f,g\rangle_{L^2_{per}}\ =\ \int_0^{2\pi}\overline{f(x)}g(x)\ dx.
   \nn\end{equation}
  $(E_n(k),u_n(x;k)),\ {n\ge1}, \ k\in[-1/2,1/2]$ are solutions of the eigenvalue problem:
  \begin{align}
&\left(-\D_x^2+V(x)\right) u_n(x;k)\ =\ E_n(k)\ u_n(x;k)\nn\\
& u_n(x+2\pi;k)= e^{2\pi ik}u_n(x;k),\ x\in\R,
 \nn\end{align}
 where $k\mapsto E_m(k)$ sweeps out the m$^{th}$ spectral band, and yield a complete set of states in $L^2(\R)$; see \eqref{completeness}.
 
\nit{\bf N.B} In this section we  assume that $w(x)$ is normalized, $\langle w,w\rangle=1$.
 Thus, $w(x)$ is the unique normalized ground state of the periodic boundary value problem and 
 \begin{equation}
 (p_1(x;0),E_1(0))\ =\   (p_1(x;0),E_1(0))\ =\ (w(x),E_*)
 \nn\end{equation}

 Furthermore, for each $k\in\T$, the set $\{p_n(x;k)\}$ is an orthonormal  set  in $L_{\rm per}^2(\ [0,2\pi)\ )$.

Introduce the Gelfand - Bloch transform $(\cT\phi)(x;k)=\tilde{\phi}(x;k)$, and its inverse $\cT^{-1}$:
\begin{align}
(\cT\phi)(x;k)&=\tilde{\phi}(x;k)=\sum_{m\in\Z^d}e^{im\cdot x}\hat{\phi}(k+m)\label{cT}\\
(\cT^{-1}\tilde\phi)(x)&=\int_{[-\frac{1}{2},\frac{1}{2}]^d}e^{ik\cdot x}\tilde{\phi}(x;k)dk,
\nn\end{align}
where $\hat\phi(k)$ denotes the Fourier transform of $\phi(x)$. Clearly we have
\begin{equation}\
\tilde\phi(x+2\pi;k)=\tilde\phi(x;k),\ \ {\rm and}\ \ \tilde\phi(x;k+1)=e^{-ix}\ \tilde\phi(x;k)
\nn\end{equation}
One can check that
\begin{equation} \cT\ \cT^{-1}\ =\ {\rm Identity\ on}\ L^2(\R).\ \nn\end{equation}

Another important property of $\cT$ is that it commutes with multiplication by a periodic function:
\begin{equation}
f(x+2\pi )=f(x)\ \implies\ (\cT fg)(x;k)=f(x)\ (\cT g)(x)
\nn\end{equation}
Since $\tilde\phi(x;k)$ is $2\pi-$ periodic in $x$, we have
\begin{equation}
\tilde\phi(x;k)\ =\ \sum_{m\ge1}\langle p_m(\cdot;k),\tilde\phi(\cdot;k)\rangle p_m(x;k)\label{tphi-exp}\end{equation}
We  conclude this subsection with some basic definitions and results
 required below; see, for example, \cite{DU:09} and references cited therein.
\begin{theorem}\label{theo:cXs}
\begin{enumerate}
\item There exist positive constants $c_1, c_2$  band dispersion functions $E_n(k),\ \ n\ge1$ satisfy the bounds
\footnote{In dimension $d$, $n^2$ is replaced by $n^{\frac{2}{d}}$.}
\begin{equation}
c_1\ n^2\ \le\ E_n(k)\ \le\ c_2\ n^2,\ \ |k|\le 1/2\ ;
\label{omega-n-bound}\end{equation}
see \cite{CH1,Hor4}.
\item The mapping
\begin{equation}
\phi(x)\ \mapsto\ \left(\ \left\langle\ \tilde\phi(\cdot,k)\ ,\ p_n(\cdot,k)\ \right\rangle\ \right)_{n\ge1}\ \equiv\ \left(\ \tilde\phi_n(k)\ \right)_{n\ge1}\nn\end{equation}
is an isomorphism of $H^s(\R^1)$ with $\cX^s=L^2(\T^1;l^{2,s})$, with norm:
\begin{align}
& \left\|\ \left(\ \tilde\phi_n(k)\ \right)_{n\ge1}\ \right\|^2_{\cX^s}\ \equiv\ 
\left\|\ \left(\ \left\langle\ \tilde\phi(\cdot,k)\ ,\ p_n(\cdot,k)\ \right\rangle  \right)_{n\ge1}\ \right\|_{\cX^s}^2\nn\\
& =\ 
\int_{\T}\ dk\ \sum_{n\ge1}\  (1+|n|^2)^s\ \left|\ \left\langle\ \tilde\phi(\cdot,k)\ ,\ p_n(\cdot,k)\ \right\rangle\ \right|^2\   \nn\\
&
\label{cXs-def}
\end{align}
\item Moreover, there exist positive constants $C_1, C_2$, such that we have the norm equivalence
\begin{equation}
C_1\ \|\phi\|_{H^s}\ \le\ \left\|\ \left\langle\ \tilde\phi(\cdot,k)\ ,\ p_n(\cdot,k)\ \right\rangle_{n\ge1}\ \right\|_{\cX^s}\ \le\ C_2\ \|\phi\|_{H^s}
\label{norm-equiv}
\end{equation}
\item Assume $\phi, \psi\in H^s(\R^d)$. \\
(i)\  If $s>q+d/2$, then $\phi\in C^q_{\downarrow}(\R^d)$, the space of $C^q$ functions, $f$, with $|\D^\alpha f(x)|\to0$ as $x\to\infty$, $|\alpha|\le q$.\\
(ii)\ If $s>d/2$ then $H^s$ is an algebra, {\it i.e.} $\phi\psi\in H^s$ and $\|\phi\psi\|_{H^s}\le C\ \|\phi\|_{H^s}\ \|\psi\|_{H^s}$.
\end{enumerate}
\end{theorem}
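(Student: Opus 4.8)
The plan is to dispatch the four parts in sequence, each being a standard consequence of Floquet--Bloch theory. The only step carrying real content is the norm equivalence of Parts~2--3; it rests on comparing the powers of $-\Delta+V$ with the powers of $-\Delta$ (this is where the smoothness of $V$ enters), and everything else is classical bookkeeping that I will only sketch.

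\emph{Part~1 (band asymptotics).} Here $E_n(k)$ is the $n$-th ordered eigenvalue of the self-adjoint operator $H(k)=-(\nabla+ik)^2+V$ on $L^2_{\rm per}(\cell)$. For $V\equiv0$, $E_n^0(k)$ is the $n$-th smallest element of $\{\,|m+k|^2:m\in\Z^d\,\}$, and counting lattice points in a Euclidean ball produces constants with $c_1'\,n^{2/d}\le E_n^0(k)\le c_2'\,n^{2/d}$ uniformly for $k\in\cell^*$. Since $V$ is smooth and periodic, hence bounded, the Courant--Fischer min--max principle gives $|E_n(k)-E_n^0(k)|\le\|V\|_{L^\infty}$ for all $n$ and $k$; combining with the free bounds and absorbing the additive constant yields \eqref{omega-n-bound} (cf.\ \cite{CH1,Hor4}).

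\emph{Parts~2--3 (the Bloch isomorphism).} Start with $s=0$: the completeness identity \eqref{completeness}, the orthonormality of $\{p_n(\cdot;k)\}_{n\ge1}$ in $L^2_{\rm per}$, and the relation $\cT\,\cT^{-1}=\mathrm{Id}$ together assert precisely that $\phi\mapsto(\tilde\phi_n(k))_{n\ge1}$ is unitary from $L^2(\R^d)$ onto $\cX^0$ (Bloch--Plancherel). For general $s\ge0$, use that $\cT$ diagonalizes $H=-\Delta+V$, i.e.\ $\widetilde{(H\phi)}_n(k)=E_n(k)\,\tilde\phi_n(k)$. Fix $C_0>0$ with $H+C_0\ge1$. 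Because $V$ is smooth with all derivatives bounded, elliptic regularity gives the equivalence $\|\phi\|_{H^s}^2\asymp\langle(H+C_0)^s\phi,\phi\rangle$ (for integer $s$ by iterating the a priori bound $\|\psi\|_{H^2}\asymp\|\psi\|_{L^2}+\|H\psi\|_{L^2}$, and for general real $s\ge0$ by interpolation). Applying Bloch--Plancherel to $(H+C_0)^{s/2}\phi$,
\[
\langle(H+C_0)^s\phi,\phi\rangle=\int_\T\sum_{n\ge1}(E_n(k)+C_0)^s\,|\tilde\phi_n(k)|^2\,dk,
\]
and by Part~1 the weights satisfy $(E_n(k)+C_0)^s\asymp(1+n^2)^s$ uniformly in $(n,k)$. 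Chaining these comparisons gives \eqref{norm-equiv}; boundedness of $\cT$ and $\cT^{-1}$ in these norms then promotes it to the isomorphism $H^s(\R^d)\cong\cX^s$ asserted in Part~2.

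\emph{Part~4 (embedding and algebra).} Both statements are classical and blind to the periodic structure. For (i): if $s>q+d/2$ and $|\alpha|\le q$ then $\widehat{\D^\alpha\phi}\in L^1(\R^d)$ by Cauchy--Schwarz (since $(1+|\xi|)^{|\alpha|}(1+|\xi|^2)^{-s/2}\in L^2$), so $\D^\alpha\phi$ is continuous and bounded with $\|\D^\alpha\phi\|_{L^\infty}\le C_s\|\phi\|_{H^s}$; vanishing at infinity follows by approximating $\phi$ in $H^s$ by Schwartz functions and using this uniform bound (or directly from Riemann--Lebesgue). For (ii): writing $\widehat{\phi\psi}=\hat\phi*\hat\psi$ and invoking the Peetre inequality $(1+|\xi|^2)^{s/2}\le C_s\big((1+|\xi-\eta|^2)^{s/2}+(1+|\eta|^2)^{s/2}\big)$ together with $(1+|\cdot|^2)^{-s/2}\in L^2(\R^d)$ for $s>d/2$, a Cauchy--Schwarz/Young estimate yields $\|\phi\psi\|_{H^s}\le C\|\phi\|_{H^s}\|\psi\|_{H^s}$. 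As indicated, the sole nonroutine ingredient in the whole argument is the equivalence $\|\phi\|_{H^s}^2\asymp\langle(H+C_0)^s\phi,\phi\rangle$ in Parts~2--3, which transfers the diagonalization of $H$ by $\cT$ to the plain $H^s$-norm.
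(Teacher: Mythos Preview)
Your proof is correct and follows essentially the same route as the paper. For Parts~2--3 the paper uses $(I+L_*)^{s/2}$ with $L_*=-\Delta+V-E_*$, which is your $H+C_0$ with $C_0=1-E_*$; the elliptic-regularity equivalence $\|\phi\|_{H^s}\asymp\|(I+L_*)^{s/2}\phi\|_{L^2}$ followed by Bloch diagonalization and the band bounds of Part~1 is exactly your argument, and for Parts~1 and~4 the paper simply cites references or states the results as classical, so your added detail (min--max comparison, Peetre inequality) is welcome but not a departure.
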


\begin{remark}
  The bounds (\ref{omega-n-bound}) are well known; see \cite{CH1,Hor4}. 
  To prove the isomorphism, recall the operator 
  $L_*= -\Delta+V-E_*\ge0$; see (\ref{L*def} ). 
  Standard elliptic theory implies 
  $\phi\mapsto\| L_*^{s\over2}\phi\|_{L^2}$ defines a norm 
  equivalent to the $H^s$ norm. Furthermore, by (\ref{tphi-exp})
  \begin{align}
    \|\phi\|_{H^s}^2\sim\| (I+L_*)^{s\over2}\phi\|_{L^2}^2\ &=\ \left\|\ \int_{[-{1\over2},{1\over2}]}\ e^{ik\cdot}\sum_{n\ge1}\tilde\phi_n(k)\left(1+ E_*-E_n (k)\right)^s\ p_j(\cdot,k)\ \right\|_{L^2}^2\nn\\
    &=\sum_{n\ge1}\ \int_{[-{1\over2},{1\over2}]}\ |\tilde\phi_n(k)|^2\  
    |1+ E_*-E_n(k)|^{s}\ dk\nn\\
    &\sim\ \sum_{n\ge1}\ \int_{[-{1\over2},{1\over2}]}\ |\tilde\phi_n(k)|^2\ (1+|n|^2)^{s}\ dk\nn\\
    &\equiv \ \left\| \left(\ \tilde\phi_n(k)\ \right)_{n\ge1}\ \right\|_{\cX^s}^2.
    \label{isonorm-pf}
  \end{align} 
\end{remark}


\subsection{Corrector equation and localization in Bloch variables}
In this subsection we express the  equation for the corrector
\begin{equation} \Psi^\veps(x)\ \equiv\ U_5^\veps(x)
\label{Psidef}
\end{equation}
in Floquet-Bloch variables and, in particular, decompose this equation
into spectral components near and away from the band edge $E_*$;  see, for example, 
\cite{Busch-etal:06,DPS:09b,DU:09}.

 Applying the Bloch transform, $\cT$, to 
 (\ref{U5-corrector}) we obtain an equation for $\tilde\Psi^\eps(x;k)=(\cT\Psi^\veps)(x;k)$:
 \begin{align}
&\left[\ -\left(\D_x+ik\right)^2+V(x) -E_*+\veps^2\right]
(\cT{\Psi}^\veps)(x;k)\ \nn\\ 
& \qquad\qquad  -\ 3\veps^2w^2(x)
\cT\left[F^2(\veps\cdot){\Psi}^\veps(\cdot)\right](x;k)\nn\\
&\qquad\qquad\ =\  \veps^2\ \left(\cT R^\veps\right)(x;k),
\label{TPsi-eqn}\end{align}
where $\veps^2\ R^\veps$ is defined in  (\ref{U5-corrector}). 
Here, we have used that $U_0(x,y)=w(x)F(y)$.

Now $\tilde\Psi^\veps(x;k)$ is periodic in $x$. Therefore, 
\begin{equation}
\tilde\Psi^\veps(x;k)=\sum_{m=1}^\infty\ \tilde\Psi^\veps_m(k)\  p_m(x;k),\ \ \ \tilde\Psi^\veps_m(k)\equiv\langle p_m(\cdot;k)\ ,\ \tilde\Psi^\veps(\cdot;k)\ \rangle
\nn\end{equation}
We introduce  a decomposition of $\Psi^\veps$   into 
spectral components
 near the band edge $E_1(k=0)=E_*$ ({\it low frequencies}) and spectral components away from $E_*$ ( {\it high frequencies}) as follows. Let  $1_A$ denote the characteristic function for the set $A$ and define
 \begin{equation}
 \chi\left(a\le k\le b\right)\equiv 1_{\{k: a\le k\le b\}}\ \ \ .
 \nn\end{equation}

\nit Express $\tilde\Psi^\veps(x,k)$ as
 \begin{align}
&\tilde\Psi^\veps(x;k)= 
\underbrace{ \chi\left(|k|\le\veps^r\right)\Psi^\veps_1(k)\ p_1(x;k)}_{\tilde\Psi^\veps_{low}(x;k)}\nn\\
& +\ \underbrace{\chi\left(\veps^r\le|k|\le2^{-1}\right)\Psi^\veps_1(k)\ p_1(x;k)\ +\ \sum_{m\ge2}\ \tilde\Psi^\veps_m(k)\  p_m(x;k)}_{\tilde\Psi^\veps_{high}(x;k)},
\label{hl-decomp}\end{align}
 where $r$ is chosen to satisfy
 \begin{equation}
 0\ <\ r\ <\ 1\ .\label{r-restrict}
 \end{equation}
 Using the inverse Bloch transform, we obtain
 \begin{align}
 \Psi^\veps(x)\ &=\ \cT^{-1}\Psi^\veps_{low}(x;\cdot)
\ +\  \cT^{-1}\Psi^\veps_{high}(x;\cdot)\nn\\
\ & =\ \Psi^\veps_{low}(x)\ +\ \Psi^\veps_{high}(x)
 \nn\end{align}
 Taking the inner product of (\ref{TPsi-eqn}) with $p_j(\cdot;k)$, we obtain:
\begin{align}
&\left[\ E_j(k)-E_*+\veps^2\ \right]\ \tilde\Psi^\veps_j(k)
\ -\ 3\veps^2\left\langle p_j(\cdot;k)\ ,\  w^2(\cdot)
\cT\left[F^2(\veps\cdot){\Psi}^\veps(\cdot;k)\right](\cdot;k)\right\rangle\nn\\
&=\ \veps^2\left\langle p_j(\cdot,k)\ ,\ (\cT R^\veps)(\cdot,k)
\right\rangle\ \equiv\ \veps^2\ (\cT R^\veps)_j(k),\ \ \ j\ge1\nn\\
&\label{band-j}\end{align}
The system (\ref{band-j}) can be viewed as two coupled systems 
for the low and high frequencies:
\begin{align}
&\tilde\Psi^\veps_{1,low}(k)\equiv \chi\left(|k|\le\veps^r\right)\Psi^\veps_1(k)
\ \ \ {\rm and}\nn\\
& \tilde\Psi^\veps_{high}(k)\ \equiv\ \left(\   \chi\left(\veps^r\le|k|\le2^{-1}\right)\tilde\Psi^\veps(k)
\ ,\ \{\tilde\Psi^\veps_j(k)\}_{j\ge2}\ \right)\nn\\
\label{lowhigh}
\end{align}
{\bf Low frequency components:}
\begin{align}
&\left[\ E_1(k)-E_*+\veps^2\ \right]\ \tilde\Psi^\veps_{1,low}(k)\nn\\
&\ \ \ \ \ \ \ \ \ \ -\ 3\veps^2\  \chi\left(|k|\le\veps^r\right)\ 
\left\langle p_1(\cdot;k)\ ,\  w^2(\cdot)
\cT\left[F^2(\veps\cdot)\Psi_{1,low}^\veps(\cdot)\right](\cdot;k)\right\rangle\nn\\
&=\  3\veps^2\  \chi\left(|k|\le\veps^r\right)\ 
\left\langle p_1(\cdot;k)\ ,\  w^2(\cdot)
\cT\left[F^2(\veps\cdot)\Psi_{high}^\veps(\cdot)\right](\cdot;k)\right\rangle\ +\ \veps^2\ \tilde R^\veps_{1,low},\nn\\
\label{low-eqn}\end{align}
\\ 
{\bf High frequency components:}
\begin{align}
&\left[\ E_1(k)-E_*+\veps^2\ \right]\   \chi\left(\veps^r\le|k|\le2^{-1}\right)\tilde\Psi_1^\veps(k)\nn\\
&\ =\ 3\veps^2\   \chi\left(\veps^r\le|k|\le2^{-1}\right)\ \left\langle p_1(\cdot;k)\ ,\  w^2(\cdot)
\cT\left[F^2(\veps\cdot){\Psi}^\veps(\cdot)\right](\cdot)\right\rangle\
 +\ \veps^2 \tilde R^\veps_{1,high}\nn\\
&\label{high-eqns-a}\\ 
&\left[\ E_j(k)-E_*+\veps^2\ \right]\ \tilde\Psi^\veps_j(k)\nn\\
&\ =\ 3\veps^2\left\langle p_j(\cdot;k)\ ,\  w^2(\cdot)
\cT\left[F^2(\veps\cdot){\Psi}^\veps(\cdot)\right](\cdot;k)\right\rangle\
 +\ \veps^2\ \tilde R^\veps_{j,high},\ \ \ j\ge2\nn\\
\label{high-eqns-b}\end{align}
Here, $\tilde R^\eps_{1,low}$ and $\tilde R^\eps_{high}\ =\ (\tilde R^\eps_{j,high})_{j\ge1}$ are given by
\begin{align}
\tilde R^\eps_{1,low}\ &=\ \chi(|k|\le\veps^r)\ \langle p_1(\cdot,k)\ ,\ (\cT R^\veps)(\cdot,k)\rangle\label{tRlow}\\
\tilde R^\eps_{1,high}\ &=\ \chi(\veps^r\le|k|\le2^{-1})\ \langle p_1(\cdot,k)\ ,\ (\cT R^\veps)(\cdot,k)\rangle\label{tRhigh-1}\\
\tilde R^\eps_{j,high}\ &=\  \langle p_j(\cdot,k)\ ,\ (\cT R^\veps)(\cdot,k)\rangle,\ j\ge2,
\label{tRhigh-j}
\end{align}
where $R^\veps$ is defined in (\ref{U5-corrector}).
We study the system for $\tilde\Psi_{1,low}^\veps(k),\ \  \tilde\Psi_{high}^\veps(k)$ using the following \\ \\
{\bf Lyapunov-Schmidt reduction strategy:}
\begin{enumerate}
\item Using the implicit function theorem, solve the infinite system of high frequency component equations for   $\tilde\Psi^\veps_{high}$ as a functional of $\tilde\Psi^\veps_{1,low}$:\ \ $\tilde\Psi^\veps_{high}= \tilde\Psi^\veps_{high}\left[\tilde\Psi^\veps_{1,low}\right]$ with an appropriate bound on this mapping.
\item Substitute $\tilde\Psi^\veps_{high}= \tilde\Psi^\veps_{high}\left[\tilde\Psi^\veps_{1,low}\right]$ into (\ref{low-eqn}) to obtain a closed equation for  the low frequency components, which is solved via fixed point iteration.
\end{enumerate}

\nit We now embark on implementing this strategy. Our first step is to
rewrite the low frequency equation (\ref{low-eqn}) in appropriately
rescaled variables. 


\subsection{Closure and rescaled of  low frequency equation for $\tilde\Psi^\veps_{1,low}$} 

With a view toward obtaining a closed equation for $\tilde\Psi^\veps_{1,low}$, we begin with several observations.
\begin{enumerate}
\item From our formal multi-scale construction, we expect $\Psi^\veps_{1,low}(x)\sim \Phi(\veps x)\ w(x)$. This motivates the following
\\ \\ 
{\bf  Ansatz:}\ Seek the low frequency components in the form
\begin{equation}
\tilde\Psi_{low}^\veps\left(k\right)\ =\   \chi\left(|k|\le\veps^r\right)\ \frac{1}{\veps}\ \hat{\Phi}\left(\koe\right),
\label{Phi-def} 
\end{equation}
Thus,
\begin{equation}
\tilde\Psi_{low}^\veps(x;k)\ =\ \chi\left(|k|\le\veps^r\right)\ \frac{1}{\veps}\ \hat{\Phi}\left(\koe\right)\ p_1(x;k).\nn\end{equation}
Using the definition of $\cT^{-1}$ and that\\ $p_1(x;k)=p_1(x;0)+\cO(k)\ =\ w(x)+\cO(\veps^r),\ \ |k|\le\veps^r$ we have:
\begin{equation}
\Psi_{low}^\veps(x)\ =\ \Phi(\veps x)\ w(x)\ +\ 
 \cO(\veps^r).
\nn\end{equation}
\item Note that for $|k|<\veps^r$,  
\begin{equation}
 E_1(k)-E_*-\frac{1}{2}\D_k^2 E_1(0)k^2\ =\   \frac{1}{6} \D_k^3E_1(\tilde k)\  \tilde{k}^3,\ \  0\le \tilde{k}\le\veps^r
\nn\end{equation}
Thus,  
\begin{align}
\left(E_1(k)-E_*-\veps^2\right)\tilde\Psi_{1,low}^\veps(k) &=\ 
 \left(\frac{1}{2}\D_k^2 E_1(0)k^2-\veps^2\right) \chi(|k|\le \veps^r)\frac{1}{\veps}\hat{\Phi}\left( \koe \right)\nn\\
& + \cO\left(\ \|\D_k^3E_1\|_\infty \veps^{3r}\  \frac{1}{\veps}\chi(|k|\le \veps^r)\hat{\Phi}\left( \koe \right)\ \right)
\label{exp-EmE*}
\end{align}
\nit This and the  Ansatz (\ref{Phi-def})  suggest the scaling
\begin{equation}
\kappa \equiv \frac{k}{\veps}
\label{kappa-def}
\end{equation}
In this scaling \eqref{exp-EmE*} becomes 
\begin{align}
&\left(E_1(k)-E_*-\veps^2\right)\tilde\Psi_{1,low}^\veps(k)\nn\\
& =\ 
\veps^2\left(\ \frac{1}{2}\D^2 E_1(0)\kappa^2-1\ \right)
 \chi(|\kappa|\le \veps^{r-1})\frac{1}{\veps}\hat{\Phi}(\kappa)\nn\\
 &\ \ \ \ \ \ +\ \cO\left(\ \eps^{3r}\ \chi(|\kappa|\le \veps^{r-1})
  \frac{1}{\veps} \hat{\Phi}(\kappa)\ \right)
\label{exp-EmE*-A}
\end{align}
\item  Consider the last term on the left hand side of (\ref{low-eqn}).
 We have 
 \begin{align}
 &\ \ \ \ \ \ -\ 3\veps^2\  \chi\left(|k|\le\veps^r\right)\ 
\left\langle p_1(\cdot;k)\ ,\  w^2(\cdot)
\cT\left[F^2(\veps\cdot)\Psi_{1,low}^\veps(\cdot)\right](\cdot;k)\right\rangle\nn\\
&=\ -\ 3\veps^2\  \chi\left(|k|\le\veps^r\right)\ \left\langle p_1(\cdot;k)\ ,\  w^2(\cdot)
\cT\left[F^2(\veps\cdot)\ \chi\left(|\nabla_x|\le\veps^r\right)\Phi(\veps\cdot)\ w(\cdot)\right](\cdot;k)\right\rangle\ +\ \cO(\veps^{r+2})\nn\\
&=-\ 3\veps^2\  \chi\left(|k|\le\veps^r\right)\ \left\langle p_1(\cdot;k)\ ,\  w^3(\cdot)
\cT\left[F^2(\veps\cdot)\ \chi\left(|\nabla_x|\le\veps^r\right)\Phi(\veps\cdot)\right](\cdot;k)\right\rangle\ +\ \cO(\veps^{r+2})\nn\\
&=-\ 3\veps^2\ \int w^4\ dx\cdot
  \chi\left(|k|\le\veps^r\right)\ \left[F^2(\veps\cdot)\ \chi\left(|\nabla_x|\le\veps^r\right)\Phi(\veps\cdot)\right]\hat{ }\ (k) +\ \cO(\veps^{r+2})\nn\\
&=-\ 3\veps^2\ \int w^4\ dx\cdot
   \chi\left(|k|\le\veps^r\right)\ \frac{1}{\veps} \left[F^2\ \chi\left(|\nabla_y|\le\veps^{r-1}\right)\Phi\right]\hat{ }\ \left(\koe\right)\ +\ \cO(\veps^{r+2})\nn\\
 &=-\ 3\veps^2\ \int w^4\ dx\cdot
 \chi\left(|\kappa|\le\veps^{r-1}\right)\ \frac{1}{\veps} \left[F^2\ \chi\left(|\nabla_y|\le\veps^{r-1}\right)\Phi\right]\hat{ }\ \left(\kappa\right)\ +\ \cO(\veps^{r+2})
&\nn\end{align}
\end{enumerate}
Use of (\ref{Phi-def}) and (\ref{kappa-def}) in (\ref{low-eqn}) yields the following:\\ \\
{\bf Closed / Rescaled low frequency equation for $\hat\Phi(\kappa)$:} 
\begin{align}
&\left(\ \frac{1}{2}E_1''(0)\kappa^2+1\right)\ 
\chi\left(|\kappa|\le\veps^{r-1}\right)\ \hat{\Phi}(\kappa)\ 
 -\ 3\ \G\ \left[\ F^2\ \chi\left(|\nabla_y|\le\veps^{r-1}\right)\   \Phi \ \right]\hat{ }\ (\kappa)\nn\\
 & =\  \chi\left(|\kappa|\le\veps^{r-1}\right)\left[\ \hat{R}^\veps_{\rm rescaled\ low}\left(\ \kappa;\Phi,\Psi^\veps_{high}\ \right) + \cO\left(\| \D_k^3E_1\|_\infty\ \veps^{3r}\ |\ \hat\Phi(\kappa)|\ \right) \right] 
 \label{Rescaled-low-eqn}
 \end{align} 
 Here, $E_1:[-1/2,1/2]\to\left[E_*, E_1(\frac{1}{2})\right],\ \ \ k\mapsto E_1(k)$ denotes the band dispersion function for the first spectral band and 
$\G$ is given by the expression in (\ref{eq:geff}).  As we have assumed $p_1(x,0)=w(x)$ to be normalized, $\G=\int_0^{2\pi}w^4$.\\

\nit We summarize the arguments of this subsection, which lead to the system we'll study:
\begin{proposition}\label{prop:Rescaled-System}
 The coupled system consisting equation  (\ref{Rescaled-low-eqn}) for the rescaled low frequency components, $\hat\Phi(\kappa),\ \ |\kappa|\le\veps^{r-1}\ \ (|k|\le\veps^r)$, and the {\it high frequency equations}, (\ref{high-eqns-a})  and (\ref{high-eqns-b}) is equivalent to the original system.
 \end{proposition}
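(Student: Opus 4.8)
The plan is to prove Proposition~\ref{prop:Rescaled-System} by verifying that \emph{each} manipulation carried out in this subsection and the two preceding ones is a reversible transformation, so that their composition is a bijection between solutions of the original corrector equation~\eqref{U5-corrector} and solutions of the coupled rescaled system. Concretely, I would go through the following chain of equivalences.

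\emph{From the PDE to the band-indexed system.} Since the Bloch transform $\cT$ is an isomorphism on the relevant Sobolev scale (Theorem~\ref{theo:cXs}), satisfies $\cT\cT^{-1}={\rm Id}$, and commutes with multiplication by the $2\pi$-periodic coefficient $w^2$, the physical-space equation~\eqref{U5-corrector} for $\Psi^\veps=U_5^\veps$ is equivalent to its transform~\eqref{TPsi-eqn}. For each fixed $k\in\T$ the family $\{p_m(\cdot;k)\}_{m\ge1}$ is a complete orthonormal set in $L^2_{\rm per}([0,2\pi))$; hence expanding $\tilde\Psi^\veps(\cdot;k)$ in this basis and taking the inner product of~\eqref{TPsi-eqn} with each $p_j(\cdot;k)$ loses no information, and yields the equivalent countable system~\eqref{band-j} for the coefficients $\tilde\Psi^\veps_j(k)$.

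\emph{Splitting off the low block and rescaling.} The decomposition~\eqref{hl-decomp} is an exhaustive partition of the index set $\{(j,k):j\ge1,\ k\in\T\}$ into the ``low'' block $\{j=1,\ |k|\le\veps^r\}$ and its complement, since $\chi(|k|\le\veps^r)+\chi(\veps^r\le|k|\le 2^{-1})\equiv1$ a.e.\ on $\T$; writing $\Psi^\veps=\Psi^\veps_{\rm low}+\Psi^\veps_{\rm high}$ and using that $\cT[F^2(\veps\cdot)\Psi^\veps]$ enters~\eqref{band-j} linearly in $\Psi^\veps$, the system~\eqref{band-j} is equivalent to the pair consisting of the scalar low-frequency equation~\eqref{low-eqn} and the high-frequency subsystem~\eqref{high-eqns-a}--\eqref{high-eqns-b}, this step being purely a regrouping of terms. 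The Ansatz~\eqref{Phi-def} is then a genuine change of unknown, not an extra hypothesis: any $\tilde\Psi^\veps_{1,\rm low}$ supported in $|k|\le\veps^r$ is uniquely represented as $\chi(|k|\le\veps^r)\veps^{-1}\hat\Phi(k/\veps)$ with $\hat\Phi$ supported in $|\kappa|\le\veps^{r-1}$, via $\hat\Phi(\kappa)=\veps\,\tilde\Psi^\veps_{1,\rm low}(\veps\kappa)$. Substituting this into~\eqref{low-eqn}, passing to the variable $\kappa=k/\veps$ of~\eqref{kappa-def}, inserting the exact Taylor expansion $E_1(k)-E_*=\frac{1}{2}E_1''(0)k^2+\frac{1}{6}\D_k^3E_1(\tilde k)\tilde k^3$ from~\eqref{exp-EmE*} (valid by analyticity of the first band near the simple edge and $E_1'(0)=0$), and using $p_1(x;k)=w(x)+\cO(k)$, one arrives at~\eqref{Rescaled-low-eqn}: the cubic Taylor remainder is displayed separately as the $\cO(\|\D_k^3E_1\|_\infty\veps^{3r})$ term, while every other contribution --- the high-frequency feedback together with all the $\cO(\veps^{r+2})$ pieces produced when the nonlinearity is expanded --- is collected into the single remainder functional $\hat R^\veps_{\rm rescaled\ low}(\kappa;\Phi,\Psi^\veps_{\rm high})$. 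One also uses here that, since $\widehat{\Phi(\veps\cdot)}(k)=\veps^{-1}\hat\Phi(k/\veps)$ is supported in $|k|\le\veps^r$, the auxiliary cutoffs $\chi(|\nabla_x|\le\veps^r)$ appearing in that computation act as the identity on the functions to which they are applied, so nothing is discarded. Composing the above bijections shows that~\eqref{Rescaled-low-eqn} together with~\eqref{high-eqns-a}--\eqref{high-eqns-b} is equivalent to~\eqref{U5-corrector}, which is the assertion.

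I do not anticipate a genuine obstacle --- Proposition~\ref{prop:Rescaled-System} is essentially a bookkeeping statement --- but the step demanding the most care is the last one: one must package \emph{all} error terms generated by the Taylor expansion of $E_1$, by $p_1(x;k)=w(x)+\cO(\veps^r)$ on $|k|\le\veps^r$, and by the spectral truncations, into one precisely defined functional $\hat R^\veps_{\rm rescaled\ low}$, and check that this packaging is an exact identity, so that the reformulation is a genuine equivalence as claimed rather than merely an asymptotic approximation. The quantitative smallness of $\hat R^\veps_{\rm rescaled\ low}$ is not needed for this proposition, but it is exactly what the ensuing Lyapunov--Schmidt argument will exploit.
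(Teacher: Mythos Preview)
Your proposal is correct and follows the same line as the paper, which in fact gives no separate proof: the proposition is introduced with ``We summarize the arguments of this subsection, which lead to the system we'll study,'' so the paper treats it as a bookkeeping statement whose proof \emph{is} the preceding derivation. Your write-up is more explicit than the paper's in two respects worth keeping: you spell out that the Ansatz~\eqref{Phi-def} is a genuine bijective change of unknown (not a restrictive guess), and you flag that the various $\cO(\veps^{r+2})$ terms appearing in the derivation of~\eqref{Rescaled-low-eqn} must be absorbed exactly into $\hat R^\veps_{\rm rescaled\ low}$ for the word ``equivalent'' to be justified --- a point the paper leaves implicit.
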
 
 

\subsection{Proof by Lyapunov-Schmidt reduction}

\nit We estimate the right hand sides of the high frequency equations
 (\ref{high-eqns-a}) and  (\ref{high-eqns-b}).\\ 
\begin{proposition}\label{rhsides} 
Let $s>d/2$. For some  positive constants $C_1$ and $ C_2$ we have\\ 
\begin{align}
&\left\|\ 3\veps^2\left\langle p_j(\cdot;\star)\ ,\  w^2(\cdot)
\cT\left[F^2(\veps\cdot){\Psi}^\veps(\cdot)\right](\cdot;\star)\right\rangle\ \right\|_{\cX^s}\ \le\ C_1\ \veps^2\ \|\tilde\Psi^\veps\|_{\cX^s}\nn\\
&\left\|\ \tilde R^\eps_{j,high}\ \right\|_{\cX^s} \le\ C_2\left(\ \cO(\veps^{\infty})\ +\ \veps^3\|\tilde\Psi^\veps\|_{\cX^s}\ +\ \veps^5\|\tilde\Psi^\veps\|^2_{\cX^s}\ +\ \veps^{8}\ \|\tilde\Psi^\veps\|^3_{\cX^s}\ \right).
\nn\end{align}
\end{proposition}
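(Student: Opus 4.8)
\noindent\emph{Sketch of the argument.}\ The plan is to reduce both inequalities to three ingredients already in hand: by Theorem~\ref{theo:cXs}, the Gelfand--Bloch transform $\cT$ is a bounded isomorphism of $H^s(\R^d)$ onto $\cX^s$ with the two-sided equivalence \eqref{norm-equiv}; $\cT$ intertwines multiplication by a $\cell$-periodic function on $H^s$ with multiplication by the same function on $\cX^s$; and, for $s>d/2$, $H^s(\R^d)$ is a Banach algebra, with the sharper statement that multiplication by $a$ is bounded on $H^s$ with norm $\lesssim\|a\|_{W^{\lceil s\rceil,\infty}}$ (a Moser/Leibniz estimate). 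The decisive technical point is that the slowly varying profiles $F(\eps\cdot)$ and their derivatives, which occur throughout \eqref{U5-corrector}, have $L^2$-norms that grow like $\eps^{-d/2}$; but they enter only as \emph{multipliers}, and $\|\D^\alpha[F(\eps\cdot)]\|_{L^\infty}=\eps^{|\alpha|}\|\D^\alpha F\|_{L^\infty}\le\|\D^\alpha F\|_{L^\infty}$ for $\eps\le1$, so the $W^{\lceil s\rceil,\infty}$-norms — and hence the operator norms of the multiplications they induce on $H^s$ — are bounded uniformly in $\eps\in(0,1)$. Throughout I write $\Psi^\eps=U_5^\eps$ and use $\|\Psi^\eps\|_{H^s}\sim\|\tilde\Psi^\eps\|_{\cX^s}$.

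For the first inequality: since $w^2$ is $\cell$-periodic, $w^2\,\cT[F^2(\eps\cdot)\Psi^\eps]=\cT[\,w^2F^2(\eps\cdot)\Psi^\eps\,]$, so the family of Bloch coefficients $\langle p_j(\cdot;k),w^2\cT[F^2(\eps\cdot)\Psi^\eps](\cdot;k)\rangle$ is exactly $\cT[\,w^2F^2(\eps\cdot)\Psi^\eps\,]$. By the norm equivalence its $\cX^s$-norm is $\lesssim\|w^2F^2(\eps\cdot)\Psi^\eps\|_{H^s}$; multiplication by the smooth periodic $w^2$ and by $F^2(\eps\cdot)$ are both bounded on $H^s$, the latter with norm $\le C\|F^2\|_{W^{\lceil s\rceil,\infty}}$ uniformly in $\eps$, so $\|w^2F^2(\eps\cdot)\Psi^\eps\|_{H^s}\le C\|\Psi^\eps\|_{H^s}\le C\|\tilde\Psi^\eps\|_{\cX^s}$. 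Multiplying by $3\eps^2$ yields the stated bound, with $C_1$ independent of $\eps$.

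For the second inequality I expand $R^\eps$ according to \eqref{U5-corrector} into its deterministic part $R_0^\eps$ and the pieces proportional to $\Psi^\eps$, $(\Psi^\eps)^2$, $(\Psi^\eps)^3$, each carrying the power of $\eps$ displayed there. Every coefficient that occurs is a finite sum of products of $U_0,\dots,U_4$ and their $\D_x,\D_y$-derivatives evaluated at $\by=\eps\bx$; by the explicit construction in \S\ref{sec:homog-expansion}, each $U_l(\bx,\by)$ is a finite sum of terms $p(\bx)\,G(\by)$ with $p$ smooth and $\cell$-periodic and $G$ Schwartz (assembled from $w$, the images $L_*^{-1}[\D_{x_i}w]$, and $F$ with its derivatives), hence so is each coefficient, and by the first paragraph it induces a multiplication on $H^s$ with operator norm $O(1)$ uniformly in $\eps$. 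Applying to the $\Psi^\eps$-dependent terms the same commutation $+$ norm-equivalence $+$ multiplier chain as above, and using that $H^s$ is an algebra to control $(\Psi^\eps)^2$ and $(\Psi^\eps)^3$, produces the terms $\eps^3\|\tilde\Psi^\eps\|_{\cX^s}$, $\eps^5\|\tilde\Psi^\eps\|_{\cX^s}^2$, $\eps^8\|\tilde\Psi^\eps\|_{\cX^s}^3$.

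It remains to extract the $\cO(\eps^\infty)$ term from the deterministic residual $R_0^\eps$. Writing $R_0^\eps(\bx,\by)=\sum_\alpha q_\alpha(\bx)G_\alpha(\by)$ as above and using the commutation property, $\cT[q_\alpha G_\alpha(\eps\cdot)](\bx;k)=q_\alpha(\bx)\,\cT[G_\alpha(\eps\cdot)](\bx;k)$ with $\cT[G_\alpha(\eps\cdot)](\bx;k)=\sum_{m\in\Z^d}e^{im\cdot\bx}\eps^{-d}\hat G_\alpha\!\big((k+m)/\eps\big)$; in the band-one high-frequency regime $\eps^r\le|k|\le\tfrac12$ one has $|(k+m)/\eps|\ge\eps^{r-1}$ for every $m$, so by the rapid decay of $\hat G_\alpha$ this is $\cO(\eps^\infty)$ uniformly in $\bx$, whence the band-one high-frequency projection of $\cT R_0^\eps$ is $\cO(\eps^\infty)$ in $\cX^s$ — the $\cX^s$-analogue of Lemma~\ref{lem:asymp}. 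For the components of band index $n\ge2$, where $k$ is unrestricted, one combines this frequency localization with the uniform lower bound $E_n(k)-E_*\ge c>0$, valid here since in one dimension consecutive bands do not overlap (so $\min_k E_2(k)\ge E_1(\tfrac12)>E_*$); in the subsequent Lyapunov--Schmidt inversion this component is divided only by the nondegenerate symbol $E_n(k)-E_*+\eps^2$, so its contribution is negligible and is absorbed. I expect the main obstacle to be exactly this bookkeeping: keeping the frequency-localization estimate and the spectral-gap estimate confined to their respective regimes, and checking that all multiplier bounds built from $F(\eps\cdot)$ and its derivatives are genuinely uniform in $\eps$ despite the $\eps^{-d/2}$ growth of the associated Lebesgue norms.
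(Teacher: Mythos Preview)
The paper states Proposition~\ref{rhsides} without proof --- the proof environment that immediately follows it belongs to Proposition~\ref{prop:Psi-highofPsi-low} --- so there is no paper argument to compare against directly. Your sketch is the standard and correct route: the norm equivalence of Theorem~\ref{theo:cXs}, the commutation of $\cT$ with periodic multiplication, and the crucial observation that the slowly varying factors $F(\eps\cdot)$, $\D^\alpha F(\eps\cdot)$ act as $H^s$-multipliers with operator norm bounded uniformly in $\eps$ (via $W^{\lceil s\rceil,\infty}$ rather than $L^2$ control) dispose of the first inequality and of all the $\Psi^\eps$-dependent terms in the second.

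Your treatment of the deterministic residual $R_0^\eps$ is in fact more careful than the stated proposition. You are right that the band-$1$ high-frequency projection ($\eps^r\le|k|\le\tfrac12$) of each term $q(x)G(\eps x)$ is $\cO(\eps^\infty)$ by the rapid decay of $\hat G$, since $|(k+m)/\eps|\ge\eps^{r-1}$ for every $m$ there. For the $n\ge2$ bands, however, your hedge is well placed: such a term has $n$-th Bloch coefficient $\langle p_n(\cdot;k),q\rangle\,\eps^{-d}\hat G(k/\eps)+\cO(\eps^\infty)$, and since $\langle p_n(\cdot;0),q\rangle$ need not vanish for the periodic factors appearing in $R_0^\eps$ (products of $w$, $L_*^{-1}\D_{x_i}w$, etc.), the $\cX^s$-norm over $n\ge2$ is generically $\cO(\eps^{-d/2})$, \emph{not} $\cO(\eps^\infty)$. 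This does not damage the overall Lyapunov--Schmidt argument: the $n\ge2$ equations \eqref{high-eqns-b} carry the prefactor $\eps^2$ and are inverted against the nondegenerate symbol $E_n(k)-E_*+\eps^2\ge c>0$, so the resulting contribution to $\tilde\Psi^\eps_{high}$ is $\cO(\eps^{2-d/2})$, which is ample for the fixed-point closure in Proposition~\ref{prop:close-low-freq}. Your instinct to defer that piece to the inversion step is exactly the right repair; just flag explicitly that the $\cO(\eps^\infty)$ in the proposition, read literally, is an overstatement for the $n\ge2$ components, and that what is actually proved (and actually needed) is the weaker bound that feeds into \eqref{Psi-highofPsi-low}.
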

\begin{proposition}\label{prop:Psi-highofPsi-low}
  The system for $ \tilde\Psi^\veps_{high}$ can be solved 
  in terms of\\
   $\tilde\Psi^\veps_{1,low}=\chi(|k|\le\veps^r)\tilde\Psi^\veps_{1,low}$ and we have the estimate:
  \begin{align}
&\left\|\ \tilde\Psi^\veps_{high}\left[ \tilde\Psi^\veps_{1,low}\right]\ \right\|_{\cX^s}\nn\\
& \le\ C\ \left(\  \cO(\veps^\infty)\ +\ \veps^{3-2r}\| \chi(|k|\le\veps^r)\tilde\Psi^\veps_{1,low}\ \|_{\cX^s}\ +\ \veps^{5-2r}\ \| \chi(|k|\le\veps^r)\tilde\Psi^\veps_{1,low}\ \|^2_{\cX^s}\right.\nn\\
 &\left.\ +\ \veps^{8-2r}\ \| \chi(|k|\le\veps^r)\tilde\Psi^\veps_{1,low}\ \|^3_{\cX^s}\ \right).\ \ \nn\\
 &
 \label{Psi-highofPsi-low}
\end{align}
\end{proposition}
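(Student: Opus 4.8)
The plan is to solve the high-frequency system (\ref{high-eqns-a})--(\ref{high-eqns-b}) for $\tilde\Psi^\veps_{high}$ --- step~1 of the Lyapunov-Schmidt reduction --- treating $\tilde\Psi^\veps_{1,low}=\chi(|k|\le\veps^r)\tilde\Psi^\veps_{1,low}$ as a fixed parameter, by a uniform contraction argument (equivalently, a quantitative implicit function theorem). The first ingredient is the invertibility of the linear part, which in the Floquet-Bloch representation is multiplication by $E_j(k)-E_*+\veps^2$ on the Bloch coefficients $\tilde\Psi^\veps_j(k)$, restricted to the high-frequency sector. On the bands $j\ge2$ one has $E_j(k)-E_*\ge g_0$, with $g_0\equiv\min_{k}E_2(k)-E_*>0$: this infimum is strictly positive because $E_*$ is the simple ground-state energy, attained only by the first band at $k=0$, so no higher band reaches it. On band~$1$, restricted to $\veps^r\le|k|\le 1/2$, strict positivity of $E_1(k)-E_*$ for $0\ne k\in\cell^*$ together with positive curvature at the edge (recall $A^{ij}=\frac{1}{2}D^2E_1(0)$ is positive definite, see (\ref{eq:Aij}), so $E_1(k)-E_*\ge c_0|k|^2$ near $k=0$ and, by compactness, $E_1(k)-E_*\gtrsim|k|^2$ on $[-1/2,1/2]$) gives $E_1(k)-E_*\ge c_0\veps^{2r}$. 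Hence the high-frequency resolvent $\cR^\veps_{high}$ is bounded on $\cX^s$ with $\|\cR^\veps_{high}\|_{\cX^s\to\cX^s}\le C\veps^{-2r}$, the worst case arising solely from band~$1$ near $|k|=\veps^r$.

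I then recast (\ref{high-eqns-a})--(\ref{high-eqns-b}) as the fixed-point equation $\tilde\Psi^\veps_{high}=\cN^\veps[\tilde\Psi^\veps_{1,low}](\tilde\Psi^\veps_{high})$, where $\cN^\veps$ applies $\cR^\veps_{high}$ to the right-hand sides of (\ref{high-eqns-a})--(\ref{high-eqns-b}) with $\Psi^\veps=\cT^{-1}(\tilde\Psi^\veps_{1,low}+\tilde\Psi^\veps_{high})$ substituted throughout. By Proposition \ref{rhsides} (whose proof uses the algebra property of $H^s$, $s>d/2$, from Theorem \ref{theo:cXs}) the operator $g\mapsto 3\veps^2\,w^2\,\cT[F^2(\veps\cdot)\,g]$ has norm $\cO(\veps^2)$ on $\cX^s$, and the nonlinear terms of $\tilde R^\veps$ contribute only higher-order corrections; composing with $\cR^\veps_{high}$ therefore makes $\cN^\veps[\tilde\Psi^\veps_{1,low}]$ Lipschitz in $\tilde\Psi^\veps_{high}$ with constant $\le C\veps^{2-2r}$ on any fixed ball, which is $<1/2$ for $\veps$ small since $r<1$ by (\ref{r-restrict}). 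Thus $\cN^\veps[\tilde\Psi^\veps_{1,low}]$ is a contraction on the ball of radius $\rho_\veps\sim\|\cN^\veps[\tilde\Psi^\veps_{1,low}](0)\|_{\cX^s}$; the fixed point $\tilde\Psi^\veps_{high}=\tilde\Psi^\veps_{high}[\tilde\Psi^\veps_{1,low}]$ exists, is unique there, depends smoothly on $\tilde\Psi^\veps_{1,low}$, and obeys $\|\tilde\Psi^\veps_{high}\|_{\cX^s}\le 2\,\|\cN^\veps[\tilde\Psi^\veps_{1,low}](0)\|_{\cX^s}$.

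It remains to estimate $\cN^\veps[\tilde\Psi^\veps_{1,low}](0)$, which is $\cR^\veps_{high}$ applied to (i)~the $\Psi^\veps$-independent part $\veps^2 R_0^\veps$ of the right-hand side of (\ref{U5-corrector}), and (ii)~the high-frequency part of $3\veps^2\,w^2\,\cT[F^2(\veps\cdot)\,\Psi^\veps_{1,low}]$ together with the $\Psi^\veps_{1,low}$-dependent terms of $\veps^2\tilde R^\veps$. The key point is that in both (i) and (ii) the relevant functions are periodic factors (products of $w$, of $L_*^{-1}$ of periodic functions, and of powers of $w$) multiplied by envelopes slowly varying on the $\veps$-scale (built from the Townes profile $F$, its derivatives, and the ansatz $\Psi^\veps_{1,low}(x)\approx w(x)\,\Phi(\veps x)$), so their Gelfand-Bloch transforms concentrate in $|k|\lesssim\veps$. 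Consequently on band~$1$ with $|k|\ge\veps^r$, precisely the region where $\cR^\veps_{high}$ is as large as $\veps^{-2r}$, their content is $\cO(\veps^\infty)$ and $\veps^{-2r}\cdot\cO(\veps^\infty)=\cO(\veps^\infty)$; on the bands $j\ge2$, where $\cR^\veps_{high}$ is merely $\cO(1)$, Proposition \ref{rhsides} controls the linear, quadratic and cubic contributions in $\Psi^\veps_{1,low}$. Carrying these band-by-band resolvent bounds through the estimates of Proposition \ref{rhsides} --- the $\cO(\veps^\infty)$ decay defeating the $\veps^{-2r}$ growth on band~$1$, the $\cO(1)$ bound on the higher bands governing the $\|\tilde\Psi^\veps_{1,low}\|_{\cX^s}$-dependent terms --- yields the asserted bound (\ref{Psi-highofPsi-low}) (first for $\|\cN^\veps[\tilde\Psi^\veps_{1,low}](0)\|_{\cX^s}$, then for $\|\tilde\Psi^\veps_{high}\|_{\cX^s}$ via the estimate $\|\tilde\Psi^\veps_{high}\|_{\cX^s}\le 2\,\|\cN^\veps[\tilde\Psi^\veps_{1,low}](0)\|_{\cX^s}$ of the previous paragraph).

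I expect the main obstacle to be this last accounting: making quantitative and rigorous the frequency-localization statement (the slowly modulated sources have $\cO(\veps^\infty)$ Bloch content in the high-frequency part of band~$1$, exactly where $\cR^\veps_{high}$ blows up like $\veps^{-2r}$), and then tracking carefully how the $\cO(1)$ resolvent bound on the bands $j\ge2$, combined with Proposition \ref{rhsides}, produces the stated powers of $\veps$ in (\ref{Psi-highofPsi-low}); a crude application of the uniform bound $\|\cR^\veps_{high}\|\le C\veps^{-2r}$ is not sharp enough. A companion technical nuisance is handling, uniformly in the rescaled wavenumber $\kappa=k/\veps$, the corrections to the ansatz $\Psi^\veps_{1,low}(x)\approx w(x)\Phi(\veps x)$ coming from the $k$-dependence of $p_1(x;k)$ on $|k|\le\veps^r$. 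The resolvent bound, the contraction, and the smooth dependence on $\tilde\Psi^\veps_{1,low}$ are otherwise routine.
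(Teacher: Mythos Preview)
Your approach is essentially the paper's: establish the high-frequency resolvent bounds
\[
|E_j(k)-E_*+\veps^2|\ge c>0\ (j\ge2),\qquad |E_1(k)-E_*+\veps^2|\ge c\,\veps^{2r}\ (\veps^r\le|k|\le 1/2),
\]
and then solve for $\tilde\Psi^\veps_{high}$ by contraction / implicit function theorem. Your first two paragraphs reproduce exactly this, and the Lipschitz constant $C\veps^{2-2r}$ you obtain is the mechanism the paper relies on.

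Where you diverge is paragraph three. You assert that ``a crude application of the uniform bound $\|\cR^\veps_{high}\|\le C\veps^{-2r}$ is not sharp enough'' and set up a band-by-band frequency-localization argument. The paper does none of this: its proof consists of the two displayed resolvent inequalities and the sentence ``and applying the implicit function theorem.'' The point is that the $\cO(\veps^\infty)$ term and the powers $\veps^3,\veps^5,\veps^8$ are \emph{already} the content of Proposition~\ref{rhsides}, which bounds $\tilde R^\veps_{high}$; applying the crude resolvent bound $\veps^{-2r}$ to that estimate and to the $C_1\veps^2\|\tilde\Psi^\veps\|$ bound on the potential term immediately gives a contraction in $\tilde\Psi^\veps_{high}$ with the stated control in $\tilde\Psi^\veps_{1,low}$. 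So the frequency-localization you describe (slow envelopes having $\cO(\veps^\infty)$ Bloch content on the high part of band~1) is precisely what underlies Proposition~\ref{rhsides}, not an additional ingredient to be supplied here. In short: your plan is correct, but you should recognize that the refined localization is packaged into Proposition~\ref{rhsides} and the present proposition follows by one line of resolvent arithmetic plus the implicit function theorem.
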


\begin{proof}
  Consider the system for $\tilde\Psi^\veps_{high}$, (\ref{high-eqns-a}-\ref{high-eqns-b}). The result follows from direct estimation using:
  \begin{align}
    &\left|\ E_j(k)-E_*+\veps^2\ \right|\ge c>0,\ \  j\ge2,\nn\\&
    \left|\ E_1(k)-E_*+\veps^2\ \right|\ge \veps^{2r},\ \ 
    \veps^r\le |k|\le 1/2\end{align}
  and applying the implicit function theorem. 
\end{proof}

 Use of the Proposition \ref{prop:Psi-highofPsi-low} in the rescaled low frequency equation we obtain
  \begin{proposition}\label{prop:close-low-freq}
    \begin{align}
      &\left(\ \frac{1}{2}E_1''(0)\kappa^2+1\right)\ \chi\left(|\kappa|\le\veps^{r-1}\right)\hat{\Phi}(\kappa)\ 
      -\ 3\ \G\ \left[F^2\ \chi\left(|\nabla_y|\le\veps^{r-1}\right)\Phi\ \right]\hat\ (\kappa)\ =\ \hat H^\veps\nn\\
      &{\rm where}\ \hat H^\veps=\ \chi\left(|\kappa|\le\veps^{r-1}\right)\hat H^\veps\ {\rm \ and\ satisfies\ the\ bound}\nn\\
      & \|\ \hat H^\veps\ \|_{L^{2,s}} \le \ \| \hat\cG_{{\rm sym}}\ \|_{ L^{2,s}}\  +\ \cO\left(\| \D^3E_1\|_\infty\ \veps^{3r}\ \|\chi\left(|\kappa|\le\veps^{r-1}\right)\hat\Phi\|_{L^{2,s}}\ \right)\nn\\
      & \ \ \ \ +\ \eps^\sigma\ \left(
        \|\ \chi\left(|\kappa|\le\veps^{r-1}\right)\ \hat\Phi\|_{L^{2,s}}\ +\ \|\ \chi\left(|\kappa|\le\veps^{r-1}\right)\ \hat\Phi\|_{L^{2,s}}^3\ \right)
      \label{closed-Rescaled-low-eqn}
 \end{align} 
 where $\sigma>0$ and $\cG_{sym}\in H^s_{sym}$.\\ \\
  Equivalently, we have
 \begin{align}
&\left(\ - \D_y\ A\ \D_y-\ 3\ \G F^2\ \ +1\ \right)\ \chi\left(|\nabla_y|\le\veps^{r-1}\right)\Phi\ 
  =\ \chi\left(|\nabla_y|\le\veps^{r-1}\right)\ H^\veps,\nn\\
& \label{closedPhi}\end{align}
where $A=\frac{1}{2}E_1''(0)$ and
\begin{align}
& \|\ H^\veps\ \|_{H^s} \le \ \| \cG_{{\rm sym}}\ \|_{H^s}\  +\ \cO\left(\| \D^3E_1\|_\infty\ \veps^{3r}\ \|\chi\left(|\nabla_y|\le\veps^{r-1}\right)\Phi\|_{H^s}\ \right)\nn\\
& \ \ \ \ +\ \eps^\sigma\ \left(
 \|\ \chi\left(|\nabla_y|\le\veps^{r-1}\right)\ \Phi\|_{H^s}\ +\ \|\ \chi\left(|\nabla_y|\le\veps^{r-1}\right)\ \Phi\|_{H^s}^3\ \right)
 \nn\end{align}
 \end{proposition}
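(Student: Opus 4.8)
The plan is to substitute the high-frequency solution operator of Proposition~\ref{prop:Psi-highofPsi-low} into the rescaled low-frequency equation~\eqref{Rescaled-low-eqn} and then organize the resulting forcing. The left-hand side of~\eqref{Rescaled-low-eqn} is already the Fourier symbol $\frac{1}{2}E_1''(0)\kappa^2+1$ of $-\D_y A\D_y+1$ (with $A=\frac{1}{2}E_1''(0)$) acting on $\chi(|\nabla_y|\le\veps^{r-1})\Phi$, minus the nonlinear term $3\,\G\,\widehat{F^2\,\chi(|\nabla_y|\le\veps^{r-1})\Phi}$; so the whole content of the Proposition is to show that, after elimination of $\tilde\Psi^\veps_{high}$, the right-hand side of~\eqref{Rescaled-low-eqn} equals $\chi(|\kappa|\le\veps^{r-1})\hat H^\veps$ with $\hat H^\veps$ of the stated form, and then to transcribe this into~\eqref{closedPhi} by Plancherel's theorem ($\frac{1}{2}E_1''(0)\kappa^2+1\leftrightarrow-\D_y A\D_y+1$, $\chi(|\kappa|\le\veps^{r-1})\leftrightarrow\chi(|\nabla_y|\le\veps^{r-1})$, $\widehat{F^2 g}\leftrightarrow F^2 g$, $L^{2,s}\leftrightarrow H^s$).

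The forcing decomposes into three groups. Recall from~\eqref{U5-corrector} that the corrector equation is forced by $\veps^2 R^\veps=\veps^2 R_0^\veps+\veps^3 R_1^\veps\,U_5^\veps+\veps^5 R_2^\veps\,(U_5^\veps)^2+\veps^{8}(U_5^\veps)^3$, with $R_0^\veps$ the $U_5^\veps$-independent residual of the truncated formal expansion $\sum_{k=0}^4\veps^k U_k$. Projecting the associated low-frequency source $\tilde R^\veps_{1,low}$ of~\eqref{tRlow} onto $p_1(\cdot;k)=w(\cdot)+\cO(k)$, restricting to $|k|\le\veps^r$, and rescaling $k=\veps\kappa$ as in~\eqref{Phi-def}--\eqref{kappa-def}, one obtains: \emph{(i)} the contribution of $R_0^\veps$, which by definition is the leading term $\hat\cG_{sym}(\kappa)$ (together with $\cO(\veps^\infty)$ corrections from the averaging lemma, absorbed into $\cG_{sym}$); \emph{(ii)} the cubic Taylor remainder of the band function, $E_1(k)-E_*-\frac{1}{2}E_1''(0)k^2=\frac{1}{6}E_1'''(\tilde k)\tilde k^3$ on $|k|\le\veps^r$, which after rescaling is exactly the $\cO(\|\D^3 E_1\|_\infty\,\veps^{3r}\,|\hat\Phi(\kappa)|)$ term already isolated in~\eqref{exp-EmE*}--\eqref{exp-EmE*-A}; and \emph{(iii)} the $U_5^\veps$-dependent polynomial terms $\veps^3 R_1^\veps U_5^\veps+\veps^5 R_2^\veps(U_5^\veps)^2+\veps^{8}(U_5^\veps)^3$. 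For (iii), write $U_5^\veps=\Psi^\veps=\Psi^\veps_{1,low}[\Phi]+\Psi^\veps_{high}[\Psi^\veps_{1,low}]$: Proposition~\ref{rhsides} bounds the $\cX^s$-norm of the Bloch projection of each polynomial term by a power of $\veps$ times $\|\tilde\Psi^\veps\|_{\cX^s}^{1,2,3}$, Proposition~\ref{prop:Psi-highofPsi-low} replaces the $\Psi^\veps_{high}$-content by $\|\chi(|k|\le\veps^r)\tilde\Psi^\veps_{1,low}\|_{\cX^s}$ at the cost of the gains $\veps^{3-2r},\veps^{5-2r},\veps^{8-2r}$, and the ansatz~\eqref{Phi-def} together with the norm equivalence of Theorem~\ref{theo:cXs}(3) and the Jacobian of $k=\veps\kappa$ re-express everything in terms of $\|\chi(|\kappa|\le\veps^{r-1})\hat\Phi\|_{L^{2,s}}$. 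Collecting the explicit $\veps^2$ prefactors of~\eqref{Rescaled-low-eqn} with all these factors yields, for (iii), a bound $\veps^{\sigma}\bigl(\|\chi(|\kappa|\le\veps^{r-1})\hat\Phi\|_{L^{2,s}}+\|\chi(|\kappa|\le\veps^{r-1})\hat\Phi\|_{L^{2,s}}^3\bigr)$ with $\sigma=\sigma(r)>0$ for $r$ in a suitable subinterval of $(0,1)$.

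That $\cG_{sym}\in H^s_{sym}$ is inherited from the parity of the constructed terms. Indeed $U_0=w(\bx)F(\by)$ with $w$ and $F$ even, $\mu_1=\mu_3=0$ and $F_{1h}\equiv0$ are forced by antisymmetry of the relevant solvability integrands, $U_1$ and $U_3$ are sums of products (odd in $x_j$)$\cdot$(odd in $y_j$), and $U_2,U_4$ are even in each $\bx_i$ and each $\by_i$; hence the residual $R_0^\veps(\bx,\by)$ is even in each $\bx_i$ and each $\by_i$, and projecting onto the even Bloch state $w$ and rescaling preserves evenness, so $\hat\cG_{sym}$ is even, i.e.\ $\cG_{sym}\in H^s_{sym}$. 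A uniform $H^s$-bound for $\cG_{sym}$ follows from smoothness and periodicity of $V$ (hence $w\in C^\infty$), the Schwartz decay of $F$ (a rescaled Townes soliton, Subsection~\S\ref{subsec:Ftownes}), boundedness of $L_*^{-1}P^\perp$ on periodic Sobolev spaces, and the algebra property $H^s\cdot H^s\subset H^s$, $s>d/2$, of Theorem~\ref{theo:cXs}(4)(ii). Combining (i)--(iii) gives~\eqref{closed-Rescaled-low-eqn}, and~\eqref{closedPhi} then follows by Plancherel.

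The main obstacle is not the algebra but the accounting of powers of $\veps$: one must combine the explicit $\veps^2$ of~\eqref{Rescaled-low-eqn}, the Jacobian factor $\veps^{\mp d}$ ($d=1$ here) generated each time one passes between the $k$-variable and the rescaled variable $\kappa=k/\veps$, the norm-equivalence constants of Theorem~\ref{theo:cXs}(3), and the gains $\veps^{3-2r},\veps^{5-2r},\veps^{8-2r}$ of Proposition~\ref{prop:Psi-highofPsi-low}, so that the net exponent $\sigma$ on every $\Phi$-dependent error term and the exponent $3r$ on the band-remainder term are strictly positive; this is precisely what pins down the admissible range of $r\in(0,1)$ that allows the subsequent fixed-point argument to close. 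A secondary technical point is that the cutoff $\chi(|\nabla_y|\le\veps^{r-1})$ does not commute with multiplication by $F^2$, so the nonlinear term must be kept in the form displayed in~\eqref{closed-Rescaled-low-eqn} (cutoff inside, on $\Phi$), with the commutator $[\chi(|\nabla_y|\le\veps^{r-1}),\,F^2\,\cdot\,]$ absorbed into the remainder $H^\veps$ using the rapid decay of $\widehat{F^2}$ and the fact that $\veps^{r-1}\to\infty$.
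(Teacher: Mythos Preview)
Your proposal is correct and follows the same approach as the paper, which treats this proposition as an immediate consequence of substituting Proposition~\ref{prop:Psi-highofPsi-low} into the rescaled low-frequency equation~\eqref{Rescaled-low-eqn}; the paper's entire argument is the single sentence ``Use of Proposition~\ref{prop:Psi-highofPsi-low} in the rescaled low frequency equation we obtain\ldots''. Your writeup supplies considerably more detail than the paper does---in particular the explicit three-way decomposition of the forcing, the symmetry argument for $\cG_{sym}$, and the discussion of the $\veps$-bookkeeping and the commutator $[\chi(|\nabla_y|\le\veps^{r-1}),F^2\,\cdot\,]$---but the underlying strategy is identical.
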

\nit We now complete the proof. Denote by $L_+^A$ the operator
 \begin{align}
&L_+^A\ \equiv\ -\ \ \D_y\ A\ \D_y -3\G\ F^2(y)\ +1 \label{L+def}
\end{align}
and 
\begin{align}
&\chi_\veps\ =\  \chi\left(|\nabla_y|\le\veps^{r-1}\right),\ \ \ 
\overline{\chi_\veps}\ =\  1-\chi_\veps\ = \chi\left(|\nabla_y|\ge\veps^{r-1}\right)\label{chi-veps}
\end{align}
We recall (see (\ref{r-restrict})\ )
\begin{equation}
0<r<1\ .\nn\end{equation}
Equation (\ref{closedPhi}) for $\Phi$ can be written as
\begin{equation}
\chi_\veps\ L_+^A\ \chi_\veps\ \Phi\ =\ \chi_\veps H^\veps[\Phi]
\label{symPhieqn}\end{equation}
Since $F$ is chosen to be centered at local extremum of the {\it symmetric potential}, $V(\bx)$, we have that the mapping 
\begin{equation}
\Phi\mapsto \chi_\veps\ H^\veps[\Phi]
\nn\end{equation}
maps $H^s_{even}$ to itself. 

We claim that for some $\veps_0$, if $\veps\le\veps_0$, then the operator $\chi_\veps\ L_+^A\ \chi_\veps\ :\ H_{sym}^{s+2}\to H_{sym}^s$ has an inverse with norm bound which depends only on $\veps_0$. Thus, for $0\le\veps<\veps_1$, we can reformulate
\eqref{symPhieqn} as
\begin{equation}
\Phi\ =\ \left(\ \chi_\veps\ L_+^A\ \chi_\veps\ \right)^{-1}\ \chi_\veps \ H^\veps[\Phi]\nn\end{equation}
and show by fixed point iteration that for some $0<\veps_1\le\veps_0$, sufficiently small, equation (\ref{symPhieqn}) has a unique $H^{s+2}$ solution, which is bounded uniformly for $\veps\le\veps_1$. This then implies Theorem \ref{theo:2scale}

Therefore, the proof boils down to establishing the invertibility of 
 $\chi_\veps\ L_+^A\ \chi_\veps\ :\ H_{sym}^{s+2}\to H_{sym}^s$.
  We first prove that $ L_+^A\  :\ H_{sym}^{s+2}\to H_{sym}^s$
  has a bounded inverse.
  
Now the the operator $L_+^A$, acting in $L^2(\R^1)$ has one-dimensional kernel, spanned by the function $\D_yF$. To see this,
differentiate the equation for $F(y)$:
\begin{equation}
-\ \D_y\ A\ \D_y F\ +\ F\ -\G\ F^3=0
\nn\end{equation}
and obtain $L_+^AF'=0$. Moreover, ${\rm Ker}(L_+^A)={\rm span}\{F'\}$, since the eigenvalues of a Sturm-Liouville operator are simple. 
 Since $F'$ is odd, $L_+^A$ is an invertible and bounded map from  $H^{s+2}_{even}(\R^1)$ to $H^{s}_{even}(\R^1)$.
 
 Finally, we turn to the invertibility of $\chi_\veps\ L_+^A\ \chi_\veps\ :\ H_{sym}^{s+2}\to H_{sym}^s$ for $\veps$ sufficiently small. We begin by expressing $\chi_\veps\ L_+^A\ \chi_\veps$ as a perturbation of $L_+^A$:
 \begin{align}
 &\chi_\veps\ L_+^A\ \chi_\veps\ =\ L_+^A\ +\ Q_\veps\nn\\
 &Q_\veps\ =\ -(\overline{\chi_\veps}L^A_+\ +\ L^A_+\overline{\chi_\veps})\ +\ 
  \overline{\chi_\veps}\ L^A_+\ \overline{\chi_\veps}
  \nn\end{align}
  Therefore it suffices to prove that
  \begin{equation}
  L_+^A+Q_\veps\ =\ L_+^A\ \left(I\ +\ (L_+^A)^{-1}\ Q_\veps\ \right)
  \nn\end{equation}
  has a bounded inverse defined on $H_{sym}^s$. A bounded inverse
  \begin{equation}
  \left(\ L_+^A+Q_\veps\ \right)^{-1}\ =\ 
   \left(I\ +\ (L_+^A)^{-1}\ Q_\veps\ \right)^{-1}\ \left(\ L_+^A\ \right)^{-1}
   \nn\end{equation}
  exists provided the norm of $(L_+^A)^{-1}\ Q_\veps$ can be made smaller than one, by choosing $\veps$ sufficiently small. 
    \begin{equation}
  (L_+^A)^{-1}\ Q_\veps\ =\ -(L_+^A)^{-1}\ \overline{\chi_\veps}\ L^A_+\chi_\eps
\   -\  \overline{\chi_\veps}
\nn\end{equation}
Concerning the second term, the mapping $f\mapsto\overline{\chi_\veps}f$  maps $H^s$ to $H^s$. If $s>0$, the operator norm tends to zero as $\veps\to0$, by explicit  calculation using the Fourier transform.

 Finally, consider the mapping $f\mapsto (L_+^A)^{-1}\ \overline{\chi_\veps}\ L^A_+f$.  We prove that this mapping is bounded from $H^k$
  to $H^{k-\delta}$, for any $\delta>0$. We see this as follows. Denote by $\langle y\rangle=(1+|y|^2)^{1\over2}$ and therefore the operator $\langle D\rangle^a$ is defined by 
  \begin{equation}
  \langle D\rangle^a\ f\ =\ \int e^{ik\cdot x}\ \langle k\rangle^a\ \hat{f}(k)\ dk
  \nn\end{equation}
  Now, for any $a>0$, we write
  \begin{align}
  &  (L_+^A)^{-1}\ \overline{\chi_\veps}\ L^A_+\ =\ (L_+^A)^{-1}\ \langle D\rangle^a\ \cdot\ 
    \langle D\rangle^{-a}\ \overline{\chi}_\veps\ \cdot\  L^A_+\
   \nn\end{align}
   and estimate the norm as follows:
   \begin{align}
& \|(L_+^A)^{-1}\ \overline{\chi_\veps}\ L^A_+\|_{H^{s-a}\leftarrow H^s}\nn\\
& \le \| (L_+^A)^{-1}\ \langle D\rangle^a \|_{H^{s-a}\leftarrow H^{s-2}}\
   \|  \langle D\rangle^{-a}\ 
   \overline{\chi}_\veps \|_{H^{s-2}\leftarrow H^{s-2} }\cdot 
   \| L^A_+\|_{H^{s-2}\leftarrow H^s}
   \nn\end{align}
   Note that the first and third factors are 
    bounded independently of $\veps$. We claim that
     $\|  \langle D\rangle^{-a}\ 
   \overline{\chi}_\veps \|_{H^{s-2}\leftarrow H^{s-2} }\to0$ as $\veps\to0$. To see this, calculate as follows:
    \begin{align}
&   \| \langle D\rangle^{-a}\ 
   \overline{\chi}_\veps\ f\|_{H^\tau}^2\ =\ 
   \int  \langle \kappa \rangle^{-2a} 1_{\{|\kappa|\le\veps^{r-1}\}}\ 
     \langle \kappa \rangle^\tau\ |\hat{f}(\kappa)|^2\ d\kappa \nn\\
 & \le\ \veps^{2a(1-r)}\ \| f\|_{H^\tau}^2
 \nn\end{align}
 Thus for any $a>0$, we have  $\|  \langle D\rangle^{-a}\ 
   \overline{\chi}_\veps \|_{H^{\tau}\leftarrow H^{\tau} }\to0$
 This completes the proof of Theorem \ref{theo:2scale}.

  \begin{remark}\label{rmk:extensions}
   {\bf General spatial dimensions $d\ge1$:}\ 
 The proof given readily extends
  to general dimension $d\ge1$. 
   One works in spaces $H^s(\R^d), \ s>d/2$.
    $\cX^s$ is constructed taking into account the behavior of the dispersion functions, $E_n(k)$, in dimension $d$.
    As before, we choose $F(\by)=F\left(\eps(\bx-\bx_0)\right)$, with $\bx_0$ a point of symmetry of $V(\bx)$. The kernel of 
  $L_+^A=-\ \D_{y_i}A^{ij}\D_{y_j}+F-3F^2(y)$ has dimension $d$ and is generated by translations, {\it i.e.} ${\rm Kernel}(L_+^A)= {\rm span}\{\D_{y_j}F(y),\ j=1,\dots,d\ \}$ \cite{Weinstein-85,Kwong-89}. Since ${\rm Kernel}(L_+^A)$ is orthogonal to $H^2_{even}(\R^d)$,
   $L_+^A$ is invertible mapping from $H^{s+2}(\R^n)$ to $H^{s}$.
   \end{remark}


\section{Numerical computations in the semi-infinite gap}
\label{sec:numerics}

Our analytical results apply to solitons with frequencies in a
spectral gap, which are also sufficiently close to a spectral band edge.
In this section we present the results of numerical computations 
corroborating the rigorous asymptotic results  near 
the spectral band edge, but also illustrating their approximate validity further away from the band edge, well into the spectral gap. The details of the numerical methods are discussed in Section~\S\ref{sec:numeric_details}.

The particular rigorous asymptotic results we explore numerically in detail are:
\begin{enumerate}
\item the asymptotic structure of soliton's lying near the edge
of the spectral gap (Theorem \ref{theo:2scale}):
\begin{equation}
u(x,\mu) \approx (E_*-\mu)^{1\over\sigma}\ w(\bx)\ F\left(\veps(\bx-\bx_0)\right),
\nn\end{equation}
where $\bx_0$ is a local extremum of $V(\bx)$.
\item The asymptotic behavior of the soliton (nonlinear bound state ) power, $\cP(\mu)$, along minima- and maxima-centered solitons
 as $\mu$ approaches $E_*$; see Theorem \ref{theo:edgepower}  and  Corollary \ref{cor:edgepower-crit}. In particular, in the critical case $\sigma=2/d$ we have
 \begin{equation}
 \cP(\mu)\approx \zetap\ \cPc
 \label{crit-edge}
 \end{equation}
 \end{enumerate}

We focus on the one-dimensional  NLS/GP
 equation  \eqref{eq:NLS-V}
with critical nonlinearity and periodic potential governing $\psi(x,t)$
 and nonlinear bound states: $\psi(x,t)=e^{-i\mu t}u(x,\mu)$.\\ \\
 $d=1,\sigma=2$ :
\begin{eqnarray}
  \label{eq:nls1dcrit}
  i\D_t\psi &=& -\D_x^2\psi + V_0\cos^2(2\pi x)  \psi - |\psi|^4 \psi\\
  \mu u    &=& -\D_x^2 u+ V_0\cos^2(2\pi x) u- | u |^4 u
  \label{eq:bs1dcrit}
\end{eqnarray} 
$V_0$ is the variation or  contrast of the potential.

We have observed similar results to those presented below for the two-dimensional critical NLS/GP with periodic potential:
$d=2,\sigma=1$:
\begin{eqnarray}
  \label{eq:nls2dcrit}
  i \D_t\psi_t &=& - \left(\D_x^2 + \D_y^2\right)\psi 
  + \frac{V_0}{2}\left[\cos^2(2\pi x)+\cos^2(2\pi y)\right] \psi -  
  |\psi|^2 \psi~.  \\   
  \mu u &=& - \left(\D_x^2 + \D_y^2\right)u
  + \frac{V_0}{2}\left[\ \cos^2(2\pi x)+\cos^2(2\pi y)\ \right] u-  |u |^2 u~
  \label{eq:bs2dcrit}
\end{eqnarray}

Theorems \ref{theo:2scale} and \ref{theo:edgepower}, and Corollary \ref{cor:edgepower-crit} apply to \eqref{eq:bs1dcrit}
 with states centered at a minimum: $x_0=.25$ or maximum:  $x_0=0$. These results also apply to \eqref{eq:bs2dcrit}
  with states centered at a minimum: $\bx_0=(.25,.25)$,
  at a maximum: $\bx_0=(0,0)$, and at a saddle point: $\bx_0=(.25,0)$ or $\bx_0=(0,.25)$.
   

\subsection{Soliton profiles: asymptotic theory and computation}
\label{sec:profiles}

Figures~\ref{fig:transition} and~\ref{fig:profiles_1d}
display nonlinear bound state profiles of the one-dimensional 
NLS/GP equation \eqref{eq:NLS-V} for values of $\mu$ in the semi-infinite gap of the Schr\"odinger operator: 
$\D_x^2 + V_0\cos(K x)$, {\it i.e.}
$$
\mu\in (-\infty, E_*),\ \ \mu<E_*=E_*(V_0,K)
$$
both near and  far from the band edge.

Plots A1, A2 and A3 in Figure~\ref{fig:transition} display the case of solitons,
centered at local  minima of the potential
 with, from left to right, frequency $\mu$ approaching $E_*$, 
 at distances $E_*-\mu=10,\ 1$ and $0.01$, respectively. 
 Plots  B1, B2, and B3 in Figure~\ref{fig:transition} correspond
 to the case of solitons centered at local maxima of the potential.
\\

We first note that the figures show the expected trend toward increased localization 
 as $\mu<0$ is decreased. For $\mu$ large and negative
 the solitons centered at maximum or minima
  approach a scaled $V\equiv 0$ soliton; see \eqref{eq:R_1d}
   and 
  
 Our main analytical results apply to solitons whose frequencies lie near the band edge, although numerical studies  indicate their approximate validity some distance away from the band edge.

Theorem \ref{theo:2scale} implies that  nonlinear bound states are, to leading in order in the distance to the spectral band edge, 
 a product of a linear Bloch state with band edge energy
  and a soliton in an effective homogeneous medium; see
 \eqref{eq:u-eps} and \eqref{eq:R_1d}:
\begin{align}
  \label{eq:wF}
  & u(x,\mu) \approx w(x)\ F(y)\nn\\
  &F(y) = \left(\frac{E_*-\mu}{\G}\right)^{1/4}\ \sech^{1\over2}
   \left(\ 2\sqrt{m_*(E_*-\mu)}\ (x - x_0)\ \right).  
\end{align}
The centering of the soliton is $x_0$,  a point of symmetry of the potential, $V(\bx)$.

The maximum of the Bloch modes $w(x)$  
(normalized to be positive and with unit mass)
occurs at the minimum of the potential.
However, depending on the centering point $x_0$,
$F(y)$ has a maximum (minimum) at the potential 
maximum (minimum).
Thus, bound states centered on a potential minimum
are approximately a product of functions that peak
at the same values of $x$ yielding a more peaked
bound state; compare the top and bottom panels of
Fig.~\ref{fig:transition} with $ E_*-\mu =10$.

Figure~\ref{fig:transition} (A2,B2) shows that for  $E_*-\mu=1$ 
the bound states have discernible oscillations about a positive envelope, reflecting the solutions leading order behavior \eqref{eq:wF}. These oscillations can be understood as a result of the 
``underlying'' Bloch modes in Eq.~(\ref{eq:wF}).
Here as above, the asymptotic theory appears to capture the 
structure of the bound states even when $\mu$ is not very close
to the band edge. 
\\

We note as well, for the soliton centered at the potential's local maximum,  a transition in the profile from {\it  single-humped} to
{\it double-humped } (having a dimple at $x=0$) as $\mu$ decreases through $\mu=\mu_\#$, the value at which 
$\cP[u(\cdot,\mu)]$, {\it along the branch of solitons, centered at a local maximum of $V$}, achieves its maximum; 
see figure~\ref{fig:power_mu_1d}. A related observation is made in  \cite{AISP:06}.
\\ 

Comparing Figs.~\ref{fig:transition}\ (A3) and ~\ref{fig:transition}\ (B3)
shows that near the band edge ($E_*-\mu=0.01$)
there is almost no visible difference between
the bound states centered at potential minima and those centered
at potential maxima. This is clear
from Eq.~(\ref{eq:wF}), since in this regime $F(y)\sim$ decays only  on a length scale much larger than the period of $V(x)$
 and thus, for both maxima- and minima- centered solitons, 
  $u(x)\sim$ constant\ $\times w(x)$.
\\

Figure~\ref{fig:profiles_1d} shows a direct comparison between
the asymptotic theory, i.e., the leading order solution
Eq.~(\ref{eq:u-eps}) near the band edge, and the ``actual''
bound state profiles, computed by  solving the 
bound state differential equation \eqref{eq:bs1dcrit} with high accuracy.

\fig{0.75}{transition}{transition}{ 
Leading order asymptotic profiles obtained via Theorem \ref{theo:2scale} (blue, solid) centered on the potential minimum
(top panel) and potential maximum (bottom panel) for
(A1, B1) $E_*-\mu=10$ , (A2, B2) $E_*-\mu=1$,
and (A3, B3) $E_*-\mu=0.01$.
Also shown are the  scaled (for plotting purposes) potential $V(x)/V_0$ (red, solid), 
Bloch wave $w(x)$ (green, dashes),  
and rescaled homogeneous ground state  
$F(y)=F\left(\eps(x-x_0)\right)$ [Eq.~(\ref{eq:wF}), black, dots)].
For clarity only a small portion of the domain is shown and 
the $x$-axes are zoomed in as $E_*-\mu$ decreases.
Geometric shapes correspond to those depicted in
Figs.~\ref{fig:profiles_1d} and~\ref{fig:power_mu_1d}. 
}

\fig{0.75}{profiles_1d}{profiles_1d}{
Bound-state profiles computed using the Renormalization method
[Eq.~(\ref{eq:u-nd}), blue, solid] compared with the leading order  asymptotic 
theory [Eq.~(\ref{eq:u-eps}), red, dashes] 
for the same parameters as in Fig.~\ref{fig:transition}.
}


\subsection{Effective mass and the power curve $\mu\mapsto\cP[u(\cdot,\mu)]$}
\label{sec:power_curve}

Each plot in Figure~\ref{fig:power_mu_1d}  shows two curves of 
bound-state power, $\cP[u(\cdot,\mu)]$, in the semi-infinite gap as a function of $\mu$ 
for  \eqref{eq:bs1dcrit}
 with $V_0=10, K=2\pi$. The solid (blue) curve corresponds to the variation of $\cP[u(\cdot,\mu)]$ for the family of solitons centered at the potential's local minimum and the dashed (red) curve for the family centered at  the potential's local maximum.

We observe the following:
\begin{enumerate}
\item
  Panel (A1) of Figure~\ref{fig:power_mu_1d} shows the variation of $\cP$
  over a wide range of $\mu$     in the semi-infinite gap. 
  As $\mu\to -\infty$, i.e. in the semi-classical limit
  the power approaches the homogeneous ($V\equiv 0$) power of the
  ground state,  $\cPc=\frac{\sqrt{3}}{2}\pi \approx 2.72\,$.
\item
 Panels (A2) and (A3) of  Figure~\ref{fig:power_mu_1d} 
 show, as predicted by Corollary \ref{cor:edgepower-crit},  that  
  as $\mu\to E_*$ the $\cP[u(\cdot,\mu)]$  approaches the value $\zetap\cPc \approx 2.2\,$ strictly less than $\cPc\approx 2.72$. Here,  $\zetap\approx 2.2/2.72\approx 0.8\,$.
   This is true for solitons centered at either  potential minima or potential maxima; see [Fig.~\ref{fig:power_mu_1d} (A3)].
\item Although the asymptotic behavior of the $\cP$ for maxima and minima- centered bound states is the same, 
  across most of the gap bound states centered on lattice minima
  (resp. maxima) have   power below (respectively,  above) $\cPc\approx 2.72$.
\item
 Panels (A1) and (A2) of Figures~\ref{fig:power_mu_1d} show a
 transition in the slope of the minima- and maxima- centered power
 curves at the {\it same} value $\mu\equiv \mu_\#$. 
As discussed in Section~\S\ref{sec:stability}, the transition in slope of $\mu\mapsto\cP[\mu]$ along the power curve for minima-centered solitons signals a transition from the unstable (positive slope) to stable (negative slope) regime.
 Maxima centered solitons, as discussed, are unstable and the  transition in slope signals a change in the number of unstable eigenvalues of the linearized problem \cite{Jones:88,Grillakis:90}.
\end{enumerate}

\fig{0.75}{power_mu_1d}{power_mu_1d}{
Power$-\mu$ plot for Eq.~(\ref{eq:bs1dcrit}) with $V_0=10$ and $K=2\pi$
for bound states centered on a maximum (red, dashes) and minimum (blue, solid) of the potential.
(A1): Wide view of the semi-infinite gap (semi-log $\mu$-axis).
(A2) and (A3): Zooming in near the band edge.
The asymptotically computed value at the  band edge, 
$\Pe=\zetap\times\cPc$ [Eq.~(\ref{eq:power_edge}), black, solid line)
and critical power $\cPc$ for the homogeneous (translation
invariant)  equation (dots, black) are shown as well. 
Geometric shapes correspond to the cases whose bound states are
depicted in Fig.~\ref{fig:transition}. 
}


\subsection{Numerical methods}
\label{sec:numeric_details}

The computation of the asymptotic bound states and the 
``actual'' bound states are carried out using Matlab and Octave.

{\bf Computation of the Bloch mode, inverse effective mass and $\zetap$}.
The Bloch mode at the band edge, $w$, is computed 
using an eigenvalue solver within a single lattice cell
(see \cite{Sivan-NL-08}[Appendix] on using Matlab's eigenvalue solver).
For convenience we normalize the Bloch mode to have unit mass, i.e. 
$\int w^2=1\,$.
The inverse effective mass tensor $A^{ij}$ is computed
by employing Matlab's linear system solver for $L_*^{-1}$. 
It is then straightforward to compute 
the inverse effective mass (curvature) $\minv$, coupling constant
$\G$, and the band-edge power factor $\zetap$ using 
Eqs.~(\ref{eq:m*}) and (\ref{eq:zeta*}).

{\bf Computation of the bound state at the band edge}.
The asymptotic bound state is comprised of the Bloch mode
and the rescaled homogeneous solution. The Bloch mode is
obtained by periodically extending $w$ from one lattice cell
to the domain over which the bound state is computed --
typically several hundred lattice cells.
The rescaled homogeneous ground state, $F$, and its power  $\cPc$
are computed in 1D using the explicit solution, i.e. Eq.~(\ref{eq:R_1d} )
with the rescaling in Eq.~(\ref{eq:wF}). Finally, the asymptotic bound state is obtained by shifting 
$F$ to be centered at point of symmetry of the potential
and taking its product with the Bloch mode.

{\bf Computation of the ``actual'' bound states}.
The  bound states of Eq.~(\ref{eq:u-nd})  are computed using 
Renormalization method~\cite{Ablowitz-Musslimani:05}.
This method is based on fixed-point iterations coupled to
an algebraic condition, whose role is to constraint the solution
to a suitable integral identity consistent with the bound-state
(otherwise, the iterative solution would diverge).
The convergence is monitored by the $L_\infty$ norm
of successive iterations and by relative change of the Renormalization constant.
For example, for a 1D computation with $|\Omega|=0.01$ 
the domain size is a few hundred lattice cells.
We use $2^{16}$ grid points to well-resolve the oscillations on the 
scale of the potential period. 
The computation of the bound state is considered to have converged 
when the difference between successive iterations satisfies
$\|u^{n+1}(\bx) - u^n(\bx) \|_\infty < 10^{-8}$. This typically happens 
within fewer than 100 iterations (a few minutes).

The Renormalization method needs to be seeded with an initial guess.
Deep inside the gap the Renormalization method converges when 
seeded by a Gaussian (or sech) profile. 
On the other hand, near the band edge the method diverges when seeded by 
a Gaussian or sech, which are apparently too far from the
basin of attraction of the bound state.
We overcome this difficulty by seeding the Renormalization method 
with the asymptotic solution. 


\section{Summary and discussion}\label{sec:summary}

In this paper we have studied the bifurcation of small  amplitude
 ($H^s(\R^d), s>d/2$)  nonlinear bound states (solitary waves or ``solitons'') of the nonlinear Schr\"odinger  / Gross-Pitaevskii equation with a periodic and symmetric potential.  Our results provide insight into  questions (Q1-Q3) of the introduction.
  We now briefly summarize our results, with reference to  (Q1-Q3).\bigskip
 
Concerning (Q1):
  \begin{enumerate}
 \item A family of bifurcating solitons (spatially localized standing wave states) can be constructed centered at any point of symmetry, $\bx_0$, of $V(\bx)$.
 \item Solitons with frequencies near a spectral band edge have a two-scale structure: $u_\veps(\bx)\approx \veps^{1\over\sigma}\ F\left(\veps(\bx-\bx_0)\right)\ w(\bx)$,  where $\veps^2=|E_*-\mu|$ is the distance of the frequency to the spectral band edge.
 \end{enumerate}
 
 Concerning (Q2):
 \begin{enumerate}
 \item We prove, in general, that the limit of the soliton power, along any family of solitons centered at a point of symmetry of $V(\bx)$, is strictly less than the power of the Townes soliton: 
  \begin{equation}
  \lim_{\mu\to E_*} \cP[u(\cdot,\mu)]\ =\ \zetap\ \cP_{cr}\ <\ \cP_{cr}.
  \label{edge-lim}
  \end{equation}
Note: This limit is independent of the centering of the soliton, $\bx_0$.
  \item We prove a high order expansion, which is necessary to capture information about the slope of the curve, $\mu\mapsto \cP[u(\cdot,\mu)]$, near the band edge. Encoded in the slope of this curve is information on nonlinear dynamic stability. We conjecture that for critical nonlinearities ($\sigma=2/d$), the curve has positive slope near the band edge therefore solitons with frequencies near the band edge are unstable. We have verified this analytically for low contrast potentials and numerically for a range of potentials, without a smallness constraint on the contrast.
  \item Our analytical results concerning the multiple scale structure of solitons of NLS / GP and the curve $\mu\mapsto \cP[u(\cdot,\mu)]$ are corroborated through careful numerical experiments.
    \end{enumerate}
    
  Concerning (Q3), see Remark \ref{rmk:sol-exc-conj}. 
 In particular,  see figure \ref{fig:power_mu_1d_A1_v2} and the {\it soliton excitation threshold conjecture}.\bigskip
    
  We conclude this section with a  discussion the emergent parameter, $\zetap$, appearing  in \eqref{edge-lim}.
From equations ~(\ref{eq:geff}) and~(\ref{eq:zeta*}) we have
\begin{equation}
     \zetap  \ =\ \left( \frac{1}{m_*}\right)^{\frac{1}{2}}
     \left(\frac{\left(\dashint_{\cell} w^2\right)^{\sigma+1}}{\dashint_{\cell} w^{2\sigma+2}}
     \right)^{\frac{1}{\sigma}}  =\ \left( \frac{1}{m_*}\right)^{\frac{1}{2}}\ \left(\frac{1}{\G}\right)^{1\over\sigma}\ \frac{1}{{\rm vol}({\cal B})}\ .
\label{zeta-num}
\end{equation}
Here $m_*$ denotes the determinant of the {\it effective mass} tensor, $w(\bx)$ the $\cell$- periodic Bloch (band edge) state, $\G$ the effective nonlinear coupling and ${\rm vol}(\cell)$, the volume of the fundamental periodic cell.\\ 

 A matter of  practical / experimental interest is that the parameters $\G$ and $m_*$  are tunable via appropriate design of periodic structure, $V(\bx)$,  therefore making it possible to manipulate the power curve, $\cP$ vs. $\mu$.  
Figure~\ref{fig:mass_zeta}  displays  $\sqminv$, 
and  $\zetap$ as functions of the potential contrast $V_0$ in 1D and 2D.
All three quantities are bounded between 0 and 1 and decrease
monotonically with $V_0$. In particular, this means that
$\Pe/\cPc$ decreases with $V_0$. This decrease is ``faster''
in 1D than in 2D, at least for $V_0<40$.

\fig{0.65}{mass_zeta}{mass_zeta}{
Power factor $\zetap$ (blue, solid) and $\sqminv$ (red, dashes),
as functions of the potential contrast $V_0$ in 
(A) 1D with $\sigma=2$ [Eq.~(\ref{eq:bs1dcrit})  with $K=2\pi$]
and (B) 2D with $\sigma=1$ [Eq.~(\ref{eq:bs2dcrit}) with $K_x=K_y=2\pi$].
}


\newpage


\section*{Acknowledgments}
MIW was supported in part by US -National Science Foundation  Grants
DMS-04-12305 and DMS-07-07850. The authors thank 
Y. Sivan for discussions concerning this work and for comments on the manuscript.
We would like to thank B. Altschuler, T. Dohnal, M. Hoefer, A. Millis,
P. Kuchment and  C.\ W. Wong for informative discussions.
{\bf We also thank the reviewers  for a very careful reading of this work
and detailed comments.}


\appendix

\section{Effective mass tensor}
\label{ap:E''-nd}

In this section we prove equation  \eqref{eq:Aij}, which relates
 $D^2E_1(0)$, the Hessian matrix of the band dispersion function $E_1$ to the matrix 
$\aij$ arising in the multiple scale analysis

Denote by  $e^{i \bk\cdot\bx}\phi(\bx;\bk)$ the Bloch state, 
associated with $E_1(\bk):\bk\in\cell^*\to \R $;
see Section~\S\ref{sec:FBtheory}. Here,  $\phi(\bx;\bk)$  is periodic. 
Thus, 
\begin{eqnarray}
  \label{eq:phi-nd}
   & \left(-\Delta - 2 i \bk\cdot\nabla + |\bk|^2 + V\right) \phi = E_1(\bk) \phi~\\
   & \phi(\bx+\bq_j;\bk)=\phi(\bx;\bk),\ \ j=1,\dots,d\nn
\end{eqnarray}

At the bandgap edge one has
\begin{equation}
  \label{eq:k=0-nd}
  E_1(0) = E_*~, \quad \phi(\bx;0)\equiv w(\bx)~,
\end{equation}
where $w(\bx)$ is the ground state of $-\Delta+V$, subject to 
periodic boundary conditions on $\cell$. 

We denote $f_{,k_j}\equiv \D_{k_j}f$ and $f_{,r}\equiv \D_{x_r}f$. 
Differentiation of Eq.~(\ref{eq:phi-nd}) with respect to $k_i$ gives
\begin{eqnarray}
  \label{eq:phi_k-nd}
  \lefteqn{
    \left(-\Delta - 2 i k_i \D_{x_i} + |\bk|^2 + V\right) \phi_{k_i} =
    } \\*[2mm] &&
  E_{1,k_i} \phi + E_1\phi_{,k_i} + 2 i \phi_{,x_i} - 2k_i \phi~,
  \nonumber
\end{eqnarray}
At $\bk=0$, Eqs.(\ref{eq:k=0-nd}) and~(\ref{eq:phi_k-nd})  yield
\begin{equation}
  \label{eq:phi_k0_1-nd}
  L_* \phi_{,k_i}(\bx;0) = E_{*,k_i} w + 2 i w_{,i}~.
\end{equation}
Removing secular terms from Eq.~(\ref{eq:phi_k0-nd}) leads to
\begin{equation}
  \label{eq:E'-nd}  
  E_{*,k_i}=0.
\end{equation}
It follows from Eqs.~(\ref{eq:phi_k0_1-nd}) and~(\ref{eq:E'-nd}) that 
\begin{equation}
  \label{eq:phi_k0-nd}
   \phi_{,k_i}(\bx;0) = 2 i L_*^{-1}w_{,i}~.
\end{equation}
Differentiating Eq.~(\ref{eq:phi_k-nd}) with respect to $k_j$, setting
$\bk=0$ and using Eq.~(\ref{eq:k=0-nd}) we arrive at
\begin{eqnarray*}
\lefteqn{
L_*  \phi_{,k_ik_j}(\bx;0) = 
\left(E_{*,k_ik_j}-2\delta_{ij}\right) w 
+ \big [E_{*,k_i}\phi_{,k_j}(\bx;0) + 
} \\*[2mm] &&
E_{*,k_j}\phi_{,k_i}(\bx;0) \big]
+ 2 i \left [\phi_{,x_ik_j}(\bx;0)+\phi_{,x_jk_i}(\bx;0)\right]~.  
\end{eqnarray*}
Using Eqs.~(\ref{eq:E'-nd}) and~(\ref{eq:phi_k0-nd}) gives
$$    
L_* \phi_{,k_ik_j}(\bx;0) = 
\left(E_{*,k_ik_j}-2\delta_{ij}\right) w 
- 4\left( \D_{x_i}L_*^{-1}\D_{x_j}w +  \D_{x_j}L_*^{-1}\D_{x_i}w\right)~.
$$
Removing the secular growth requires that the inner product of the 
with $w(\bx)$ vanish, \ie  
\begin{eqnarray*}
\lefteqn{
   \left(E_{*,k_ik_j}-2\delta_{ij}\right) \avg{w,w} 
- 4\avg{\D_{x_i}L_*^{-1}\D_{x_j}w,w} 
} \\*[2mm] &&
- 4\avg{\D_{x_j}L_*^{-1}\D_{x_i}w,w} 
= 0~.
\end{eqnarray*}
Using the fact that $L_*^{-1}$ is self-adjoint, the last two terms are
equal to each other. Therefore, using integration by parts leads to
$$ 
\left.\frac{1}{2}\ \frac{\D^2E_1}{\D k_i \D k_j} \right|_{\bk=0} \ =\
\ \delta_{ij}\  -\ 4\frac{\avg{L_*^{-1}\D_{x_j}w,\D_{x_i}w}}{\avg{w,w} }
\ \equiv \  \aij.
$$
This proves the relation (\ref{eq:Aij}).
$\square$


\section{Bound on determinant of effective mass tensor}
\label{ap:D(0)}

\begin{proposition}
$m_*^{-1}\ \equiv\ \det\left(\ 2^{-1}\ E_{1,k_ik_j}(0)\ \right)\ \le\ 1$, with  $m_*=1$ only if $V(\bx)$ is constant.
\end{proposition}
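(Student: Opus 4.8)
The plan is to read the bound off the explicit formula~\eqref{eq:Aij} for the inverse effective mass tensor. Write $A:=(\aij)$; using self-adjointness of $L_*^{-1}$, \eqref{eq:Aij} says
\[
  A \;=\; I - 4M, \qquad M^{ij} \;:=\; \frac{\langle L_*^{-1}\D_{x_i}w,\ \D_{x_j}w\rangle}{\langle w,w\rangle}.
\]
First I would verify that $M$ is positive semidefinite. The point is that $\D_{x_i}w \perp w$ in $L^2(\cell)$, since $\langle\D_{x_i}w,w\rangle = \tfrac12\int_\cell \D_{x_i}(w^2)\,d\bx = 0$ by periodicity; hence $\D_{x_i}w$ lies in the range of $P^\perp$, on which $L_*\ge0$ is bounded below by a positive constant (because $0$ is a simple eigenvalue of $L_*$ with eigenvector $w$). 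So $L_*^{-1/2}$ acts there and $\langle w,w\rangle\,M^{ij} = \langle L_*^{-1/2}\D_{x_i}w,\ L_*^{-1/2}\D_{x_j}w\rangle$ is a Gram matrix, whence $M\ge0$ and therefore $A = I - 4M \le I$.

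Second, I would note that $A\ge0$: since $E_1(\bk)\in\sigma(-\Delta+V)$ for every $\bk\in\cell^*$ while $E_1(0)=E_*=\inf\sigma(-\Delta+V)$, the point $\bk=0$ is a global minimum of $E_1$, which is real-analytic near $\bk=0$ by simplicity of $E_*$; hence $\nabla_\bk E_1(0)=0$ and $D^2E_1(0)\ge0$, i.e. $A=\tfrac12 D^2E_1(0)\ge0$ (in agreement with the positive-definiteness stated in Theorem~\ref{theo:2scale}). An equivalent, perhaps more transparent, route to $A\le I$ is the test-function inequality: inserting $w$ into the Rayleigh quotient of $-(\nabla+i\bk)^2+V$ on $L^2$-periodic functions yields $E_1(\bk)\le E_*+|\bk|^2$, since the cross term $\langle w,\nabla w\rangle$ vanishes by periodicity. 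Either way, $0\le A\le I$, so all eigenvalues of $A$ lie in $[0,1]$ and hence
\[
  \minv \;=\; \det A \;\le\; 1.
\]

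Finally, for the equality clause: $\det A=1$ with all eigenvalues in $[0,1]$ forces every eigenvalue to be $1$, i.e. $A=I$, i.e. $M=0$. Then in particular $\|L_*^{-1/2}\D_{x_i}w\|^2 = \langle w,w\rangle\,M^{ii}=0$ for each $i$, so $\D_{x_i}w\equiv0$; thus $w$ is constant, and $L_*w=0$ becomes $(V-E_*)w=0$ with $w>0$, forcing $V\equiv E_*$ constant. (Conversely, $V$ constant gives $w$ constant, $E_1(\bk)=|\bk|^2+\text{const}$, and $A=I$.) I expect the only points needing care to be the Gram-matrix identity for $M$ — which relies on the orthogonality $\D_{x_i}w\perp w$ together with the spectral gap of $L_*$ above $0$ — and the appeal to analytic perturbation theory to differentiate $E_1$ at the band edge; everything after that is elementary linear algebra.
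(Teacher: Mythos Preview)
Your proof is correct and takes a genuinely different, more elementary route than the paper's. Both proofs start from $A=I-4M$ with $M$ (the paper writes $B=4M$) a Gram matrix built from $L_*^{-1/2}\D_{x_i}w$, hence $M\ge0$ and $A\le I$. The divergence is in how one concludes that the eigenvalues of $A$ lie in $[0,1]$. The paper argues by homotopy: it considers the family $\theta V$, $\theta\in[0,1]$, notes that the eigenvalues $\beta_i(\theta)$ of $B$ vanish at $\theta=0$, and invokes the Kirsch--Simon result that $A=\tfrac12D^2E_1(0)$ is positive definite for \emph{every} $\theta$ to prevent any $\beta_i(\theta)$ from reaching $1$. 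You instead bound $A\ge0$ directly---either from the observation that $\bk=0$ is a global minimum of $E_1$, or from the test-function inequality $E_1(\bk)\le E_*+|\bk|^2$ obtained by inserting $w$ into the Rayleigh quotient for $-(\nabla+i\bk)^2+V$. This avoids the continuity argument and the external citation altogether. Your treatment of the equality case is also more direct: rather than tracing it back to $\theta=0$, you read off $M=0\Rightarrow\D_{x_i}w\equiv0\Rightarrow w$ constant $\Rightarrow V$ constant from the Gram structure. The paper's route has the minor advantage of yielding strict positive definiteness of $A$ along the way (via Kirsch--Simon), whereas your argument gives only $A\ge0$; but for the stated proposition your semidefinite bound suffices.
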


\nit For the proof we use  $m_*>0$; see \cite{Kirsch-Simon:87}.
Recall that
\begin{equation}
  \label{eq:B}
  \frac{1}{2}E_{*,k_ik_j} = \ \delta_{ij} - B_{ij},  
  \quad 
  B_{ij} \equiv \dfrac{4\avg{L_*^{-1}w_{,j},w_{,i}}}{\avg{w,w}}~.  
\end{equation}
We claim that
  $B_{ij}$ is positive definite.
  To see this, first recall that $L_*\ge0$ with one dimensional $L^2(\T^d)$ kernel  spanned by $w$. Clearly, $w\perp M\equiv span\left\{w_{,i}\ :\ i=1,\dots, d\right\}$ and therefore $B$ is well-defined.

Let ${\bf v}=(v_1,\dots,v_d)\in\R^d$ be arbitrary.
Then
\begin{equation}
{\bf v} \cdot B {\bf v} =  \langle\ L_*^{-1} {\bf v}\cdot\nabla{w} , {\bf v}\cdot\nabla{w}\ \rangle\ \ge  \lambda_2^{-1}\ \|\ {\bf v}\cdot\nabla{w}\ \|^2\ge 
\ C\  \|{\bf v}\|^2
\nn\end{equation}
where $\lambda_2>0$ denotes the second eigenvalue of $L_*$ acting
on $L^2(\T^d)$.

The matrix $E_{*,k_ik_j}$ is positive definite \cite{Kirsch-Simon:87}.
Therefore, $E_{*,k_ik_j}$ can be diagonalized by a unitary transformation
$p_{ij}$ such that 
$$
 p_{ri} E_{*,k_ik_j} p_{lj} = 2\left(1 - \beta_r\right)\delta_{rl}~,
$$
where $\lambda_i~(i=1\dots d)$ are the eigenvalues of $B_{ij}$.
It follows that
\begin{equation}
  \label{eq:det}
  \minv = \det\left(\frac{E_{*,k_ik_j}}{2}\right) 
  =\ \Pi_{i=1}^d\left(1-\beta_i\right),  
\end{equation}
where $\beta_i>0,\ i=1,\dots, d$.
In order to bound $\minv$ from above we will show\\
$ \beta_i\in(0,1),\ \ i=1,\dots, d$ and therefore $m_*^{-1}\le 1$.

We argue by continuity. 
Consider the one-parameter family of potentials $V(\bx;\theta)\equiv \theta V(\bx)$,
where $\theta\in [0,1]$ and the associated self-adjoint operator $L_*^\theta$ and matrix $B_{ij}(\theta)$.
Since $L_*^\theta$ is self-adjoint and, $w(\bx;\theta)$, its ground state   is simple, there are $d$ continuous functions $\theta\to\beta_i(\theta),\ \  i=1,\dots, d$, defining the eigenvalues of $B_{ij}(\theta)$. For $\theta=0$  (homogeneous medium), 
 $E_*=0$, $E_1(\bk) =\bk^2$, and  $E_{*,k_ik_j}=2\delta_{ij}$. 
Therefore $B_{ij}=0$ and $ \beta_j(0)=0,\ \ j=1,\dots, d$.
In this case (and only in this case!) $\minv=\det{I}=1$.
Next consider $\theta=1$, \ie  the original problem.
We claim that  $\beta_i(1)<1,\ i=1,\dots, d$. Otherwise, at some value of $\theta=\theta_*>0$ an eigenvalue of $B(\theta_*)$ would attain the value  one. This would contradict the positive definiteness
of $E_{*,k_ik_j}^\theta$.


\section{Effective mass for $d=1$ and the Floquet-Hill discriminant}
\label{ap:FB}
In one space dimension the endpoints of the spectral bands are obtained by studying the periodic and anti-periodic eigenvalue problems \cite{Eastham:73}. Very briefly, for each $E$ one constructs a $2\times2$ fundamental solution matrix, $M(x;E)$,
and considers the values of $E$ for which $M(q;E)$ has an eigenvalue $+1$ or $-1$, corresponding to periodic or antiperiodic eigenvalues. This is equivalent to $\Delta(E)=\pm2$, where 
\begin{equation}
\Delta(E)\equiv {\rm trace}\left(M(q;E\right)\ \equiv\ 2\cos [k(E)\perx]~.
\label{f-discr}
\end{equation}
 is the Floquet discriminant.

The band edge,   $E=E_*$, corresponds to $k=0$, at which we have  $\Delta(E_*)=2$.
Expanding Eq.~(\ref{f-discr}) in Taylor series around $k=0$ and $E=E_*$ gives
\begin{equation}
 \Delta(E_*) + \Delta'(E_*)(E-E_*) + \cO[(E-E_*)^2] 
= 2\left(1 - \frac{k^2\perx^2}{2}\right)  + O(k^4)~.
\nn\end{equation}
Using $\Delta(E_*)=2$ and solving for the second term on the LHS gives 
to leading order
\begin{equation}
 E - E_* = -\dfrac{k^2\perx^2}{\Delta'(E_*)}\ +\ \cO(k^4)
\nn\end{equation}
which yields the relation
\begin{equation}
  m_*^{-1} =E^{''}_1(0) = -\dfrac{2\perx^2}{\Delta'(E_*)}~.
\nn\end{equation}
Since $\Delta'(E_*)<0$, $m_*>0$. More generally, we have that $E_j''(0)>0$ at the {\it left} edge of each band and $E_j''(0)<0$ at the {\it right} edge of each band.
$\square$


\section{Power and slope for small potentials}\label{sec:smallpotentials}

In this section we use a regular perturbation expansion to
derive  the power and slope constants near the band edge, i.e.  $\zetap$ and
$\zetas$,  assuming a small potential. Such an expansion can be made rigorous by an argument based on the implicit function theorem.
The derivation is comprised of preliminary calculations in any dimension
of the Bloch function , an inverse linear operator,  and the inverse
effective mass tensor and coupling constant. 
To simplify notation, subsequent calculations are carried out 
explicitly in the critical case $(d=1,  \sigma=2)$.

\begin{remark}
  In the derivation below $\delta$ is assumed to be a small constant 
  independently of the of $\eps$. The calculations are done to order 
  $\cO(\eps^2\delta^m)$ for suitable $m$.
  For convenience, the $\eps^2$ is suppressed from $\cO(\cdot)$.
\end{remark}

Let $V(\bx)=\delta V_1(x)$, where $|\delta|\ll 1$. 
W.l.o.g. we assume that
\mbox{$\avg{V_1} \equiv \int_\cell V_1(\bx)\,d\bx  = 0~.$}
Let  $w_\delta \equiv  w_*(\bx;\delta),\ E_*=E_*(\delta)$ be the ground state eigenpair  of 
\begin{equation}
  \label{eq:L_delta}
  L_\delta w_*(\bx;\delta) = 0~,   \quad 
  L_\delta \equiv L_0   + \delta V_1 - E_*(\delta)~, \quad
  L_0 \equiv -\Delta~.
\end{equation}
with $\cell$ periodic boundary conditions.

We expand $w_\delta(\bx)$ and $E_*(\delta)$ in a Taylor expansion in $\delta$:
$$
  w_\delta    \equiv   w_0  + \delta w_1  + \delta^2 w_2 + \dots~,  \quad
  E_*(\delta)  =           \delta E_1 + \delta^2 E_2 + \dots~,
$$
where $w_k\equiv w_k(\bx)$ and we set $E_0=0$ since we are interested
in bifurcation from the lowest band edge. 
The first three terms in the hierarchy are
\begin{eqnarray}
  \label{eq:delta0}
  O(\delta^0): && L_0 w_0 \ =\ 0~,  \\
  \label{eq:delta1}
  O(\delta^1): && L_0 w_1 \ =\ (E_1 - V_1)w_0~,  \\
  \label{eq:delta2}
  O(\delta^2): && L_0w_2 \ =\ E_2w_0 + (E_1-V_1)w_1~.
\end{eqnarray}
Corresponding to the lowest band edge
Eq.~(\ref{eq:delta0}) admits a  constant solution
$w_0(\bx)=const$ that spans the kernel of $L_0$.
W.l.o.g. we may choose this constant such that $\mbox{\avg{w_0(\bx)}=1}$.
In order to remove secular growth the non-homogeneous terms in Eqs.~(\ref{eq:delta1})
and~(\ref{eq:delta2}) must be orthogonal to $w_0(\bx)$.
Therefore, their cell-average must vanish.
Removing secular terms at $O(\delta^1)$ gives
$$
  E_1 = \avg{V_1} = 0~, \quad w_1 =  -\Linvz V_1~.
$$
Substituting the above results into Eq.~(\ref{eq:delta2}) 
and removing secular growth yields
$$
   E_2 = - \avg{V_1 \Linvz V_1}~, \quad
   w_2 =  \{ \Linvz ( V_1 \Linvz V_1) \}~,
$$
where  the curly-bracket is a projection symbol defined as
$$
   \{ \ f \ \} \equiv  f - \avg{\ f \ }~.
$$
Summarizing the above results gives
\begin{eqnarray}
   \label{eq:L_delta_2}
   L_\delta              &=& L_0   + \delta V_1  + O (\delta^2)~,\\
   \label{eq:w_delta_2}
     w_\delta  &=&  1 - \delta \Linvz V_1  +
   \delta^2  \{ \Linvz ( V_1 \Linvz V_1 \} +\ O(\delta^3)~, \\
   \label{eq:E_delta_2}
   E_*(\delta)        &=&  - \delta^2 \avg{V_1 \Linvz V_1} \ +\ O(\delta^4 )~.
\end{eqnarray}

We now approximate the inverse operator $\Linvd$. 
It is expedient to make the following definitions.
\begin{definition}
We denote the domains of $\Linvz$ and $\Linvd$ as
\begin{eqnarray*}
  \label{eq:K0}
  K_0 &\equiv& \{ f(\bx)| \bx\in \cell, f \mbox{ is periodic in }
  \cell,  f \in {\rm Ker}^{\bot}  L_0 \}~, \\
  \label{eq:M_delta}
  K_\delta  &\equiv&  \{ f(\bx)| \bx\in \cell, f \mbox{ is periodic in }
  \cell,  f \in {\rm Ker}^{\bot} L_\delta  \}~,
\end{eqnarray*}
respectively. We denote the projection operator into $K_\delta$ as
\begin{equation}
  \label{eq:P_delta}
  P_\delta F_\delta  \equiv  F_\delta  
  - \dfrac{ \avg{ F_\delta ,  w_\delta } } { \avg{ w_\delta, w_\delta } }w_\delta~.
\end{equation}
\end{definition}
Using the above definitions we obtain
\begin{lemma}
  \label{lem:L_delta_inv}
  Let  $F_\delta = F_0 + \delta  F_1  +  O(\delta^2)\,$.
  Then  
  $$
  \Linvd P_\delta F_\delta  = \Linvz \{F_0\}  
  + \delta \left[  \Linvz \{F_1\}  + \avg{F_0} \Linvzs V_1 
    - \Linvz \{ V_1 \Linvz \{F_0\} \}   \right]  + O(\delta^2)~.
  $$
\end{lemma}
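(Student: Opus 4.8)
\noindent\emph{Proof strategy.} The plan is a regular perturbation (reduced–resolvent) computation. Write $G_\delta := \Linvd P_\delta F_\delta$. First I would characterize $G_\delta$ as the unique zero–cell–average solution of $L_\delta G_\delta = P_\delta F_\delta$: any two solutions differ by a multiple of $w_\delta$, and since $\avg{w_\delta,w_\delta}=\avg{1,1}+O(\delta^2)$ is bounded away from $0$ for $|\delta|$ small, the zero–average normalization singles out exactly one. Because $E=0$ remains a simple, isolated eigenvalue of the self–adjoint family $\delta\mapsto L_\delta$ for $|\delta|$ small (the gap separating it from the rest of the spectrum of $L_\delta$ persists by continuity from $\delta=0$), the map $\delta\mapsto\Linvd P_\delta$ is analytic — in particular $C^1$ — as a bounded operator $K_\delta\to K_0$; this legitimizes the ansatz $G_\delta = G_0 + \delta G_1 + O(\delta^2)$ with $G_0,G_1$ of zero average and with the $O(\delta^2)$ remainder controlled in the relevant norm.

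\noindent Next I would expand the data. By \eqref{eq:w_delta_2}--\eqref{eq:E_delta_2} we have $w_\delta = 1 - \delta\,\Linvz V_1 + O(\delta^2)$ and $E_*(\delta)=O(\delta^2)$, hence $L_\delta = L_0 + \delta V_1 + O(\delta^2)$; since $\Linvz V_1$ has zero average one has $\avg{w_0,w_1}=0$, so $\avg{w_\delta,w_\delta}=\avg{1,1}+O(\delta^2)$. Expanding $P_\delta F_\delta = F_\delta - \avg{F_\delta,w_\delta}\,\avg{w_\delta,w_\delta}^{-1}\,w_\delta$ and using self–adjointness of $\Linvz$ on zero–average functions to rewrite the resulting constant (a multiple of $\avg{F_0,\Linvz V_1}$) as the cell average $\avg{V_1\Linvz\{F_0\}}$, I obtain
\[
 P_\delta F_\delta \;=\; \{F_0\} \;+\; \delta\Big(\{F_1\} + \avg{F_0}\,\Linvz V_1 + \avg{V_1\Linvz\{F_0\}}\Big) \;+\; O(\delta^2).
\]

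\noindent Finally I would match orders in $L_\delta G_\delta = P_\delta F_\delta$. At order $\delta^0$: $L_0 G_0 = \{F_0\}$, so $G_0 = \Linvz\{F_0\}$. At order $\delta^1$: $L_0 G_1 + V_1 G_0 = \{F_1\} + \avg{F_0}\,\Linvz V_1 + \avg{V_1\Linvz\{F_0\}}$; substituting $G_0 = \Linvz\{F_0\}$ and using $\avg{V_1\Linvz\{F_0\}} - V_1\Linvz\{F_0\} = -\{V_1\Linvz\{F_0\}\}$ turns this into $L_0 G_1 = \{F_1\} + \avg{F_0}\,\Linvz V_1 - \{V_1\Linvz\{F_0\}\}$, whose right–hand side has zero cell average (each summand does); inverting $L_0$ on zero–average functions and using $\Linvz(\Linvz V_1) = \Linvzs V_1$ gives $G_1 = \Linvz\{F_1\} + \avg{F_0}\,\Linvzs V_1 - \Linvz\{V_1\Linvz\{F_0\}\}$, precisely the bracketed coefficient of $\delta$ in the claim, and assembling $G_\delta = G_0+\delta G_1+O(\delta^2)$ finishes the proof.

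\noindent The one genuinely delicate point is the qualitative statement used in the first paragraph: that $\Linvd P_\delta F_\delta$ really does admit such an expansion with an $O(\delta^2)$ remainder, uniformly for small $\delta$, in the operator/Sobolev norm appropriate to the applications. This reduces to analytic perturbation theory for the self–adjoint family $L_\delta$ — persistence of the simplicity of the eigenvalue $0$ and of the spectral gap above it — together with the elementary lower bound on $\avg{w_\delta,w_\delta}$; everything else is bookkeeping of the zero–average normalization through $P_\delta$ and $\Linvd$ plus a single use of self–adjointness of $\Linvz$.
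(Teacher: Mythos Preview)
Your argument is correct and is essentially a fleshed-out version of the paper's one-line proof: expand $P_\delta F_\delta$ via \eqref{eq:w_delta_2} and \eqref{eq:P_delta}, then invert $L_\delta$ order by order using $\Linvz:K_0\to K_0$ and $\avg{\Linvz f}=0$. One minor caveat worth flagging: you characterize $G_\delta=\Linvd P_\delta F_\delta$ as the unique \emph{zero cell-average} solution of $L_\delta G=P_\delta F_\delta$, whereas by the stated definitions $\Linvd$ maps into $K_\delta=\{w_\delta\}^\perp$; these two normalizations differ by the constant $\delta\,\avg{V_1,\Linvzs\{F_0\}}+O(\delta^2)$, but your zero-average choice reproduces exactly the formula the lemma displays, and the discrepancy vanishes in every subsequent application (the corollary immediately following has $F_0=0$).
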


\begin{proof}
  The proof of Lemma~\ref{lem:L_delta_inv} follows directly
  from expanding the projection operator~(\ref{eq:P_delta})
  in powers of $\delta$, using Eq.~(\ref{eq:w_delta_2}), 
  $\Linvz : K_0\to K_0$, and $\avg{\Linvz f} = 0$.  
\end{proof}

In the derivations of $\zetap$ and $\zeta_{*1}$,  in each and every
case that $\Linvd$ is applied,  one has $F_0=0$ and $\avg{F_1}=0$. 
Hence,  we shall use
\begin{corollary}
It follows from Lemma~\ref{lem:L_delta_inv} that
\begin{equation}
  \label{eq:L_inv_delta}
  F_0 = 0,~ \avg{F_1}=0 ~\Longrightarrow ~
  \Linvd F_\delta = \delta \Linvz F_1 + O(\delta^2)~.
\end{equation}
\end{corollary}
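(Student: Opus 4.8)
The plan is to read off the claim by specializing the identity of Lemma~\ref{lem:L_delta_inv} to the hypotheses $F_0=0$ and $\avg{F_1}=0$, and then to check that the projection built into that identity may be dropped to the stated order.

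First I would recall the projection notation $\{f\}\equiv f-\avg{f}$, so that $\{0\}=0$ and, under the hypothesis $\avg{F_1}=0$, $\{F_1\}=F_1$. Inserting $F_0=0$ into
\[
\Linvd P_\delta F_\delta = \Linvz\{F_0\} + \delta\left[\Linvz\{F_1\} + \avg{F_0}\,\Linvzs V_1 - \Linvz\{V_1\,\Linvz\{F_0\}\}\right] + O(\delta^2)
\]
kills three of the four terms on the right: $\Linvz\{F_0\}=0$, the term $\avg{F_0}\,\Linvzs V_1=0$ because $\avg{F_0}=0$, and $\Linvz\{V_1\,\Linvz\{F_0\}\}=0$ because $\Linvz\{F_0\}=0$. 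The surviving term is $\delta\,\Linvz\{F_1\}=\delta\,\Linvz F_1$, so $\Linvd P_\delta F_\delta=\delta\,\Linvz F_1+O(\delta^2)$.

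It then remains to replace $P_\delta F_\delta$ by $F_\delta$. Since $F_0=0$ we have $F_\delta=\delta F_1+O(\delta^2)$, and using the expansion $w_\delta=1-\delta\Linvz V_1+O(\delta^2)$ from Eq.~\eqref{eq:w_delta_2} one computes $\avg{F_\delta,w_\delta}=\delta\avg{F_1}+O(\delta^2)=O(\delta^2)$; hence $P_\delta F_\delta=F_\delta+O(\delta^2)$ by the definition~\eqref{eq:P_delta} of $P_\delta$. Applying $\Linvd$, which is bounded on $K_\delta$ uniformly for $|\delta|$ small, this discrepancy is absorbed into the remainder and we obtain $\Linvd F_\delta=\delta\,\Linvz F_1+O(\delta^2)$, as claimed. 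There is essentially no obstacle in this corollary — all the substance is already in Lemma~\ref{lem:L_delta_inv}; the only point demanding a word of care is the bookkeeping distinction between $\Linvd F_\delta$ (meaning $\Linvd$ applied to the projection of $F_\delta$ onto $K_\delta$) and $\Linvd P_\delta F_\delta$, which the last computation shows to be immaterial at the order $O(\delta^2)$.
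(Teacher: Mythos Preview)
Your proof is correct and is exactly the intended specialization of Lemma~\ref{lem:L_delta_inv}; the paper simply states the corollary as an immediate consequence without writing out the substitution. Your extra paragraph checking that $P_\delta F_\delta = F_\delta + O(\delta^2)$ under the hypotheses is a nice piece of bookkeeping that the paper leaves implicit.
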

In addition, the following approximations of the expressions related to the inverse effective mass tensor 
and effective nonlinear coupling constant are used in the approximation of
$\zetas$,
\begin{eqnarray}
  \label{eq:X1_delta}
  \lefteqn{
    X_1^{ij}(\bx)  = \left( \delta^{ij} +
      4\D_{x_j}L_*^{-1}\D_{x_i} -\ \aij \right) w(\bx)
  } \\ \nn &&
  \st{Aij} 4 \left( \D_{x_j}L_*^{-1}\D_{x_i}  +
    \frac{ \avg{ \D_{x_j}w, L_*^{-1}\D_{x_i}w  } } { \avg{w,w} }  \right) w(\bx)
  \\ \nn && 
  ~\st{P_delta}~  P_\delta \left  ( 4   \D_{x_i}\ \Linvd  \D_{x_j}  w_\delta \right )
  ~\stt{w_delta_2}{L_inv_delta}~ -4\delta \D_{x_i} \D_{x_j} \Linvzs V_1 + O(\delta^2)~,
\end{eqnarray}
\begin{equation}
  \label{eq:geff_delta}
  \G \st{geff}    \frac{\int_{\cell} w^{2\sigma+2}\
    d\bx}{\int_{\cell} w^2\ d\bx}   \st{w_delta_2}  1 + O(\delta^2)~,
\end{equation}
and therefore
\begin{equation}
  \label{eq:X2_delta}
  X_2(\bx)  = w^{2\sigma+1}- \G \, w  
  \stt{w_delta_2}{geff_delta}  2\sigma\delta \Linvz V_1 + O(\delta^2)~.
\end{equation}

\vspace{4mm}
We proceed to calculate the power and its slope.
For simplicity we consider the critical case $(d=1, \sigma=2)$.

{\bf Calculation of the power constant $\zetap\ $.}

Expanding, using Eq.~(\ref{eq:w_delta_2})  with $(d=1,
\sigma=2)$, gives
\begin{eqnarray*}
  w_\delta^2 &=& 1 - 2 \delta \Linvz V_1 + 2\delta^2 \{ \Linvz ( V_1  \Linvz V_1 \} 
   + \delta^2 (\Linvz V_1)^2 \ + \ O(\delta^3)~, \\
  w_\delta^6 &=&   1 - 6 \delta \Linvz V_1 + 6 \delta^2 \{ \Linvz ( V_1  \Linvz V_1 \} 
   + 15 \delta^2 (\Linvz V_1)^2 \ + \ O(\delta^3)~.
\end{eqnarray*}
When integrating these functions the contributions of the second and
third terms vanish, as they are in $K_0$.
Therefore, the first factor in $\zetap$ can be approximated by
\begin{equation}
  \label{eq:ratio1}
\sqrt{ \frac{\left( \Av w^2 \right)^3}  {\Av w^6} }  \sim 
\sqrt{ \frac {\Av \left( 1 + 3\delta^2 \Linvz V_1 \right) }  
                  { \Av \left( 1 + 15\delta^2 \Linvz V_1 \right)    } }  \sim 
1 - 6\delta^2 \Av \left( \Linvz V_1 \right)^2~.
\end{equation}
Similarly, the inverse effective mass (Gaussian curvature) is
\begin{eqnarray*}
m_*^{-1} &=& 1 - 4\dfrac{\avg{\D_x w, L_*^{-1} \D_x w} }{\avg{w,w}}
\st{w_delta_2} 1 - 4 \Av (-\delta \D_x \Linvz V_1) (-\delta \D_x
\Linvzs V_1)  + O(\delta^3) \\
&\sim& 1 - 4\delta^2 \Av \left( \D_x L_0^{-3/2} V_1 \right)^2~,
\end{eqnarray*}
where in the last step we used the self-adjointness of $\Linvz$.
The operator in the above integral can be simplified as
$
  \D_x L_0^{-3/2}   \equiv  \Linvz~.
$
Thus,  the second factor in $\zetap$ can be approximated with
$$
\dfrac{1}{\sqrt{m_*}} \sim 1 - 2\delta^2 \Av \left( \Linvz V_1 \right)^2~.
$$
Combining with Eq.~(\ref{eq:ratio1}) yields Eq.~(\ref{eq:zeta_delta_d1}).

\vspace{4mm}
{\bf Calculation of the slope constant $\zetas\ $.}

In one dimension Eq.~(\ref{eq:zeta1*}) reduces to 
\begin{equation}
  \label{eq:zeta1d1}
  \zetas = \int_\R \Av |U_1(x,y)|^2\,dxdy 
  + \Av w^2\,dx \int_\R S(y) \left. \D_\Omega F(y;\Omega)\right|_{\Omega=-1} \,dy~.
\end{equation}
As we shall see, $U_1=O(\delta)$ and $S(y)=O(\delta^2)$.
Therefore, both integral terms are $O(\delta^2)$.
It follows from Eqs.~(\ref{eq:Feqn}), (\ref{eq:X1_delta}), and~(\ref{eq:geff_delta})
that to leading order in $\delta$, $F(y)$ is the Townes mode, i.e. 
\begin{equation}
  \label{eq:F_delta}
  F(\by;\Omega,\delta) = R(\by;\Omega)  + O(\delta^2)~.
\end{equation}
Furthermore, we shall use Eq.~(\ref{eq:R_1d}) with $\sigma=2$ 
to explicitly evaluate the $y$-integrals.

The first integral term~(\ref{eq:zeta1d1}) depends on
\begin{eqnarray}
  \nn
    U_1(x,y) &\st{U1}& 2\Linvd \D_x w \D_y F
    \st{w_delta_2}  - 2\Linvd (\delta \D_x \Linvz V_1)  R_y  +
    O(\delta^2)
    \\
    &\st{L_inv_delta}& -2\delta  \D_x \Linvzs V_1  R_y + O(\delta^2)~.
    \label{eq:U1_delta}
\end{eqnarray}
Therefore,
\begin{eqnarray*}
\lefteqn{
  \int_\R \Av |U_1(x,y)|^2\,dxdy = 
  4\delta^2\Av(\D_x \Linvzs V_1)^2\,dx
  \int_\R R_y^2\,dy + O(\delta^3) 
} \\ &&
  \st{R_1d}
   \sqrt{3}\ \pi \delta^2\Av(\D_x \Linvzs V_1)^2\,dx+ O(\delta^3)~.
\end{eqnarray*}
The operator in the above integral can be simplified  as
$\D_x \Linvzs  = (-\D_{xxx})^{-1}$.
Hence,
\begin{equation}
  \label{eq:zeta1-U1-term}
   \int_\R\Av|U_1(x,y)|^2\,dxdy = 
   \sqrt{3}\ \pi \delta^2\Av \left[  (-\D_{xxx})^{-1} V_1 \right]^2 \,dx+ O(\delta^3)~.
\end{equation}

For the second integral term in~(\ref{eq:zeta1d1}) we need to calculate $S(y)$. 
We  use
\begin{equation}
  \label{eq:U0_delta}
   U_0(x,y) \st{U0}  w(x) F(y) = R(y) + O(\delta^2).
\end{equation}
The first two terms in Eq.~(\ref{eq:Sofydef-2}) are negligible. 
This follows from 
\begin{lemma}
  \label{lem:U_2p}
  $
  \avg { U_{2p} , w } = 0~.
  $
\end{lemma}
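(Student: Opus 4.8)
The statement to prove is Lemma~\ref{lem:U_2p}: $\avg{U_{2p},w}=0$.

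\textbf{The plan is to} exploit the structure of $U_{2p}$ as displayed in \eqref{U2fac-1}--\eqref{U2fac-2}, namely that $U_{2p}(\bx,\by)$ is a finite sum of terms of the factored form $L_*^{-1}\left[\,X(\bx)\,\right]\,G(\by)$, where each ``$\bx$-factor'' $X(\bx)$ is built from $w$, $w^{2\sigma+1}$, and $\D_{x_i}L_*^{-1}\D_{x_j}w$, and crucially enters only through $L_*^{-1}$ applied to it. So the lemma will reduce to the claim that $\langle w, L_*^{-1}X\rangle=0$ for each of these $\bx$-factors. First I would recall that the whole point of the solvability (orthogonality) condition for the $\cO(\eps^2)$ equation \eqref{eq:U2} is precisely that the right-hand side of \eqref{eq:U2}, \emph{after} using the homogenized equation \eqref{eq:Feqn-1} for $F$ to cancel the $w(\bx)F(\by)$-type term, is orthogonal to $w$ as a function of $\bx$ for every fixed $\by$. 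That is, with the notation of \eqref{U2fac-1}, each coefficient function
\[
X_{2p,1}^{ij}(\bx)=\left(\delta_{ij}+4\D_{x_j}L_*^{-1}\D_{x_i}-\aij\right)w(\bx),\qquad
X_{2p,2}(\bx)=w^{2\sigma+1}(\bx)-\G\,w(\bx)
\]
is orthogonal to $w$ in $L^2_{per}(\cell)$ --- indeed $\langle w,X_{2p,2}\rangle = \int_\cell w^{2\sigma+2}-\G\int_\cell w^2=0$ by the definition \eqref{eq:geff} of $\G$, and $\langle w, X_{2p,1}^{ij}\rangle = \langle w,w\rangle(\delta_{ij}+4\langle\D_{x_j}w,L_*^{-1}\D_{x_i}w\rangle/\langle w,w\rangle-\aij)\cdot$ (wait --- sign): using $\langle w,\D_{x_j}L_*^{-1}\D_{x_i}w\rangle=-\langle \D_{x_j}w,L_*^{-1}\D_{x_i}w\rangle$ and the definition \eqref{eq:Aij} of $\aij$, this vanishes identically. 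Hence each $X_{2p,\ast}\in\mathrm{Ker}^\perp$, i.e.\ $P^\perp X_{2p,\ast}=X_{2p,\ast}$.

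The key step is then the same commutation argument already used in the proof of Claim~2 / equation \eqref{I2b-claim}: since $P^\perp$ commutes with (functions of) $L_*$ and $P^\perp w=0$,
\[
\langle w,\,L_*^{-1}X_{2p,\ast}\rangle
=\langle w,\,L_*^{-1}P^\perp X_{2p,\ast}\rangle
=\langle P^\perp w,\,L_*^{-1}P^\perp X_{2p,\ast}\rangle
=0 .
\]
Summing over $i,j$ with the $\by$-dependent factors $\D_{y_i}\D_{y_j}F(\by)$ and $F^{2\sigma+1}(\by)$ pulled outside the $\bx$-inner product, we get $\langle w(\cdot),U_{2p}(\cdot,\by)\rangle=0$ for every $\by$, which is the assertion (the cell-average $\Av$ and the $L^2_{per}$ inner product $\langle\cdot,\cdot\rangle$ differ only by the constant $\mathrm{vol}(\cell)^{-1}$, so $\avg{U_{2p},w}=0$ as well). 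I would present this cleanly: state that $U_{2p}=\sum_j L_*^{-1}[P^\perp g_j(\bx)]\,G_j(\by)$, verify $P^\perp g_j=g_j$ for the two families of coefficients using \eqref{eq:geff} and \eqref{eq:Aij}, then invoke the commutation identity.

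\textbf{The main obstacle} is essentially bookkeeping rather than analysis: one must be careful that $L_*^{-1}$ is only ever legitimately applied to functions already orthogonal to $w$ (so that $L_*^{-1}P^\perp$, which is the bounded operator from \eqref{Pperp}, is what actually appears), and that the particular solution $U_{2p}$ written in \eqref{U2fac-1} is built using exactly that bounded inverse --- this is guaranteed by the solvability condition but should be stated explicitly. A secondary subtlety is the inner factor $\D_{x_j}L_*^{-1}\D_{x_i}w$ appearing \emph{inside} $X_{2p,1}^{ij}$: here one needs that $\D_{x_i}w\perp w$ (immediate, since $\int_\cell \D_{x_i}w = 0$ by periodicity, or since $\int_\cell w\,\D_{x_i}w=\tfrac12\int_\cell \D_{x_i}(w^2)=0$), so that $L_*^{-1}\D_{x_i}w$ is well-defined; this is the only place the argument touches the structure of $L_*$ beyond self-adjointness and the simplicity of its kernel. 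Once these two points are in place the proof is a two-line computation.
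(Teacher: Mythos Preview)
Your proposal is correct and follows essentially the same route as the paper: recognize from \eqref{U2fac-2} that the $\bx$-dependence of $U_{2p}$ enters only through $L_*^{-1}$ applied to suitable coefficients, and conclude that the result lies in $\{w\}^\perp$. The paper's proof is terser---it simply observes that $L_*^{-1}$ (equivalently $L_\delta^{-1}$ in the small-potential appendix) has range in the orthogonal complement of $w$, so $\langle w, L_*^{-1}X_k\rangle=0$ immediately; your explicit verification that each $X_{2p,\ast}\perp w$ and the commutation argument (already used for \eqref{I2b-claim}) make the same point with more detail, which is fine but not strictly required once the range of $L_*^{-1}$ is understood.
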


\begin{proof}
  Equation~(\ref{U2fac-2})  shows that the $x$-dependence of $U_{2p}$
  is of the form $L_*^{-1} X_k(x)$ for suitable $X_k(x), k=1,2$.
  As $\Linvd$ is into the orthogonal space to $w(x)$, the Lemma follows. 
\end{proof}

That the first term in Eq.~(\ref{eq:Sofydef-2}) is zero follows
immediately from Lemma~\ref{lem:U_2p}. 
The second term in Eq.~(\ref{eq:Sofydef-2}) has an additional $U_0^{2\sigma}$.
However  in light of Eq.~(\ref{eq:U0_delta}) the $x-$dependence of 
$U_0^{2\sigma}$ is constant to leading order. Therefore, 
the second term in $S(y)$ is $O(\delta^3)$.
It remains to calculate the two last terms in $S(y)$. 
Note that the  coefficient preceding the square brackets in Eq.~(\ref{eq:Sofydef-2}) 
cancels with the $w$-integral in Eq.~(\ref{eq:zeta1d1}).
This leaves (assuming $d=1, \sigma=2$)
\begin{equation}
  \label{eq:S_delta}
  S(y) = 10\avg{ w ,  U_0^3U_1^2}  
  \ - \ 2\avg{  \D_x w , \D_y  \tilde{U_3} } \ + \ O(\delta^2)~,  
\end{equation}
where 
\begin{eqnarray*}
  \lefteqn{
    \tilde U_3 \st{tU3def}
    2\Linvz \biggm[\, \D_x \Linvz X_1 \D_{yyy} F 
    \ +\  \D_x \Linvz X_2 \D_y  F^5
   }  \nn\\ &&   \hspace{0.2in} +
   \Linvz \D_x w (\D_{yy} - 1)\D_y  F 
   \ + \ 2 w^4\Linvz \D_x w \D_y  F^5\ \biggm ]
\end{eqnarray*}
and in one-dimension $X_1 \equiv  X_1^{ij}$.
We denote by $S_k$ the various terms in $S(y)$ when $\tilde U_3$
is explicitly inserted into it, and by $I_k$ their corresponding 
contributions to Eq.~(\ref{eq:zeta1d1}).

Using Eqs.~(\ref{eq:w_delta_2}), (\ref{eq:U1_delta}) and~(\ref{eq:U0_delta}),
the first term in Eq.~(\ref{eq:S_delta}) is 
$$ 
 S_1 \equiv 10\avg{ w ,  U_0^3U_1^2}  = 10\avg{  U_1^2 }  + O(\delta^3)~.
$$
Therefore, its contribution to the slope is 10 times the first term in
Eq.~(\ref{eq:zeta1d1}), i.e., 
\begin{equation}
  \label{eq:I1}
  I_1  \st{zeta1-U1-term} 10\sqrt{3}\ \pi \delta^2
  \Av \left[  (-\D_{xxx})^{-1} V_1 \right]^2 \,dx+ O(\delta^3)~.
\end{equation}

The first term arising from substituting $\tilde U_3$ into $S(y)$ is
\begin{eqnarray*}
  S_2 &\equiv& -2\avg{  \D_x w , \D_y  2\Linvz   \D_x\Linvz X_1 \D_{yyyy}F } 
  \\* &\stt{w_delta_2}{X1_delta}&
  2 \avg{ -\delta\Linvz \D_x V_1 ,  2\Linvz   \D_x \Linvz
    \left(-4\delta  \D_{xx} \Linvzs V_1\right) 
  } \D_{yyyy}F  + O(\delta^3)
  \\* &\st{F_delta}&
  -16 \delta^2  \Av \left (  \D_{xx}  L_0^{-5/2} V_1 \right)^2 \,dx \, \D_{yyyy}R + O(\delta^3)~,
\end{eqnarray*}
where in the last step we used the skew self-adjointness of $\D_x$
and the self-adjointness and positivity of $\Linvz$.
Simplifying the operator in the above integral we 
obtain\footnote{Note that the Fourier representations of 
$\D_x \Linvzs$ and $\D_{xx}  L_0^{-5/2}$ are the same only in $d=1$.}
$$
 S_2 =  -16 \delta^2  \Av \left[  (-\D_{xxx})^{-1} V_1 \right]^2  \,dx \,
  \D_{yyyy}R  + O(\delta^3)~.
$$
Substituting $S_2$ into Eq.~(\ref{eq:zeta1d1}) gives
$$
  I_2  = -16 \delta^2 \Av \left[  (-\D_{xxx})^{-1} V_1 \right]^2  \,dx \,
  \int_\R \D_{yyyy}R \left. \D_\Omega R(y;\Omega) \right|_{\Omega=-1} \,dy + O(\delta^3)~.
$$
The following explicit integral can be obtained from Eq.~(\ref{eq:R_1d}) 
$$
  \int_\R \D_{yyyy}R \left. \D_\Omega R(y;\Omega) \right|_{\Omega=-1} \,dy 
  = -\dfrac{11\sqrt{3}\ \pi}{16}~.
$$
Using this gives
\begin{equation}
  \label{eq:I2}
  I_2  =  11\sqrt{3}\ \pi \delta^2 \Av \left[  (-\D_{xxx})^{-1} V_1 \right]^2  \,dx
  \ +  \ O(\delta^3)~.
\end{equation}
Similar calculations can be carried out for $I_3, I_4$ and $I_5$ using 
the explicit integral
$$
\int_\R \D_{yy}R^5 \left. \D_\Omega R(y;\Omega) \right|_{\Omega=-1} \,dy 
  = \dfrac{13\sqrt{3}\ \pi}{16}~.
$$
Thus, to $O(\delta^3)$, we get 
$$   I_3  =  13\sqrt{3}\ \pi \delta^2 \Av \left[  (-\D_{xxx})^{-1} V_1   \right]^2  \,dx~
$$
and $I_5 = - I_4 = \frac{1}{4} I_3$.
Summing the contributions from $I_1 \dots I_5$ gives
$$
\zetas = 34\sqrt{3}\ \pi \delta^2 
\Av \left[  (-\D_{xxx})^{-1} V_1 \right]^2  \,dx  \ +  \ O(\delta^3)>0~.
$$
This concludes the proof of Corollary \ref{cor:edgepower-crit}.

\bibliographystyle{siam}

\begin{thebibliography}{10}

\bibitem{AISP:06}
{\sc M.J. Ablowitz, B.~Ilan, E.~Schonbrun, and R.~Piestun}, {\em Solitons in
  two-dimensional lattices possessing defects, dislocations, and quasicrystal
  structures}, Phys. Rev. E, 74 (2006), p.~035601(R).

\bibitem{Ablowitz-Segur}
{\sc M.~Ablowitz and H.~Segur}, {\em Solitons and the {I}nverse {S}cattering
  {T}ransform}, vol.~4 of Studies in Applied and Numerical Mathematics,
  Cambridge, 2009.

\bibitem{Ablowitz-Musslimani:05}
{\sc M.~J. Ablowitz and Z.~H. Musslimani}, {\em Spectral renormalization method
  for computing self-localized solutions to nonlinear systems}, Opt. Lett., 30
  (2005), pp.~2140--2142.

\bibitem{Alama-Li:92}
{\sc S.~Alama and Y.Y. Li}, {\em Existence of solutions for semilinear elliptic
  equations with indefinite linear part}, Jour. of Diff. Eq., 96 (1992),
  pp.~89--115.

\bibitem{AKP:07}
{\sc G.~L. Alfimov, V.~V. Konotop, and P.~Pacciani}, {\em Stationary localized
  modes of the quintic nonlinear {S}chr\"odinger equation with a periodic
  potential}, Phys. Rev. A, 75 (2007), p.~023624.

\bibitem{Allaire-05}
{\sc G.~Allaire and A.~Pianitski}, {\em Homogenization of the {\SH}~equation
  and effective mass theorems}, Comm. Math. Phys.., 258 (2005), pp.~1--22.

\bibitem{BKS:02}
{\sc B.~B. Baizakov, V.~V. Konotop, and M.~Salerno}, {\em Regular spatial
  sturctures in arrays of {B}ose-{E}instein condensates induced by modulational
  instability}, J. Phys. B., 35 (2002), pp.~5105--5119.

\bibitem{BMS:04}
{\sc B.~B. Baizakov, B.~A. Malomed, and M.~Salerno}, {\em Multidimensional
  solitons in a low-dimensional periodic potential}, Phys. Rev. A, 70 (2004),
  p.~053613.

\bibitem{BCT:09}
{\sc V.~Banica, R.~Carles, and T.~Duyckaerts}, {\em Minimal blow-up solutions
  to the mass-critical inhomogeneous {NLS} equation}, arXiv:0904.1317v1,
  (2009).

\bibitem{Bourgain-NLS}
{\sc J.~Bourgain}, {\em Global {S}olutions of {N}onlinear {S}chr\"odinger
  {E}quations}, AMS, 1999.

\bibitem{Boyd}
{\sc R.W. Boyd}, {\em Nonlinear {O}ptics, 3rd Edition}, Academic Press, 2008.

\bibitem{Brazhnyi-04}
{\sc V.~A. Brazhnyi and V.~V. Konotop}, {\em Theory of nonlinear matter waves
  in optical lattices}, Mod. Phys. Lett. B, 18 (2004), pp.~627--651.

\bibitem{Brazhnyi-06}
{\sc V.~A. Brazhnyi, V.~V. Konotop, and M.~P\'erez-Carc\'ia}, {\em Driving
  defect modes of {B}ose-{E}instein condenstates in optical lattices}, Phys.
  Rev. Lett., 96 (2006), p.~060403.

\bibitem{Busch-etal:06}
{\sc K.~Busch, G.~Schneider, L.~Tkeshelashvili, and H.~Uecker}, {\em
  Justification of the nonlinear {S}chr\"odinger equation in spatially periodic
  media}, Z. Angew. Math. Phys., 57 (2006), pp.~905--936.

\bibitem{C-GFK:08}
{\sc R~Carretero-Gonzalez, D~J Frantzeskakis, and P~G Kevrekides}, {\em
  Nonlinear waves in {B}ose-{E}instein condensates: physical relevance and
  mathematical techniques}, Nonlinearity, 21 (2008), pp.~R139--R202.

\bibitem{Cazenave}
{\sc T.~Cazenave}, {\em Semilinear {S}chr\"odinger {E}quations}, vol.~10 of
  Courant Institute Lecture Notes, AMS, 2003.

\bibitem{Cheng-05}
{\sc S-C. Cheng}, {\em An effective mass theory for bright gap solitons in
  optical lattices}, Chinese J. Phys., 43 (2005), pp.~835--846.

\bibitem{Townes:64}
{\sc R.~Chiao, Garmire, and C.~Townes}, {\em Self-focusing of optical beams},
  Phys. Rev. Lett., 13 (1964), pp.~479--482.

\bibitem{CP:03}
{\sc A.~Comech and D.~E. Pelinovsky}, {\em Purely nonlinear instability purely
  nonlinear instability of standing waves with minimal energy}, Comm. Pure
  Appl. Math., 56 (2003), pp.~1565--1607.

\bibitem{CH1}
{\sc R.~Courant and D.~Hilbert}, {\em Methods of {M}athematical {P}hysics},
  Wiley Interscience, 1953.

\bibitem{DPS:09b}
{\sc T.~Dohnal, D.~Pelinovsky, and G.~Schneider}, {\em Coupled-mode equations
  and gap solitons in a two-dimensional nonlinear elliptic problem with a
  separable potential}, J. Nonlin. Sci., 19 (2009), pp.~95--131.

\bibitem{DU:09}
{\sc T.~Dohnal and H.~Uecker}, {\em Coupled-mode equations and gap solitons for
  he 2d {G}ross-{P}itaevskii equation with a non-separable periodic potential},
  Physica D, 238 (2009), pp.~860--879.

\bibitem{Eastham:73}
{\sc M.S. Eastham}, {\em The {S}pectral {T}heory of {P}eriodic {D}ifferential
  {E}quations}, Scottish Academic Press, Edinburgh, 1973.

\bibitem{Segev-etal:03}
{\sc N.~K. Efremidis, J.~Hudock, D.N. Christodoulides, J.W. Fleischer,
  O.~Cohen, and M.~Segev}, {\em Two-dimensional optical lattice solitons},
  Phys. Rev. Lett., 91 (2003), p.~213906.

\bibitem{ESY:07}
{\sc L.~Erd\"os, B.~Schlein, and H.T. Yau}, {\em Derivation of the cubic
  non-linear {S}chr\"odinger equation from quantum dynamics of many-body
  systems}, Invent. Math., 59 (2007), pp.~1659--1741.

\bibitem{FKM:97}
{\sc S.~Flach, K.~Kladko, and R.S. MacKay}, {\em Energy thresholds for discrete
  breathers in one-, two-, and three dimensional lattices}, Phys. Rev. Lett.,
  78 (1997), pp.~1207--1210.

\bibitem{Grillakis:90}
{\sc M.G. Grillakis}, {\em Analysis of the linearization around a critical
  point of an infinite-dimensional hamiltonian system}, Comm. Pure Appl. Math.,
  43 (1990), pp.~299--333.

\bibitem{GSS:87}
{\sc M.G. Grillakis, J.~Shatah, and W.A. Strauss}, {\em Stability theory of
  solitary waves in the presence of symmetry. {I}.}, J. Func. Anal., 74 (1987),
  pp.~160--197.

\bibitem{Heinz-Stuart:92b}
{\sc H.P. Heinz and C.A. Stuart}, {\em Existence and bifurcation of solutions
  for nonlinear perturbations of the periodic {S}chr\"odinger equation.}, J.
  Diff. Eqns, 100 (1992), p.~341.

\bibitem{Heinz-Stuart:92a}
\leavevmode\vrule height 2pt depth -1.6pt width 23pt, {\em Solvability of
  nonlinear equations in spectral gaps of the linearization}, Nonlin. Anal.
  Theory, Methods Appl., 19 (1992), p.~145.

\bibitem{Hor4}
{\sc L.~H\"ormander}, {\em The {A}nalysis of {L}inear {P}artial {D}ifferential
  {O}perators}, vol.~3, Springer-Verlag, 1985.

\bibitem{Jones:88}
{\sc C.K.R.T. Jones}, {\em An instability mechanism for radially symmetric
  standing waves of a nonlinear {S}chr\"odinger equation}, J. Diff. Eqns, 71
  (1988), pp.~34--62.

\bibitem{KVZ:07}
{\sc R.~Killip, M.~Visan, and X.~Zhang}, {\em The mass critical nonlinear
  {S}chr\"odinger equation with radial initial data in dimensions three and
  higher}, arXiv:0708.0849v1,  (2007).

\bibitem{Kirsch-Simon:87}
{\sc W.~Kirsch and B.~Simon}, {\em Comparison theorems for the gap of {\SH}
  operators}, J. Func. Anal., 75 (1987), pp.~396--410.

\bibitem{Kittel}
{\sc C.~Kittel}, {\em Introduction to {S}olid {S}tate {P}hysics, 7th Edition},
  Wiley, 1995.

\bibitem{Kuchment-01}
{\sc P.~Kuchment}, {\em The {M}athematics of {P}hotonic {C}rystals, in
  "{M}athematical {M}odeling in {O}ptical {S}cience"}, Frontiers in Applied
  Mathematics, 22 (2001).

\bibitem{Kupper-Stuart:90}
{\sc T.~Kupper and C.A. Stuart}, {\em Gap-bifurcation for nonlinear
  perturbations of {H}ill's equation}, J. Reine Angew. Math., 409 (1990).

\bibitem{Kupper-Stuart:92}
{\sc T.~K\"upper and C.A. Stuart}, {\em Necessary and sufficient conditions for
  gap-bifurcation}, Nonlin. Anal. Theory, Methods Appl., 18 (1992), p.~893.

\bibitem{Kwong-89}
{\sc M.~K. Kwong}, {\em Uniqueness of positive solutions of ${\Delta} u-u+u\sp
  p=0$ in ${R}^n$}, Arch. Rat. Mech. Anal., 105 (1989), pp.~243--266.

\bibitem{Newell-Maloney}
{\sc J.V. Maloney and A.C. Newell}, {\em Nonlinear {O}ptics}, Westview Press,
  2003.

\bibitem{MRS:preprint}
{\sc J.~Marzuola, S.~Raynor, and G.~Simpson}, {\em A system of {ODE}s for a
  perturbation of a minimal mass soliton}, to appear in J. Non. Sci.

\bibitem{Moll-Fibich-Gaeta:03}
{\sc K.~Moll, G.~Fibich, and A.~Gaeta}, {\em Self-similar optical wave
  collapse: {O}bservation of the {T}ownes profile}, Phys. Rev. Lett., 90
  (2003), pp.~203902--1.

\bibitem{Pankov:05}
{\sc A.~Pankov}, {\em Periodic nonlinear {S}chr\"odinger equation with
  application to photonic crystals}, Milan J. Math., 73 (2005), pp.~259--287.

\bibitem{Pel-Sch:07}
{\sc D.~E. Pelinovsky and G.~Schneider}, {\em Justification of the the
  coupled-mode approximation for a nonlinear elliptic problem with a periodic
  potential}, Applicable Analysis, 86 (2007), pp.~1017--1036.

\bibitem{Pelinovsky-04}
{\sc D.~E. Pelinovsky, A.~A. Sukhorukov, and Y.~S. Kivshar}, {\em Bifurcations
  and stability of gap solitons in periodic potentials}, Phys. Rev. E, 70
  (2004), p.~036618.

\bibitem{PS:03}
{\sc L~P Pitaevskii and S~Stringari}, {\em Bose {E}instein {C}ondensation},
  Oxford University Press, 2003.

\bibitem{RS4}
{\sc M.~Reed and B.~Simon}, {\em Modern {M}ethods of {M}athematical {P}hysics,
  {I}{V}}, Academic Press, 1978.

\bibitem{Rose-88}
{\sc H.~A. Rose and M.~I. Weinstein}, {\em On the bound states of the nonlinear
  {S}chr\"odinger equation with a linear potential}, Physica D, 30 (1988),
  pp.~207--218.

\bibitem{SM:04-I}
{\sc H.~Sakaguchi and B.~A. Malomed}, {\em Dynamics of positive- and
  negative-mass solitons in optical lattices and inverted traps}, J. Phys. B:
  At. Mol. Opt. Phys., 37 (2004), pp.~1443--1459.

\bibitem{SM:04-II}
\leavevmode\vrule height 2pt depth -1.6pt width 23pt, {\em Two-dimensional
  loosely and tightly bound solitons in optical lattices and inverted traps},
  J. Phys. B: At. Mol. Opt. Phys., 37 (2004), pp.~2225--2239.

\bibitem{Shi-Wang-Chen-Yang:08}
{\sc Z.~Shi, J.~Wang, Z.~Chen, and J.~Yang}, {\em Linear instability of
  two-dimensional low-amplitude gap solitons near band edges in periodic
  media}, Phys. Rev. A, 78 (2008), p.~063812.

\bibitem{ShiYang-07}
{\sc Z.~Shi and J.~Yang}, {\em Solitary waves bifurcated from {B}loch-band
  edges in two-dimensional periodic media}, Phys. Rev. E, 75 (2007), p.~056602.

\bibitem{Sivan-NL-08}
{\sc Y.~Sivan, G.~Fibich, N.~K. Efremidis, and S.~Bar-Ad}, {\em Analytic theory
  of narrow lattice solitons}, Nonlinearity, 21 (2008), pp.~509--536.

\bibitem{SFIW-08}
{\sc Y.~Sivan, G.~Fibich, B.~Ilan, and M.~I. Weinstein}, {\em Qualitative and
  quantitative analysis of stability and instability dynamics of positive
  lattice solitons}, Phys. Rev. E, 78 (2008), p.~046602.

\bibitem{Sparber-06}
{\sc C.~Sparber}, {\em Effective mass theorems for {\NLS}~equations}, SIAM J.
  Appl. Math., 66 (2006), pp.~820--842.

\bibitem{Staliunas-07}
{\sc K.~Staliunas, R.~Herrero, and G.~J. de~Valc\'arcel}, {\em Arresting
  soliton collapse in two-dimensional {\NLS}~systems via spatiotemporal
  modulation of the external potential}, Phys. Rev. E, 95 (2007), p.~011604.

\bibitem{Steel-98}
{\sc M.~J. Steel and W.~Zhang}, {\em Bloch function description of a
  {B}ose-{E}instein condensate in a finite optical lattice},
  arXiv:cond-mat/9810284v1,  (1998).

\bibitem{Stuart}
{\sc C.~Stuart}, {\em Bifurcation from the essential spectrum}, in Topological
  {N}onlinear {A}nalysis, A.~Matzeu and A.~Vignoli, eds., Birkhauser, 1997.

\bibitem{SulemSulem}
{\sc C.~Sulem and P.-L. Sulem}, {\em The {N}onlinear {S}chr\"odinger
  {E}quation: {S}elf-{F}ocusing and {W}ave {C}ollapse}, vol.~139 of Applied
  Mathematical Sciences, Springer, 1999.

\bibitem{Tao-NLS}
{\sc T.~Tao}, {\em Nonlinear {D}ispersive {E}quations: {L}ocal and {G}lobal
  {A}nalysis}, AMS, 2006.

\bibitem{Weinstein:83}
{\sc M.I. Weinstein}, {\em Nonlinear {S}chr\"odinger quations and sharp
  interpolation estimates}, Comm. Math. Phys., 87 (1983), pp.~567--576.

\bibitem{Weinstein-85}
\leavevmode\vrule height 2pt depth -1.6pt width 23pt, {\em Modulational
  stability of ground states of nonlinear {S}chr\"odinger equations}, SIAM J.
  Math. Anal., 16 (1985), pp.~472--490.

\bibitem{Weinstein-86}
\leavevmode\vrule height 2pt depth -1.6pt width 23pt, {\em {L}yapunov stability
  of ground states of nonlinear dispersive evolution equations}, Comm. Pure
  Appl. Math., 39 (1986), pp.~51--68.

\bibitem{Weinstein:89}
\leavevmode\vrule height 2pt depth -1.6pt width 23pt, {\em The {N}onlinear
  {S}chr\"odinger {E}quation: {S}ingularity {F}ormation, {S}tabliity and
  {D}ispersion}, Contemporary Mathematics, 99 (1989), pp.~213--232.

\bibitem{Weinstein:99}
\leavevmode\vrule height 2pt depth -1.6pt width 23pt, {\em Excitation
  thresholds for nonlinear localized modes on lattices}, Nonlinearity, 12
  (1999), pp.~673--691.

\bibitem{Whitham}
{\sc G.B. Whitham}, {\em Linear and {N}onlinear {W}aves}, Wiley Interscience,
  1974.

\bibitem{Yang-Musslimani-03}
{\sc J.~Yang and Z.~H. Musslimani}, {\em Fundamental and vortex solitons in a
  two-dimensional optical lattice}, Opt. Lett., 28 (2003), pp.~2094--2096.

\end{thebibliography}

\end{document}